\newtheorem{lem}{Lemma}
\newtheorem{thm}{Theorem}
\newtheorem{defn}{Definition}
\newtheorem{coro}{Corollary}
\newtheorem{propty}{Property}
\newtheorem{rem}{Remark}
\newcommand{\E}[1]{\mathbb{E}\left[{#1}\right]}
\newcommand{\posfunc}[1]{\left( {#1}\right) ^{+}}
\newcommand{\HyperExp}{\textit{HyperExp}}
\newcommand{\SExp}{\textit{ShiftedExp}}
\newcommand{\ShiftedExp}{\textit{ShiftedExp}}
\newcommand{\Exp}{\textit{Exp}}
\newcommand{\Pareto}{\textit{Pareto}}
\DeclareMathOperator*{\argmin}{arg\,min}
\crefname{equation}{}{}
\Crefname{equation}{}{}
\crefname{thm}{theorem}{theorems}
\Crefname{thm}{Theorem}{Theorems}
\crefname{clm}{claim}{claims}
\Crefname{clm}{Claim}{Claims}
\Crefname{coro}{Corollary}{Corollaries}
\Crefname{lem}{Lemma}{Lemmas}
\Crefname{sec}{Section}{Sections}
\crefname{app}{appendix}{appendices}
\Crefname{app}{Appendix}{Appendices}
\crefname{prop}{proposition}{propositions}
\Crefname{prop}{Proposition}{Propositions}
\Crefname{propty}{Property}{Properties}
\crefname{figure}{fig.}{figures}
\Crefname{figure}{Fig.}{Figures}
\crefname{defn}{definition}{definitions}
\Crefname{defn}{Definition}{Definitions}
\crefname{fact}{fact}{facts}
\Crefname{fact}{Fact}{Facts}
\crefname{appendix}{appendix}{appendices}
\Crefname{appendix}{Appendix}{Appendices}
\crefname{algo}{algorithm}{algorithms}
\Crefname{algo}{Algorithm}{Algorithms}
\crefname{algorithm}{algorithm}{algorithms}
\Crefname{algorithm}{Algorithm}{Algorithms}
\crefname{conj}{conjecture}{conjectures}
\Crefname{conj}{Conjecture}{Conjectures}
\crefname{obs}{observation}{observations}
\Crefname{obs}{Observation}{Observations}
\begin{document}

\newcommand{\totalCaches}{m}
\newcommand{\nFork}{n}
\newcommand{\nPartialFork}{r}
\newcommand{\kJoin}{k}
\newcommand{\xm}{x_m}

\acmVolume{2}
\acmNumber{3}
\acmArticle{1}
\articleSeq{1}
\acmYear{2016}
\acmMonth{1}

\markboth{G. Joshi, E. Soljanin and G. Wornell}{Efficient Redundancy Techniques for Latency Reduction in Cloud Systems}

\title{Efficient Redundancy Techniques\\for Latency Reduction in Cloud Systems}
\author{GAURI JOSHI, Carnegie Mellon University \\
EMINA SOLJANIN, Rutgers University \\
GREGORY WORNELL, Massachusetts Institute of Technology}

\begin{abstract}
In cloud computing systems, assigning a task to multiple servers and waiting for the earliest copy to finish is an effective method to combat the variability in response time of individual servers, and reduce latency. But adding redundancy may result in higher cost of computing resources, as well as an increase in queueing delay due to higher traffic load. This work helps understand when and how redundancy gives a cost-efficient reduction in latency. For a general task service time distribution, we compare different redundancy strategies in terms of the number of redundant tasks, and time when they are issued and canceled. We get the insight that the \emph{log-concavity} of the task service time creates a dichotomy of when adding redundancy helps. If the service time distribution is log-convex (i.e. log of the tail probability is convex) then adding maximum redundancy reduces both latency and cost. And if it is log-concave (i.e. log of the tail probability is concave), then less redundancy, and early cancellation of redundant tasks is more effective. Using these insights, we design a general redundancy strategy that achieves a good latency-cost trade-off for an arbitrary service time distribution. This work also generalizes and extends some results in the analysis of fork-join queues.
\end{abstract}

%


\begin{bottomstuff}
This work was supported in part by NSF under Grant No. CCF-1319828, AFOSR under Grant No. FA9550-11-1-0183, and a Schlumberger Faculty for the Future Fellowship. This work was presented in part at the Allerton Conference on Communication, Control and Computing 2015, and ACM Sigmetrics Mathematical Modeling and Analysis Workshop 2015. Authors' email addresses: Gauri Joshi: gaurij@andrew.cmu.edu (this author was at MIT at the time of this work); Emina Soljanin: emina.soljanin@rutgers.edu; Gregory W. Wornell: gww@mit.edu
\end{bottomstuff}

\maketitle
\section{INTRODUCTION}
\label{sec:intro}
\subsection{Motivation}
An increasing number of applications are now hosted on the cloud. Some examples are streaming (NetFlix, YouTube), storage (Dropbox, Google Drive) and computing (Amazon EC2, Microsoft Azure) services. A major advantage of cloud computing and storage is that the large-scale sharing of resources provides scalability and flexibility. However, an adverse effect of the sharing of resources is the variability in the latency experienced by the user due to queueing, virtualization, server outages etc. The problem becomes further aggravated when the computing job has several parallel tasks, because the slowest task becomes the bottleneck in job completion. Thus, ensuring fast and seamless service is a challenging problem in cloud systems. 

One method to reduce latency that has gained significant attention in recent years is the use of redundancy. In cloud computing, replicating a task on multiple machines and waiting for the earliest copy to finish can significantly reduce the latency \cite{dean_tail_2013}. Similarly, in cloud storage systems, requests to access a content can be assigned to multiple replicas, such that it is only sufficient to download one replica. However, redundancy can result in increased use of resources such as computing time, and network bandwidth. In frameworks such as Amazon EC2 and Microsoft Azure which offer computing as a service, the computing time spent on a job is proportional to the cost of renting the machines. 
\begin{table}[t]
\centering
\caption{Organization of main latency-cost analysis results presented in the rest of the paper. We fork each job into tasks at all $n$ servers (full forking), or to some subset $r$ out of $n$ servers (partial forking). A job is complete when any $k$ of its tasks are served. \label{tbl:organization}}
\begin{tabular}{|  p{4.0 cm} | p{4.7cm} | p{4.7cm} |}
\hline
& \textbf{$k=1$ (Replicated) Case} & \textbf{General $k$} \\
\hline
\textbf{Full forking to all $n$ servers}  &  \pbox[c]{4.6cm}{\vspace{0.1cm} \Cref{sec:rep_with_queueing}  \\ Comparison of strategies with and without early task cancellation \vspace{0.1cm}} & \pbox[c]{4.6cm}{\Cref{sec:coded_with_queueing} \\ Bounds on latency and cost, and the diversity-parallelism trade-off} \\
\hline
\textbf{Partial forking to $r$ out of $n$ servers} & \pbox[c]{4.6cm}{  \vspace{0.1cm} \Cref{sec:partial_fork} \\ Effect of $r$ and the choice of servers on latency and cost \vspace{0.1cm} } & \pbox[c]{4.6cm}{\Cref{sec:heuristic_algo} \\ General redundancy strategy for cost-efficient latency reduction}\\
\hline
\end{tabular}
\end{table}

\subsection{Organization of this Work}
In this work we aim to understand the trade-off between latency and computing cost, and propose efficient strategies to add redundancy. We focus on a redundancy model called the $(n,k)$ fork-join model, where a job is forked into $n$ tasks such that completion of any $k$ tasks is sufficient to finish the job. In \Cref{sec:prob_setup} we formally define this model and its variants. \Cref{sec:prev_work_contri} summarizes related previous work and our contributions. \Cref{sec:key_concepts} gives the key preliminary concepts used in this work. 

The rest of the paper studies different variants of the $(n,k)$ fork-join model in increasing order of generality, as shown in Table~\ref{tbl:organization}. In \Cref{sec:rep_with_queueing} and \Cref{sec:partial_fork} we focus on the $k=1$ (replicated) case. \Cref{sec:rep_with_queueing} considers full replication of a job at all $n$ servers, and compares different strategies of canceling redundant tasks. In \Cref{sec:partial_fork} we consider partial replication at $r$ out of $n$ servers. 

In \Cref{sec:coded_with_queueing} and \Cref{sec:heuristic_algo}, we move to the general $k$ case, which requires a significantly different style of analysis than the $k=1$ case. In \Cref{sec:coded_with_queueing} we consider full forking to all $n$ servers, and determine bounds on latency and cost, generalizing some of the fundamental work on fork-join queues. For partial forking, we propose a general redundancy strategy in \Cref{sec:heuristic_algo}. System designers looking for a practical redundancy strategy rather than theoretical analysis may skip ahead to \Cref{sec:heuristic_algo} after the problem setup in \Cref{sec:prob_setup}. 

Finally, \Cref{sec:conclu} summarizes the results and provides future perspectives. Properties and examples of log-concavity are given in Appendix~\ref{sec:tail_properties}. Proofs of the $k=1$ and general $k$ cases are deferred to Appendix~\ref{sec:rep_queueing_proofs} and Appendix~\ref{sec:coded_queueing_proofs} respectively.

\section{SYSTEM MODEL}
\label{sec:prob_setup}
\subsection{Fork-Join Model and its Variants}
\label{subsec:sys_model}


\begin{defn}[$(n,k)$ fork-join system]
\label{defn:fork_join}
Consider a distributed system with $n$ statistically identical servers. Jobs arrive to the system at rate $\lambda$, according to a Poisson process\footnote{The Poisson assumption is required only for the exact analysis and bounds on latency (equations \eqref{eqn:E_T_rep_queueing}, \eqref{eqn:E_T_lmbda_scaling_n}, \eqref{eqn:ET_rep_early_cancel}, \eqref{eqn:E_T_group_based}, \eqref{eqn:upper_bnd_gen}, \eqref{eqn:lower_bnd_gen}, and \eqref{eqn:ET_early_cancel_bnd}). All other results on $\E{C}$, and comparison of replication strategies in heavy traffic hold for any arrival process.}. Each job is forked into $\nFork$ tasks that join first-come first-served queues at each of the $\nFork$ servers. The job is said to be complete when any $\kJoin$ tasks are served. At this instant, all remaining tasks are canceled and abandon their respective queues immediately. 
\end{defn}

After a task of the job reaches the head of its queue, the time taken to serve it can be random due to various factors such as disk seek time and sharing of computing resources between multiple processes. We model this service time by a random variable $X > 0$, with cumulative distribution function (CDF) $F_X(x)$. The tail distribution (inverse CDF) of $X$ is denoted by $\bar{F}_X(x) = \Pr(X>x)$. We use $X_{k:n}$ to denote the $k^{th}$ smallest of $n$ i.i.d.\ random variables $X_1, X_2, \dots , X_n$. 

We assume that the service time $X$ is i.i.d.\ across tasks and servers. Thus, if a task is replicated at two different servers, the service times of the replicas are independent and identically distributed. Dependence of service time on the task itself can be modeled by adding a constant $\Delta$ to $X$. More generally, $\Delta$ may be a random variable. Although we do not consider this case here, the results in this paper (particularly \Cref{sec:rep_with_queueing}) can be extended to consider correlated service times. 

Fig.~\ref{fig:fork_join_queue} illustrates the $(3,2)$ fork-join system. The job exits the system when any $2$ out of $3$ tasks are complete. The $k=1$ case corresponds to a replicated system where a job is sent to all $n$ servers and we wait for one of the replicas to be served. The $(n,k)$ fork-join system with $k > 1$ can serve as a useful model to study content access latency from an $(n,k)$ erasure coded distributed storage system. Approximate computing applications that require only a fraction of tasks of a job to be complete can also be modeled using the $(n,k)$ fork-join system. 

\begin{figure}[t]
    \begin{minipage}[t]{0.48\linewidth}
    \centering
    \includegraphics[width=3.5in]{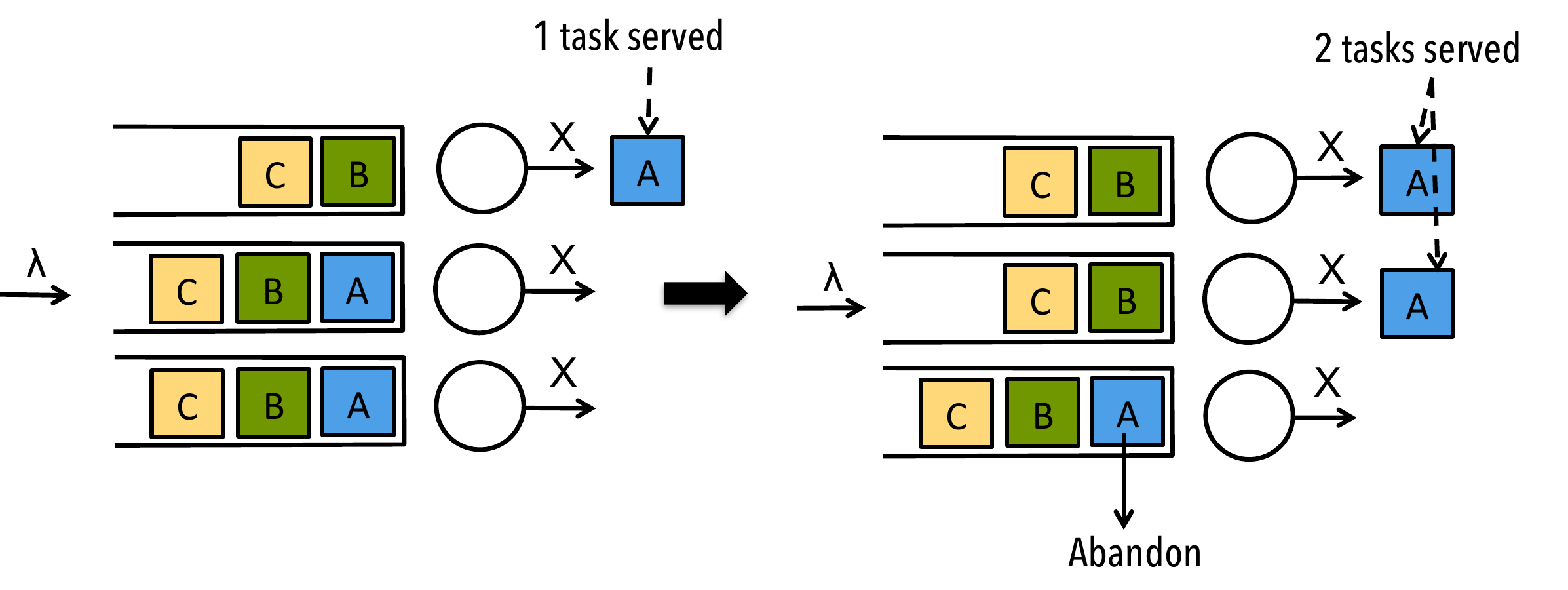}
    \caption{The $(3,2)$ fork-join system. When any $2$ out of $3$ tasks of a job are served (as seen for Job A on the right), the third task abandons its queue and the job exits the system.\label{fig:fork_join_queue}}
    \end{minipage}
    \hspace{0.04\linewidth}
    \begin{minipage}[t]{0.48\linewidth}
	\centering
	\includegraphics[width=3.3in]{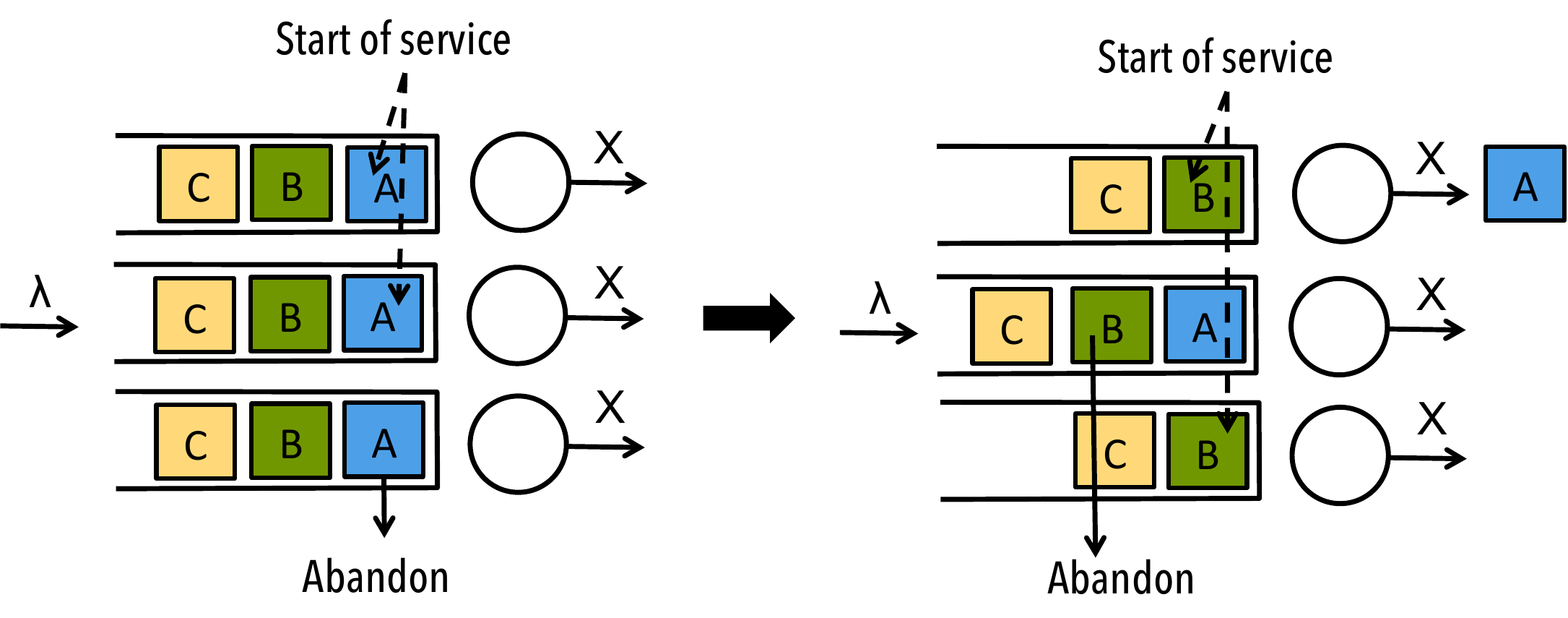}
	\caption{The $(3,2)$ fork-early-cancel system. When any $2$ out of $3$ tasks of a job are in service, the third task abandons (seen for Job A on the left, and Job B on the right). \label{fig:fork_early_cancel}}
    \end{minipage}
\end{figure}

We consider the following two variants of this system, which could save the amount of redundant time spent by the servers of each job.
\begin{enumerate}
\item \emph{$(n,k)$ fork-early-cancel system}: Instead of waiting for $k$ tasks to finish, the redundant tasks are canceled when any $k$ tasks reach the heads of their queues and start service. If more than $k$ tasks start service simultaneously, we retain any $k$ chosen uniformly at random. Fig.~\ref{fig:fork_early_cancel} illustrates the $(3,2)$ fork-early-cancel system. In \Cref{sec:rep_with_queueing} we compare the $(n,k)$ systems with and without early cancellation. 

\item  \emph{$(n, r, k)$ partial fork-join system:} Each incoming job is forked into $r > k$ out of the $n$ servers. When any $k$ tasks finish service, the redundant tasks are canceled immediately and the job exits the system. The $r$ servers can be chosen according to different scheduling policies such as random, round-robin, least-work-left (see \cite[Chapter~24]{mor_book} for definitions) etc.  
In \Cref{sec:partial_fork} we develop insights into the best choice of $r$, and the scheduling policy.
\end{enumerate} 

Other variants of the fork-join system include a combination of partial forking and early cancellation, or delaying invocation of some of the redundant tasks. Although not studied in detail here, our analysis techniques can be extended to these variants. In \Cref{sec:heuristic_algo} we propose a general redundancy strategy that is a combination of partial forking and early cancellation. 

\subsection{Latency and Cost Metrics}
\label{subsec:perf_metrics}

We now define the metrics of the latency and resource cost whose trade-off is analyzed in the rest of the paper.

\begin{defn}[Latency]
The latency $T$ is defined as the time from the arrival of a job until it is served. In other words, it is the response time experienced by the job.
\end{defn}

In this paper we focus on analyzing the expected latency $\E{T}$. Although $\E{T}$ is a good indicator of the average behavior, system designers are often interested in the tail $\Pr(T>t)$ of the latency. For many queueing problems, determining the distribution of response time $T$ requires the assumption of exponential service time. In order to consider arbitrary, non-exponential service time distribution $F_X$, we settle for analyzing $\E{T}$ here.

\begin{defn}[Computing Cost]
The computing cost $C$ is the total time spent by the servers serving a job, not including the time spent in the queue. 
\end{defn}

In computing-as-a-service frameworks, the computing cost is proportional to money spent on renting machines to run a job on the cloud\footnote{Although we focus on this cost metric, we note that redundancy also results in a network cost of making Remote-Procedure Calls (RPCs) made to assign tasks of a job, and cancel redundant tasks. It is proportional to the number of servers each job is forked to, which is $n$ for the $(n,k)$ fork-join model described above. In the context of distributed storage, redundancy also results in increased use of storage space, proportional to $\nFork/\kJoin$. The trade-off between delay and storage is studied in \cite{gauri_yanpei_emina_allerton,gauri_yanpei_emina_jsac}.}.

\section{PREVIOUS WORK AND MAIN CONTRIBUTIONS}
\label{sec:prev_work_contri}

\subsection{Related Previous Work}

\textit{\textbf{Systems Work}}:
The use of redundancy to reduce latency is not new. One of the earliest instances is the use of multiple routing paths \cite{maxemchuk2} to send packets in networks; see \cite[Chapter~7]{kabatiansky_krouk_semenov} for a detailed survey of other related work. A similar idea has been studied \cite{vulimiri_low_2013} in the context of DNS queries. In large-scale cloud computing frameworks, several recent works in systems \cite{map_reduce,ananthanarayanan_effective_2013,zaharia_sparrow} explore straggler mitigation techniques where redundant replicas of straggling tasks are launched to reduce latency. Although the use of redundancy has been explored in systems literature, there is little work on the rigorous analysis of how it affects latency, and in particular the cost of resources. Next we review some of that work. 

\textit{\textbf{Exponential Service Time}}:
The $(n,k)$ fork-join system was first proposed in \cite{gauri_yanpei_emina_allerton,gauri_yanpei_emina_jsac} to analyze content download latency from erasure coded distributed storage. These works consider that a content file coded into $n$ chunks can be recovered by accessing any $k$ out of the $n$ chunks, where the service time $X$ of each chunk is exponential. Even with the exponential assumption analyzing the $(n,k)$ fork-join system is a hard problem. It is a generalization of the $(n,n)$ fork-join system, which was actively studied in queueing literature \cite{flatto1984two,nelson_tantawi,varki_merc_chen} around two decades ago. 

Recently, an analysis of latency with heterogeneous job classes for the replicated ($k=1$) case with distributed queues is presented in \cite{gardner_sigmetrics_2015}. Other related works include \cite{mds_queue,tandon_paper,vaneet_paper,swanand_isit_2015}. A common thread in all these works is that they also assume exponential service time. 

\textit{\textbf{General Service Time}}:
Few practical systems have exponentially distributed service time. For example, studies of download time traces from Amazon S3 \cite{docomo_1,docomo_2} indicate that the service time is not exponential in practice, but instead a shifted exponential. For service time distributions that are `new-worse-than-used' \cite{cao_nbu_1991}, it is shown in \cite{koole_righter_2008} that it is optimal to replicate a job at all servers in the system. The choice of scheduling policy for new-worse-than-used (NWU) and new-better-than-used (NBU) distributions is studied in \cite{righter_job_rep_2009,shah_when_2013,sun_shroff}. The NBU and NWU notions are closely related to the log-concavity of service time studied in this work. 

\textit{\textbf{The Cost of Redundancy}}:
If we assume exponential service time then redundancy does not cause any increase in cost of server time. But since this is not true in practice, it is important to determine the cost of using redundancy. Simulation results with non-zero fixed cost of removal of redundant requests are presented in \cite{shah_when_2013}. The expected computing cost $\E{C}$ spent per job was previously considered in \cite{wang_efficient_2014,sigmetrics_arxiv_2015} for a distributed system without considering queueing of requests. In \cite{gauri_mama_2015} we presented an analysis of the latency and cost of the $(n,k)$ fork-join with and without early cancellation of redundant tasks.

\begin{table}[t]
\centering
\caption{Latency-optimal and cost-optimal redundancy strategies for the $k=1$ (replicated) case. `Canceling redundancy early' means that instead of waiting for any $1$ task to finish, we cancel redundant tasks as soon as any $1$ task begins service. \label{tbl:result_summary}}
\begin{tabular}{| p{3.0cm} | p{3.8cm} | p{2.2 cm} | p{2.5 cm} | p{2.5 cm} | }
\hline
 & \multicolumn{2}{c|} { \textbf{Log-concave service time} } & \multicolumn{2}{c|} { \textbf{Log-convex service time} }\\
\hline
 & \textbf{ Latency-optimal} & \textbf{Cost-optimal} & \textbf{Latency-optimal} & \textbf{Cost-optimal} \\
\hline
\vspace{0.05cm} \textbf{Cancel redundancy early or keep it?} \vspace{0.05cm}  
& \vspace{0.05cm} \pbox[c]{3.75cm}{Low load: Keep Redundancy, \\ High load: Cancel early} \vspace{0.05cm} & \vspace{0.05cm} Cancel early & \vspace{0.05cm} Keep Redundancy & \vspace{0.05cm} Keep Redundancy\\
\hline
\vspace{0.05cm} \textbf{Partial forking to $r$ out of $n$ servers} 
&  \vspace{0.05cm} \pbox[c]{3.7cm}{Low load: $r=n$ (fork to all), \\ High load: $r = 1$ (fork to one)} \vspace{0.1cm} & \vspace{0.05cm} $r=1$ & \vspace{0.05cm}  $r=n$ (fork to all)& \vspace{0.05cm} $r=n$ (fork to all)\\
\hline
\end{tabular}
\vspace{0.5cm}
\end{table}

\subsection{Main Contributions}
The main differences between this and previous works are: 1) we consider a general service time distribution, instead of exponential service time and, 2) we analyze the impact of redundancy on the latency, as well as the computing cost (total server time spent per job). Incidentally, our computing cost metric $\E{C}$ also serves as a powerful tool to compare different redundancy strategies under high load. 

The latency-cost analysis of the fork-join system and its variants gives us the insight that the log-concavity (and respectively, the log-convexity) of $\bar{F}_X$, the tail distribution of service time, is a key factor in choosing the redundancy strategy. Here are some examples, which are also summarized in Table~\ref{tbl:result_summary}.

\begin{itemize}
\item By comparing the $(n,1)$ systems (fork to $n$, wait for any $1$) with and without early cancellation, we can show that early cancellation of redundancy can reduce both latency and cost for log-concave $\bar{F}_X$, but it is not effective for log-convex $\bar{F}_X$. 
\item For the $(n,r,1)$ partial-fork-join system (fork to $r$ out of $n$, wait for any $1$), we can show that forking to more servers (larger $r$) is both latency and cost optimal for log-convex $\bar{F}_X$. But for log-concave $\bar{F}_X$, larger $r$ reduces latency only in the low traffic regime, and always increases the computing cost. 
\end{itemize}

Using these insights we also develop a general redundancy strategy to decide how many servers to fork to, and when to cancel the redundant tasks, for an arbitrary service time that may be neither log-concave nor log-convex. 

\section{PRELIMINARY CONCEPTS}
\label{sec:key_concepts}
We now present some preliminary concepts that are vital to understanding the results presented in the rest of the paper.

\subsection{Using $\E{C}$ to Compare Systems}
\label{subsec:using_E_C}

Since the cost metric $\E{C}$ is the expected time spent by servers on each job, higher $\E{C}$ implies higher expected waiting time for subsequent jobs. Thus, $\E{C}$ can be used to compare the latency with different redundancy policies in the heavy traffic regime. In particular, we compare policies that are symmetric across the servers, defined formally as follows.
\begin{defn}[Symmetric Policy]
\label{defn:symmetric_forking}
With a symmetric scheduling policy, the tasks of each job are forked to one or more servers such that the expected task arrival rate is equal across all the servers.
\end{defn}

Most commonly used policies: random, round-robin, join the shortest queue (JSQ) etc.\ are symmetric across the $n$ servers. In \Cref{lem:capacity_in_terms_of_EC}, we express the stability region of the system in terms of $\E{C}$. 

\begin{lem}[Stability Region in terms of $\E{C}$]
\label{lem:capacity_in_terms_of_EC}
A system of $n$ servers with a symmetric redundancy policy is stable, that is, the mean response time $E[T] < \infty$, only if the arrival rate $\lambda$ (with any arrival process) satisfies
\begin{align}
\lambda < \frac{n}{\E{C}} \label{eqn:capacity_in_terms_of_EC}
\end{align}
Thus, the maximum arrival rate that can be supported is $\lambda_{max} = n/\E{C}$, where $\E{C}$ depends on the redundancy scheduling policy.
\end{lem}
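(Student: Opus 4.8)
The plan is to establish the stability bound via a conservation/work-rate argument. The key quantity is $\E{C}$, the expected total server-time consumed per job across all $n$ servers. Intuitively, jobs bring work into the system at rate $\lambda$, each consuming an expected $\E{C}$ units of total server-time, so work arrives at rate $\lambda \E{C}$. The system has $n$ servers, so it can process work at rate at most $n$. For the queues to remain stable (finite mean response time), the arrival rate of work must not exceed the service capacity, giving $\lambda \E{C} \le n$, and strict inequality $\lambda < n/\E{C}$ for $\E{T} < \infty$. The symmetry assumption (\Cref{defn:symmetric_forking}) ensures the load is balanced across servers, so no single server is the bottleneck before the aggregate capacity is reached.

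\medskip

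\textbf{First I would} make the work-conservation argument precise. Consider a long time horizon $[0, \tau]$ during which $N(\tau)$ jobs arrive. By definition of the computing cost $C$, the total server-busy-time devoted to job $i$ (summed over all servers that worked on it) is some random amount $C_i$ with $\E{C_i} = \E{C}$. The total work brought into the system over $[0,\tau]$ is $\sum_{i=1}^{N(\tau)} C_i$. Dividing by $\tau$ and letting $\tau \to \infty$, the long-run rate at which work enters the system is $\lambda \E{C}$, using $N(\tau)/\tau \to \lambda$ and a law-of-large-numbers / renewal-reward argument for the average of the $C_i$.

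\medskip

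\textbf{Next I would} bound the rate at which the system can dispatch work. The total amount of server-time available across all $n$ servers in $[0,\tau]$ is at most $n\tau$, since each server can be busy for at most $\tau$ units of time. If the system is stable, every unit of work that enters must eventually be served, so in steady state the work-output rate equals the work-input rate $\lambda \E{C}$; but the output rate cannot exceed the total capacity rate $n$. Hence $\lambda \E{C} \le n$. To upgrade to the strict inequality required for $\E{T} < \infty$, I would observe that if $\lambda \E{C} = n$, the servers would have to be busy essentially all the time with no idle slack, which (for any non-degenerate service process) forces the backlog and hence the mean response time to grow without bound — the standard ``critically loaded'' regime is unstable for $\E{T}$.

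\medskip

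\textbf{The main obstacle} I anticipate is handling redundancy correctly in the work-accounting: because redundant tasks are \emph{canceled} when the job completes, the cost $C$ is not simply the sum of $n$ full service times but the actual busy-time consumed before cancellation. I must be careful that $\E{C}$ already accounts for this early termination, so that $\lambda \E{C}$ genuinely measures the work the servers must perform rather than the work that would be performed in the absence of cancellation. A secondary subtlety is the symmetry hypothesis: it is what lets me convert the aggregate bound $\lambda \E{C} \le n$ into a meaningful per-server condition and rules out pathological policies that overload one server while leaving others idle. The argument works for an arbitrary arrival process (not just Poisson), since it relies only on the long-run arrival rate $\lambda$ and the per-job expected cost $\E{C}$, consistent with the footnote in the excerpt.
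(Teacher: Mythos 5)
Your proof is correct and takes essentially the same approach as the paper: the paper invokes the server-utilization version of Little's Law to require $\rho = \lambda \E{C}/n < 1$ under a symmetric policy, and your work-conservation argument is precisely a first-principles derivation of that same utilization condition. The only cosmetic difference is that you account for work in aggregate across all $n$ servers and then note symmetry rules out a single-server bottleneck, whereas the paper uses symmetry up front to write the per-server load as $\E{C}/n$.
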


\begin{proof}[of \Cref{lem:capacity_in_terms_of_EC}]
For a symmetric policy, the mean time spent by each server per job is $\E{C}/n$. Thus the server utilization is $\rho = \lambda \E{C}/n$. By the server utilization version of Little's Law, $\rho$ must be less than $1$ for the system to be stable. The result follows from this.
\end{proof}

\begin{defn}[Service Capacity $\lambda^{*}_{max}$]
\label{defn:serv_capacity}
The service capacity of the system $\lambda^{*}_{max}$ is the maximum achievable $\lambda_{max}$ over all symmetric policies.
\end{defn}

From \Cref{lem:capacity_in_terms_of_EC} and \Cref{defn:serv_capacity} we can infer \Cref{coro:high_traffic_comp} below. 

\begin{coro}
\label{coro:high_traffic_comp}
The redundancy strategy that minimizes $\E{C}$ results in the lowest $\E{T}$ in the heavy traffic regime ($\lambda \rightarrow \lambda^{*}_{max}$).  
\end{coro}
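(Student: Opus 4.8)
The plan is to derive the corollary directly from the stability characterization in \Cref{lem:capacity_in_terms_of_EC}, reducing the heavy-traffic latency comparison to a comparison of stability regions. The essential observation is that \Cref{lem:capacity_in_terms_of_EC} makes $\lambda_{max} = n/\E{C}$ a strictly decreasing function of $\E{C}$: among symmetric policies, the one with the smallest $\E{C}$ is precisely the one with the largest supportable arrival rate, and by \Cref{defn:serv_capacity} this largest value is the service capacity $\lambda^{*}_{max}$.

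First I would fix the minimizing policy, call it $P^{*}$, with cost $\E{C}^{*} = \min_P \E{C}_P$, so that its maximum arrival rate equals the service capacity, $n/\E{C}^{*} = \lambda^{*}_{max}$. Next, I would take any competing symmetric policy $P$ with $\E{C}_P > \E{C}^{*}$ and note that its own maximum arrival rate $n/\E{C}_P$ is strictly smaller than $\lambda^{*}_{max}$. Hence there is a nonempty interval of arrival rates $(n/\E{C}_P, \lambda^{*}_{max})$ on which $P$ violates the necessary condition $\lambda < n/\E{C}_P$ of \Cref{lem:capacity_in_terms_of_EC}, so $P$ is unstable and $\E{T} = \infty$ there. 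As $\lambda \rightarrow \lambda^{*}_{max}$ every such $P$ eventually enters this unstable regime, whereas $P^{*}$ continues to satisfy $\lambda < \lambda^{*}_{max} = n/\E{C}^{*}$ and so can keep $\E{T}$ finite. Comparing a finite $\E{T}$ against an infinite one yields the claim that $P^{*}$ attains the lowest $\E{T}$ in the heavy-traffic limit.

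The main subtlety is that \Cref{lem:capacity_in_terms_of_EC} supplies only a \emph{necessary} condition for stability, i.e. an upper bound on the admissible $\lambda$. This direction is exactly what is needed to conclude that every higher-cost policy blows up before $\lambda^{*}_{max}$, which is the crux of the argument. The one point that requires care is asserting that $P^{*}$ has finite latency strictly below its own threshold; this is not delivered by the necessary condition alone and implicitly appeals to the standard heavy-traffic interpretation that the cost-minimizing (equivalently, capacity-achieving) policy remains stable up to $\lambda^{*}_{max}$. I would therefore phrase the conclusion as a comparison in the limit---finite versus infinite expected latency---rather than as a statement about the exact value of $\E{T}$, which is all the corollary requires.
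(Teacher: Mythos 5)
Your proposal is correct and takes essentially the same route as the paper: the paper gives no separate proof, but infers the corollary directly from \Cref{lem:capacity_in_terms_of_EC} and \Cref{defn:serv_capacity}, noting exactly as you do that every strategy whose $\lambda_{max} < \lambda^{*}_{max}$ has $\E{T} \rightarrow \infty$ as $\lambda \rightarrow \lambda^{*}_{max}$, while the cost-minimizing strategy's stability region extends all the way to $\lambda^{*}_{max}$. The subtlety you flag---that \Cref{lem:capacity_in_terms_of_EC} is only a necessary condition, so the finiteness of the minimizer's latency below its threshold is assumed rather than derived---is equally implicit in the paper's argument.
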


Note that as $\lambda$ approaches $\lambda^{*}_{max}$, the expected latency $\E{T} \rightarrow \infty$ for all strategies whose $\lambda_{max} < \lambda^{*}_{max}$.

\subsection{Log-concavity of $\bar{F}_X$}
If we fork a job to all $r$ idle servers and wait for any $1$ copy to finish, the expected computing cost $\E{C} = r\E{X_{1:r}}$, where $X_{1:r} = \min(X_1, X_2, \dots, X_r)$, the minimum of $r$ i.i.d. realizations of random variable $X$. The behavior of this cost function depends on whether the tail distribution $\bar{F}_X$ of service time is `log-concave' or `log-convex'. Log-concavity of $\bar{F}_X$ is defined formally as follows.

\begin{defn}[Log-concavity and log-convexity of $\bar{F}_X$]
\label{defn:log_concave}
The tail distribution $\bar{F}_X$ is said to be log-concave (log-convex) if $ \log \Pr(X>x)$ is concave (convex) in $x$ for all $x \in [0, \infty)$. 
\end{defn}

For brevity, when we say $X$ is log-concave (log-convex) in this paper, we mean that $\bar{F}_X$ is log-concave (log-convex). \Cref{lem:r_E_X_1_r_trend} below gives how $r \E{X_{1:r}}$ varies with $r$ for log-concave (log-convex) $\bar{F}_X$. 

\begin{lem}[Expected Minimum]
\label{lem:r_E_X_1_r_trend}
If $X$ is log-concave (log-convex), then $r \E{X_{1:r}}$ is non-decreasing (non-increasing) in $r$.
\end{lem}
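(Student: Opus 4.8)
The plan is to reduce the statement to a one-dimensional monotonicity question and then make the relevant sign manifest by passing to the cumulative-hazard scale, where the whole quantity becomes a Laplace transform.

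First I would write $r\E{X_{1:r}} = r\int_0^\infty \Pr(X_{1:r}>x)\,dx = r\int_0^\infty \bar{F}_X(x)^r\,dx$, using the tail-integral formula for the mean of a nonnegative random variable together with the fact that the minimum of $r$ i.i.d.\ copies has tail $\bar{F}_X(x)^r$. Call this quantity $h(r)$. It is convenient to treat $r$ as a continuous parameter in $[1,\infty)$, since monotonicity of $h$ in real $r$ immediately yields the claimed monotonicity over the integers.

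Second I would translate \Cref{defn:log_concave} into a statement about the hazard rate $\mu(x) = -\frac{d}{dx}\log\bar{F}_X(x)$: log-concavity of $\bar{F}_X$ is exactly concavity of $\log\bar{F}_X$, i.e.\ that $\mu$ is non-decreasing (the increasing-hazard-rate, or IHR, case), and log-convexity is exactly $\mu$ non-increasing (DHR). I would then substitute $q = Q(x) := -\log\bar{F}_X(x)$, the cumulative hazard, which increases from $0$ to $\infty$; since $dq = \mu(x)\,dx$ this gives $\E{X_{1:r}} = \int_0^\infty e^{-rq}\psi(q)\,dq$ with $\psi(q) := 1/\mu(Q^{-1}(q))$. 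Because $q \mapsto Q^{-1}(q)$ is increasing, the monotonicity of $\mu$ in $x$ transfers, with a sign flip, to monotonicity of $\psi$ in $q$: log-concave $\Rightarrow \psi$ non-increasing, log-convex $\Rightarrow \psi$ non-decreasing.

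Third, I would recognize the resulting expression as $h(r) = r\int_0^\infty e^{-rq}\psi(q)\,dq = \E{\psi(E/r)}$, where $E$ is an $\mathrm{Exp}(1)$ random variable, as the substitution $e = rq$ verifies. The conclusion is then immediate: for each fixed realization $E=e$, the argument $e/r$ is decreasing in $r$, so if $\psi$ is non-increasing then $r \mapsto \psi(E/r)$ is non-decreasing pointwise, and taking expectations preserves this, giving $h(r)$ non-decreasing in the log-concave case; the log-convex case is symmetric. The main obstacle I anticipate is regularity rather than the core idea: the hazard substitution presupposes that $\bar{F}_X$ is absolutely continuous with an (a.e.) positive, finite $\mu$, and one must handle the endpoints, where $\mu$ may vanish or blow up, and any flat pieces of $Q$ carefully so that $\psi$ is well-defined and the change of variables and expectation interchange are justified. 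For distributions without a density one could approximate, or else avoid the hazard picture entirely by differentiating $h(r)=r\int_0^\infty \bar{F}_X(x)^r\,dx$ and signing $h'(r) = \int_0^\infty \bar{F}_X(x)^r\bigl(1 + r\log\bar{F}_X(x)\bigr)\,dx$ via log-concavity and integration by parts; but that route hides the clean exponential-expectation structure, so I would present the hazard/$\mathrm{Exp}(1)$ argument as the main line and treat the smoothness conditions as technical hypotheses.
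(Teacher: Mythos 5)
Your main line is correct and takes a genuinely different route from the paper's. The paper never passes to the hazard rate: it writes $r\E{X_{1:r}} = \int_0^\infty \Pr(X > x'/r)^r \, dx'$ after the linear change of variables $x' = rx$, and then compares integrands \emph{pointwise} using the scaling property $\Pr(X>x) \leq \Pr(X>\theta x)^{1/\theta}$ (\Cref{propty:scaling}, itself a one-line consequence of concavity of $\log \bar{F}_X$ together with the fact that log-concavity forces $\Pr(X>0)=1$): taking $\theta = r/(r+1)$ gives $\Pr(X>x'/r)^r \leq \Pr(X>x'/(r+1))^{r+1}$ for every $x'$, and integrating yields the lemma, with all inequalities reversed in the log-convex case. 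Your argument instead reparametrizes by the cumulative hazard and uses the increasing/decreasing-hazard-rate characterization, arriving at the representation $r\E{X_{1:r}} = \E{\psi(E/r)}$ with $E \sim \mathrm{Exp}(1)$ and $\psi$ monotone, after which monotonicity in $r$ holds pointwise in the randomness. What each buys: your representation is conceptually illuminating (constant $\psi$, i.e.\ the exponential distribution, is visibly the boundary case, and the argument extends verbatim to real $r$), while the paper's proof is more elementary and requires no regularity whatsoever beyond log-concavity itself. (Note the paper's proof also extends to real $r$, since \Cref{propty:scaling} holds for every $\theta \in (0,1)$.)

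The one substantive weakness is exactly the regularity issue you flag, and it is not purely cosmetic, because the lemma as stated covers distributions on which the hazard substitution breaks down — including the paper's canonical log-concave example $\SExp(\Delta,\mu)$. There the cumulative hazard $Q$ is identically zero on $[0,\Delta]$, so $\psi$ is undefined on the flat piece, and the clean formula $r\E{X_{1:r}} = \E{\psi(E/r)}$ is actually false as written: the flat piece contributes an additional additive term, giving $r\E{X_{1:r}} = r\Delta + 1/\mu$ rather than the constant $\E{\psi(E/r)} = 1/\mu$. The patch is easy (any interval where $\mu = 0$ must, for log-concave $\bar{F}_X$, be an initial segment, and its contribution $r\Delta$ is increasing in $r$, consistent with the claim), and atoms at the right end of the support (e.g.\ deterministic $X$, which is log-concave in the extended sense) can be handled by the approximation you gesture at; but smoothing while preserving log-concavity of the tail and passing to the limit is a real piece of work that your proposal leaves undone. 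The paper's scaling-property argument handles all of these cases uniformly with no extra hypotheses, which is presumably why it was chosen.
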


The proof of \Cref{lem:r_E_X_1_r_trend} can be found in Appendix~\ref{sec:tail_properties}. Note that the exponential distribution is both log-concave and log-convex, and thus $r \E{X_{1:r}}$ remains constant as $r$ varies. This can also be seen from the fact that when $X \sim \Exp(\mu)$, an exponential with rate $\mu$, $X_{1:r}$ is an exponential with rate $r \mu$. Then, $r \E{X_{1:r}} = 1/\mu$, a constant independent of $r$.

Log-concave and log-convex distributions have been studied in economics and reliability theory and have many interesting properties. Properties relevant to this work are given in Appendix~\ref{sec:tail_properties}. We refer readers to \cite{log_concave}. In \Cref{rem:memory} we highlight one key property that provides intuitive understanding of log-concavity. 

\begin{rem}
\label{rem:memory}
It is well-known that the exponential distribution is memoryless. Log-concave distributions have `optimistic memory', that is, the expected remaining service time of a task decreases with the time elapsed. On the other hand, log-convex distributions have `pessimistic memory'.
\end{rem}

Distributions with optimistic memory are referred to as `new-better-than-used' \cite{koole_righter_2008}, `light-everywhere' \cite{shah_when_2013}, or `new-longer-than-used' \cite{sun_shroff}. Log-concavity of $X$ implies that $X$ is `new-better-than-used' (see \Cref{propty:sub_super_additivity} in Appendix~\ref{sec:tail_properties} for the proof).


A natural question is: what are examples of log-concave and log-convex distributions that arise in practice? A canonical example of a log-concave distribution is the shifted exponential distribution $\SExp(\Delta,\mu)$, which is exponential with rate $\mu$, plus a constant shift $\Delta \geq 0$, is log-concave. Recent work \cite{docomo_1,docomo_2} on analysis of content download from Amazon S3 observed that $X$ is shifted exponential, where $\Delta$ is proportional to the size of the content and the exponential part is the random delay in starting the data transfer. Another example of log-concave service time is the uniform distribution over any convex set.

Log-convex service times occur when there is high variability in service time. CPU service times are often approximated by the hyperexponential distribution, which is a mixture of two or more exponentials. In this paper we focus on mixtures of two exponentials with decay rates $\mu_1$ and $\mu_2$ respectively, where the exponential with rate $\mu_1$ occurs with probability $p$. We denote this distribution by $\HyperExp(\mu_1, \mu_2, p)$. If a server is generally fast (rate $\mu_1$) but it can slow down (rate $\mu_2  <\mu_1$) with probability $1-p$, then the overall service time distribution would be $X \sim \HyperExp(\mu_1, \mu_2, p)$.

Many practical systems also have service times that are neither log-concave nor log-convex. In this paper we use the Pareto distribution $\Pareto(x_m, \alpha)$ as an example of such distributions. Its tail distribution is given by,
\begin{align}
\Pr(X>x) &= \begin{cases}
\left(\frac{x_m}{x}\right)^{\alpha} &  x \geq x_m, \\
1 & \text{otherwise}.
\end{cases} \label{eqn:pareto_tail}
\end{align}

The tail distribution in \eqref{eqn:pareto_tail} is log-convex for $x \geq x_m$, but not for all $x \geq 0$ due the initial delay of $x_m$. Thus, overall the Pareto distribution is neither log-concave, nor log-convex.

\begin{rem}
Log-concave (log-convex) distributions are reminiscent of another well-known class of distributions: light (heavy) tailed distributions. Many random variables with log-concave (log-convex) $\bar{F}_X$ are light (heavy) tailed respectively, but neither property implies the other. For example, the Pareto distribution defined above is heavy tailed but is neither log-concave, nor log-convex. While the tail of a distribution characterizes how the maximum $\E{X_{n:n}}$ behaves for large $n$, log-concavity (log-convexity) of $\bar{F}_X$ characterizes the behavior of the minimum $\E{X_{1:n}}$, which is of primary interest in this work.
\end{rem}

%
%


\subsection{Relative Task Start Times}
\label{subsec:task_start_times}

Since the tasks of the job experience different waiting times in their respective queues, they start being served at different times. The relative start times of the $n$ tasks of a job is an important factor affecting the latency and cost. We denote the relative start times by $t_1 \leq t_2 \leq \cdots \leq t_n$ where $t_1 = 0$ without loss of generality. For instance, if $n=3$ tasks start at absolute times $3$, $4$ and $7$, then their relative start times are $t_1 = 0$, $t_2 = 4-3 = 1$ and $t_3 = 7-3 = 4$. In the case of partial forking when only $r$ tasks are invoked, we can consider $t_{r+1}, \cdots t_n$ to be $\infty$. 

For the replicated case $(k=1)$, let $S$ be the time from when the earliest replica of a task starts service, until any one replica finishes. It is the minimum of $X_1+t_1 , X_2 +t_2, \cdots , X_n + t_n$, where $X_i$ are i.i.d. with distribution $F_X$. The tail distributon $\Pr(S>s)$ are given by,
\begin{align}
\Pr(S> s) &= \prod_{i=1}^{n} \Pr(X > s - t_n). \label{eqn:S_tail_dist}
\end{align}

The computing cost $C$ can be expressed in terms of $S$ and $t_i$ as follows.
\begin{align}
C &= S + \posfunc{ S - t_2} + \cdots + \posfunc{S - t_n}. \label{eqn:C_expr}
\end{align}

Using \eqref{eqn:C_expr} we get several crucial insights in the rest of the paper. For instance, in Section~\ref{sec:partial_fork} we show that when $\bar{F}_X$ is log-convex, having $t_1 = t_2 = \cdots = t_n = 0$ gives the lowest $\E{C}$. Then using \Cref{lem:capacity_in_terms_of_EC} we can infer that it is optimal to fork a job to all $n$ servers when $\bar{F}_X$ is log-convex. 


\section{$k=1$ CASE WITHOUT AND WITH EARLY CANCELLATION} 
\label{sec:rep_with_queueing}
In this section we analyze the latency and cost of the $(n,1)$ fork-join system, and the $(n,1)$ fork-early-cancel system defined in Section~\ref{sec:prob_setup}. 
We get the insight that it is better to cancel redundant tasks early if $\bar{F}_X$ is log-concave. On the other hand, if $\bar{F}_X$ is log-convex, retaining the redundant tasks is better.


\begin{figure}[t]
\centering
\includegraphics[width=3.3in]{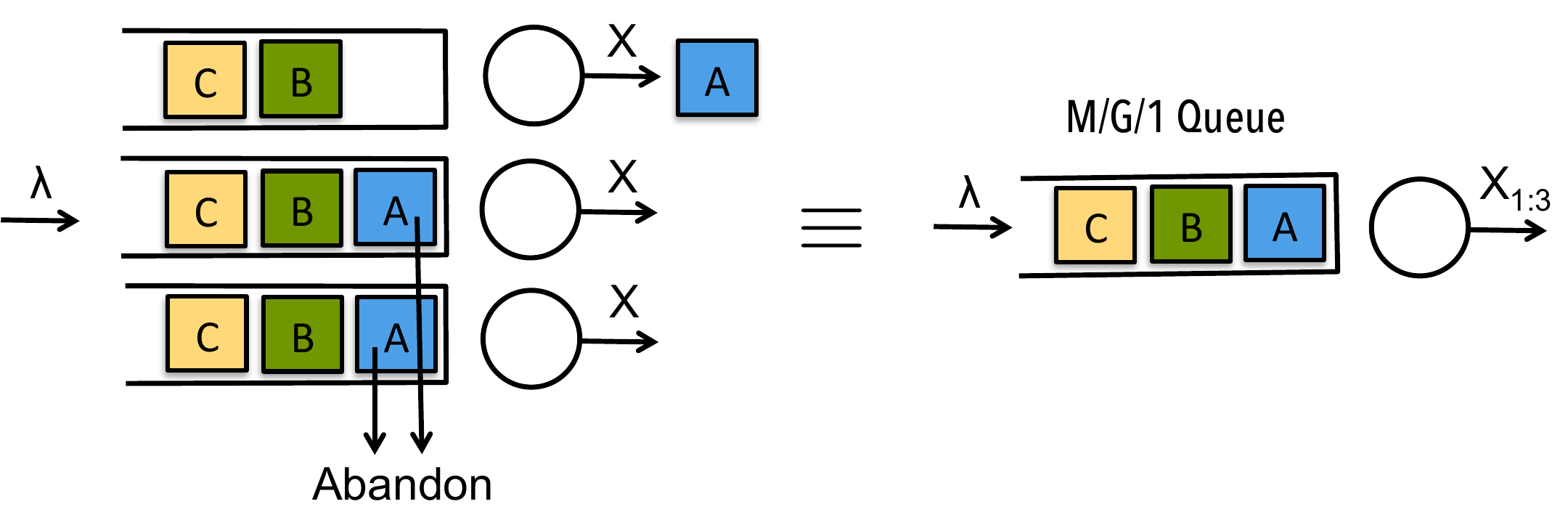}
\caption{Equivalence of the $(n,1)$ fork-join system with an $M/G/1$ queue with service time $X_{1:n}$, the minimum of $n$ i.i.d.\ random variables $X_1, X_2, \dots , X_n$. \label{fig:fork_join_mg1_eq}}
\end{figure}

\subsection{Latency-Cost Analysis}

%

\begin{thm}
\label{thm:rep_queueing}
The expected latency and computing cost of an $(n,1)$ fork-join system are given by
\begin{align}
\E{T} &= \E{T^{M/G/1}} = \E{X_{1:n}} + \frac{\lambda \E{X_{1:n}^2}}{2(1 - \lambda \E{X_{1:n}})} \label{eqn:E_T_rep_queueing} \\
\E{C} &= n \cdot \E{X_{1:n}} \label{eqn:E_C_rep_queueing} 
\end{align}
where $X_{1:n} = \min (X_1, X_2, \dots , X_n)$ for i.i.d.\ $X_i \sim F_X$.
\end{thm}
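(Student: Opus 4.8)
The plan is to reduce the $(\nFork,1)$ fork-join system to a single-server $M/G/1$ queue, as suggested by Fig.~\ref{fig:fork_join_mg1_eq}, and then invoke the Pollaczek--Khinchine formula. The central observation I would establish first is that the $\nFork$ servers operate in \emph{lockstep}. Since every arriving job is forked to all $\nFork$ servers, and each queue is FCFS, the head-of-line task at every server always belongs to the same job. I would make this precise by induction over the sequence of job departures: assuming all $\nFork$ servers begin the same job at some instant, that job departs the moment the first of its $\nFork$ replicas completes service, at which point the remaining replicas are canceled instantaneously; hence all $\nFork$ servers become free simultaneously and turn to the next queued job together. Thus the relative start times of the tasks of every job are $t_1 = \cdots = t_{\nFork} = 0$, and the system is equivalent to a single ``super-server'' processing jobs one at a time.

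In this equivalent system the effective service time of a job is the time until the first replica finishes, namely $S = X_{1:\nFork} = \min(X_1,\dots,X_{\nFork})$; by \eqref{eqn:S_tail_dist} with all $t_i = 0$ this has tail $\Pr(S>s) = \bar{F}_X(s)^{\nFork}$. Because $X$ is i.i.d.\ across tasks and jobs, these effective service times are i.i.d.\ across jobs and independent of the Poisson arrival process. The reduced system is therefore an $M/G/1$ queue with service random variable $X_{1:\nFork}$, and \eqref{eqn:E_T_rep_queueing} follows directly from the standard P--K mean-response-time formula $\E{T^{M/G/1}} = \E{\tilde{X}} + \lambda\E{\tilde{X}^2}/(2(1-\lambda\E{\tilde{X}}))$ applied with $\tilde{X} = X_{1:\nFork}$; the stability condition $\lambda\E{X_{1:\nFork}} < 1$ is exactly the one implied by \Cref{lem:capacity_in_terms_of_EC}.

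For the cost formula \eqref{eqn:E_C_rep_queueing}, I would combine the lockstep property with the cost decomposition \eqref{eqn:C_expr}. Since all tasks start simultaneously ($t_i = 0$ for every $i$) and are all canceled at the common completion instant $S$, each of the $\nFork$ servers spends exactly $S = X_{1:\nFork}$ units of time on the job, so $C = S + \posfunc{S} + \cdots + \posfunc{S} = \nFork \, X_{1:\nFork}$. Taking expectations yields $\E{C} = \nFork\,\E{X_{1:\nFork}}$.

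I expect the main obstacle to be the synchronization argument in the first step: one must argue carefully that the cancellation of redundant tasks together with the FCFS discipline keep the $\nFork$ queues in identical states at all times, so that the forked replicas genuinely collapse to a single $M/G/1$ service completion rather than a $\nFork$-dimensional fork-join queue (whose exact analysis is known to be hard). For continuous $F_X$ simultaneous completions occur with probability zero, so no tie-breaking subtlety arises. Once the lockstep equivalence is in place, both formulas are immediate consequences of the P--K formula and the cost bookkeeping, requiring no further distributional assumptions beyond the i.i.d.\ structure of $X$.
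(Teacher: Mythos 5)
Your proposal is correct and takes essentially the same route as the paper's proof: the paper likewise establishes the lockstep property (all tasks of every job start simultaneously, since redundant replicas are canceled the instant the first finishes), concludes equivalence in distribution to an $M/G/1$ queue with service time $X_{1:n}$, applies the Pollaczek--Khinchine formula for \eqref{eqn:E_T_rep_queueing}, and obtains \eqref{eqn:E_C_rep_queueing} from the fact that each server spends exactly $X_{1:n}$ on a job, equivalently from \eqref{eqn:C_expr} with all $t_i = 0$. The only difference is cosmetic: you spell out the synchronization argument as an explicit induction over job departures, which the paper states informally starting from the first arrival to an empty system.
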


\begin{proof}
Consider the first job that arrives to a $(n,1)$ fork-join system when all servers are idle. The $n$ tasks of this job start service simultaneously at their respective servers. The earliest task finishes after time $X_{1:n}$, and all other tasks are canceled immediately. So, the tasks of all subsequent jobs arriving to the system also start simultaneously at the $n$ servers as illustrated in Fig.~\ref{fig:fork_join_mg1_eq}. Hence, arrival and departure events, and the latency of an $(n,1)$ fork-join system is equivalent in distribution to an $M/G/1$ queue with service time $X_{1:n}$.

The expected latency of an $M/G/1$ queue is given by the Pollaczek-Khinchine formula \eqref{eqn:E_T_rep_queueing}. The expected cost $\E{C} = n \E{X_{1:n}}$ because each of the $n$ servers spends $X_{1:n}$ time on the job. This can also be seen by noting that $S = X_{1:n}$ when $t_i = 0$ for all $i$, and thus by \eqref{eqn:C_expr}, $C = n X_{1:n}$.
\end{proof}



From \eqref{eqn:E_T_rep_queueing} it is easy to see that for any service time distribution $F_X$, the expected latency $\E{T}$ is non-increasing with $n$. The behavior of $\E{C}$ follows from \Cref{lem:r_E_X_1_r_trend} as given by \Cref{coro:rep_queueing_EC} below.

\begin{coro}
\label{coro:rep_queueing_EC}
If $\bar{F}_X$ is log-concave (log-convex), then $\E{C}$ is non-decreasing (non-increasing) in $n$.
\end{coro}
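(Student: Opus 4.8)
The plan is to reduce the statement directly to the two results already established, namely the cost formula in \Cref{thm:rep_queueing} and the monotonicity of the expected minimum in \Cref{lem:r_E_X_1_r_trend}. The key observation is that the cost expression $\E{C} = n \cdot \E{X_{1:n}}$ is, after renaming the index, exactly the quantity $r \E{X_{1:r}}$ whose behavior in $r$ is characterized by \Cref{lem:r_E_X_1_r_trend}. So the corollary is a one-line consequence once these pieces are lined up correctly.

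Concretely, first I would invoke \Cref{thm:rep_queueing} to write the computing cost of the $(n,1)$ fork-join system as $\E{C} = n \E{X_{1:n}}$, where $X_{1:n}$ is the minimum of $n$ i.i.d.\ copies of $X$. Next I would identify this with the function $r \mapsto r \E{X_{1:r}}$ evaluated at $r = n$. Then, applying \Cref{lem:r_E_X_1_r_trend}: if $\bar{F}_X$ is log-concave, that function is non-decreasing in its argument, so $\E{C}$ is non-decreasing in $n$; if $\bar{F}_X$ is log-convex, the function is non-increasing, so $\E{C}$ is non-increasing in $n$. This completes the argument.

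There is essentially no analytic obstacle here, since all the real work is carried out in the proof of \Cref{lem:r_E_X_1_r_trend} (deferred to the appendix), where log-concavity or log-convexity of $\bar{F}_X$ is actually exploited. The only point requiring care is the bookkeeping: one must confirm that the $n$ appearing in the cost formula plays precisely the role of the $r$ in the lemma, i.e.\ that both count the number of i.i.d.\ copies over which the minimum $X_{1:\cdot}$ is taken, so that the monotonicity statement transfers verbatim. The discrete, integer-valued nature of $n$ poses no difficulty, as \Cref{lem:r_E_X_1_r_trend} is already phrased for integer arguments, and monotonicity in a discrete parameter is exactly what the corollary asserts.
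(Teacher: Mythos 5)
Your proposal is correct and matches the paper's own argument exactly: the paper likewise derives \Cref{coro:rep_queueing_EC} by combining the cost formula $\E{C} = n\E{X_{1:n}}$ from \Cref{thm:rep_queueing} with the monotonicity of $r\E{X_{1:r}}$ in \Cref{lem:r_E_X_1_r_trend}. Nothing further is needed.
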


\begin{figure}[t]
    \begin{minipage}[t]{0.48\linewidth}
        \centering
      \includegraphics[width=3.2in]{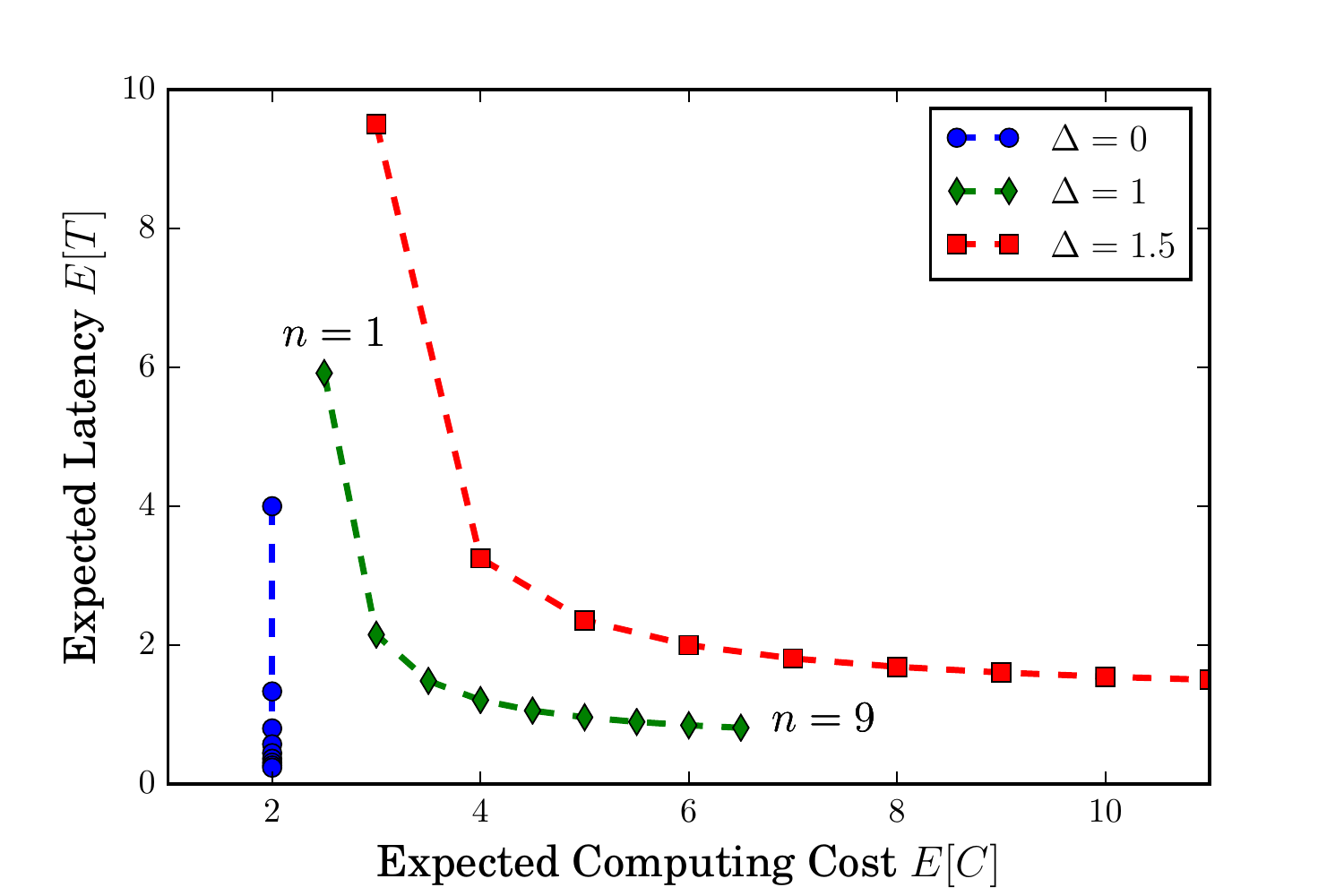}
      \caption{ The service time $X \sim \Delta + \text{Exp}(\mu)$ (log-concave), with $\mu = 0.5$, $\lambda = 0.25$. As $n$ increases along each curve, $\E{T}$ decreases and $\E{C}$ increases. Only when $\Delta =0$, latency reduces at no additional cost. \label{fig:ET_vs_EC_rep_shifted_exp_var_n}}
    \end{minipage}
    \hspace{0.04\linewidth}
    \begin{minipage}[t]{0.48\linewidth}
\centering
\includegraphics[width=3.2in]{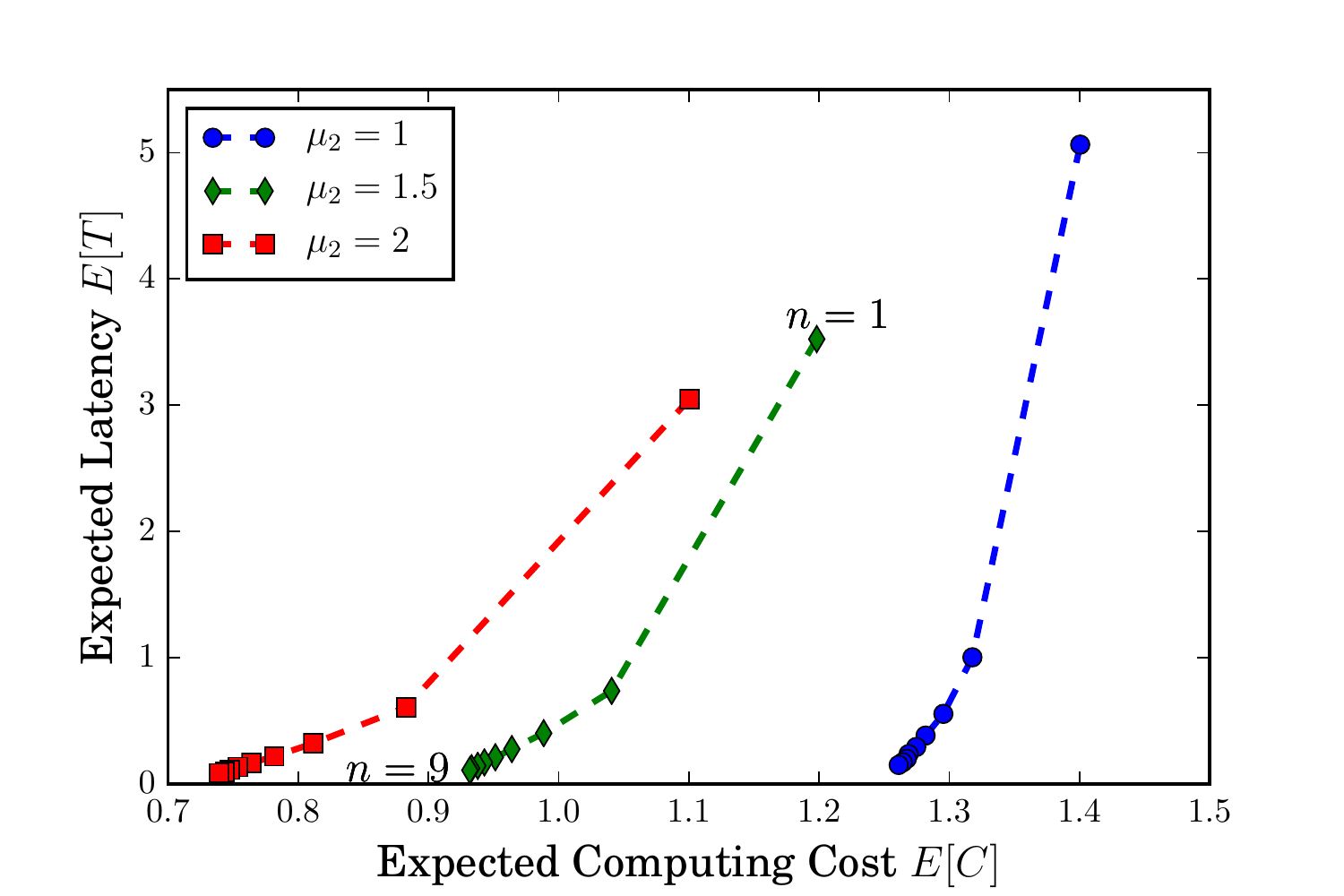}
\caption{ The service time $X \sim \HyperExp(0.4, \mu_1, \mu_2)$ (log-convex), with $\mu_1 = 0.5$, different values of $\mu_2$, and $\lambda = 0.5$.  Expected latency and cost both reduce as $n$ increases along each curve.\label{fig:ET_vs_EC_rep_hyper_exp_var_n}}
    \end{minipage}
\end{figure}

Fig.~\ref{fig:ET_vs_EC_rep_shifted_exp_var_n} and Fig.~\ref{fig:ET_vs_EC_rep_hyper_exp_var_n} show analytical plots of the expected latency versus cost for log-concave and log-convex $\bar{F}_X$ respectively. In Fig.~\ref{fig:ET_vs_EC_rep_shifted_exp_var_n}, the arrival rate $\lambda = 0.25$, and $X$ is shifted exponential $\SExp(\Delta, 0.5)$, with different values of $\Delta$. For $\Delta > 0$, there is a trade-off between expected latency and cost. Only when $\Delta = 0$, that is, $X$ is a pure exponential (which is generally not true in practice), we can reduce latency without any additional cost. In Fig.~\ref{fig:ET_vs_EC_rep_hyper_exp_var_n}, arrival rate $\lambda = 0.5$, and $X$ is hyperexponential $\HyperExp(0.4, 0.5, \mu_2)$ with different values of $\mu_2$. We get a simultaneous reduction in $\E{T}$ and $\E{C}$ as $n$ increases. The cost reduction is steeper as $\mu_2$ increases.

Instead of holding the arrival rate $\lambda$ constant, if we consider that it scales linearly with $n$, then the latency $\E{T}$ may not always decrease with $n$. In \Cref{coro:ET_lmbda_scaling_n} we study the behavior as $n$ varies.

\begin{coro}
\label{coro:ET_lmbda_scaling_n}
If the arrival rate $\lambda = \lambda_0 n$, scaling linearly with $n$, then the latency $\E{T}$ decreases with $n$ if $\bar{F}_X$ is log-convex. If $\bar{F}_X$ is log-concave then $\E{T}$ increase with $n$ in heavy traffic.
\end{coro}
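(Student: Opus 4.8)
The plan is to substitute the linear scaling $\lambda=\lambda_0 n$ into the Pollaczek--Khinchine expression from \Cref{thm:rep_queueing} and track how each factor moves with $n$, leveraging that the cost $\E{C}=n\E{X_{1:n}}$ already has known monotonicity from \Cref{coro:rep_queueing_EC}. Using $n\E{X_{1:n}}=\E{C}$, the denominator becomes $1-\lambda_0\E{C}$, so that
\begin{align}
\E{T} = \E{X_{1:n}} + \frac{\lambda_0\, n\,\E{X_{1:n}^2}}{2\left(1-\lambda_0\E{C}\right)} = \E{X_{1:n}} + \frac{\rho}{1-\rho}\cdot\frac{\E{X_{1:n}^2}}{2\E{X_{1:n}}},
\end{align}
where $\rho=\lambda\E{X_{1:n}}=\lambda_0\E{C}$ is the server utilization appearing in \Cref{lem:capacity_in_terms_of_EC}. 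Both stated regimes are then read off from this single identity.

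For the log-convex case I would show that every ingredient is non-increasing in $n$. First, $X_{1:n}$ is stochastically non-increasing in $n$ (the minimum of more i.i.d.\ copies can only be smaller), so $\E{X_{1:n}}$ decreases. Second, $\E{C}$ is non-increasing by \Cref{coro:rep_queueing_EC}, hence $\rho=\lambda_0\E{C}$ decreases and so does $\rho/(1-\rho)$. The remaining piece is the numerator $n\E{X_{1:n}^2}$, for which I would prove a second-moment analogue of \Cref{lem:r_E_X_1_r_trend}: writing $n\E{X_{1:n}^2}=\int_0^\infty 2nx\,\bar F_X(x)^n\,dx$ and re-running the convexity-of-$\log\bar F_X$ argument underlying \Cref{lem:r_E_X_1_r_trend} shows $n\E{X_{1:n}^2}$ is non-increasing for log-convex $\bar F_X$. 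With all factors decreasing, $\E{T}$ decreases with $n$.

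For the log-concave case I would make \emph{heavy traffic} precise through the shrinking stability region. By \Cref{coro:rep_queueing_EC}, $\E{C}$ is non-decreasing in $n$, so by \Cref{lem:capacity_in_terms_of_EC} the largest sustainable $\lambda_0$ equals $1/\E{C}$, which is non-increasing in $n$; the $(n+1)$-server system therefore becomes unstable at a smaller arrival rate than the $n$-server system. Fixing $\lambda_0$ just below the threshold $1/\E{C}_{n+1}$ drives the $(n+1)$-utilization $\rho_{n+1}=\lambda_0\E{C}_{n+1}\to 1^{-}$, so the queueing term $\frac{\rho}{1-\rho}\cdot\frac{\E{X_{1:n+1}^2}}{2\E{X_{1:n+1}}}$ diverges while the corresponding $n$-server quantity stays finite (since $\rho_n<\rho_{n+1}<1$). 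Hence $\E{T}$ is larger for $n+1$ than for $n$ in the heavy-traffic limit, giving the claimed increase.

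The main obstacle is the numerator in the log-convex case: unlike $\E{X_{1:n}}$ and $\E{C}$, the monotonicity of $n\E{X_{1:n}^2}$ is not quoted in the excerpt, so I must re-derive the integral argument behind \Cref{lem:r_E_X_1_r_trend} at the level of the second moment and check that the sign is preserved. A secondary subtlety is pinning down \emph{heavy traffic} uniformly across different $n$, which I resolve above by comparing each system against its own ($n$-dependent) stability threshold rather than against a single fixed arrival rate.
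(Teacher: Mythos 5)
Your proposal is correct and follows essentially the same route as the paper: substitute $\lambda=\lambda_0 n$ into the Pollaczek--Khinchine formula, invoke \Cref{lem:r_E_X_1_r_trend} (and its second-moment analogue for $n\E{X_{1:n}^2}$, which the paper also asserts without detail) to show every piece of the expression moves monotonically with $n$, and handle the log-concave case by noting the queueing term blows up as the stability boundary is approached. Your heavy-traffic argument via the $n$-dependent stability threshold $\lambda_0 < 1/\E{C}$ is simply a more explicit rendering of the paper's terser claim that the second term dominates and grows with $n$, so it is the same proof with added rigor rather than a different one.
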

\begin{proof}
If $\lambda = \lambda_0 n$, then latency $\E{T}$ in \eqref{eqn:E_T_rep_queueing} can be rewritten as
\begin{align}
\E{T} = \E{X_{1:n}} + \frac{\lambda_0 n\E{X_{1:n}^2}}{2(1 - \lambda_0 n\E{X_{1:n}})} \label{eqn:E_T_lmbda_scaling_n}
\end{align}
If $\bar{F}_X$ is log-convex then by \Cref{lem:r_E_X_1_r_trend} we know that $n \E{X_{1:n}}$ decreases with $n$. Similarly, $n \E{X_{1:n}^2}$ also decreases with $n$ (the proof follows similarly as \Cref{lem:r_E_X_1_r_trend}. Hence, we can conclude that the latency in \eqref{eqn:E_T_lmbda_scaling_n} decreases with $n$ for log-convex $\bar{F}_X$. On the other hand, if $\bar{F}_X$ is log-concave, then $n\E{X_{1:n}}$ and $n \E{X_{1:n}^2}$ increase with $n$. Thus, in the heavy traffic regime $(\lambda \rightarrow \lambda^*_{max})$, when the second term in \eqref{eqn:E_T_lmbda_scaling_n} dominates, $\E{T}$ increases with $n$.
\end{proof}

%
%

\subsection{Early Task Cancellation}
\label{subsec:rep_early_cancel}
We now analyze the $(n, 1)$ fork-early-cancel system, where we cancel redundant tasks as soon as any task reaches the head of its queue. Intuitively, early cancellation can save computing cost, but the latency could increase due to the loss of diversity advantage provided by retaining redundant tasks. Comparing it to $(n,1)$ fork-join system, we gain the insight that early cancellation is better when $\bar{F}_X$ is log-concave, but ineffective for log-convex $\bar{F}_X$.

\begin{thm}
\label{thm:rep_queueing_early_cancel}
The expected latency and cost of the $(n,1)$ fork-early-cancel system are given by
\begin{align}
\E{T} &=  \E{T^{M/G/n}}, \label{eqn:ET_rep_early_cancel} \\
\E{C} &= \E{X}, \label{eqn:rep_case_cost_early_cancel} 
\end{align}
where $T^{M/G/n}$ is the response time of an $M/G/n$ queueing system with service time $X \sim F_X$. 
\end{thm}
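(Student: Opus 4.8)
The plan is to prove the two equations separately, since the cost identity follows almost immediately from the cancellation mechanism while the latency identity requires a sample-path equivalence with an $M/G/n$ queue. For the cost, I would observe that in the $(n,1)$ fork-early-cancel system each job is picked up by exactly one server — the first one at whose queue head the job's copy arrives — and all other copies are canceled \emph{at the instant} that server begins service, not during it. Hence exactly one server spends a full service time $X \sim F_X$ on the job and no other server does any work on it, so $C = X$ and $\E{C} = \E{X}$. Equivalently, this is the $t_2 = \cdots = t_n = \infty$ specialization of \eqref{eqn:C_expr}, which collapses to $C = S = X$.

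The substance is the latency identity $\E{T} = \E{T^{M/G/n}}$, which I would establish by exhibiting a distributional (in fact sample-path) equivalence between the arrival/start/departure process of the fork-early-cancel system and that of an FCFS $M/G/n$ queue driven by the same Poisson arrivals and the same service times. The key structural invariant is this: \emph{at every instant, the set of tasks waiting in each of the $n$ queues is identical, and equals the set of jobs that have arrived but not yet begun service.} This holds because every job is forked to all $n$ servers, each server serves its queue first-come-first-served, and a job's copies are removed from \emph{all} queues simultaneously — precisely when the job begins service somewhere. As a corollary, the system is work-conserving in the $M/G/n$ sense: one cannot have an idle server coexisting with a waiting job, since that job's copy would sit at the idle server's queue head and begin service at once.

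Granting the invariant, the dynamics fall into place: whenever a server becomes idle, the head of its queue is the oldest not-yet-started job, so that job begins service there and is canceled elsewhere. This is exactly the assignment rule of a single-queue FCFS $M/G/n$ system — the longest-waiting job is routed to the next free server — and it forces jobs to begin service in arrival order, consuming exactly one full service time $X$ apiece. I would then couple the two systems on common realizations of arrivals and service times and verify, by induction over the countable sequence of arrival and service-completion epochs, that they occupy identical server-occupancy states, and hence assign identical start and departure times to every job; taking expectations yields $\E{T} = \E{T^{M/G/n}}$.

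The main obstacle is making the invariant and the coupling airtight at the boundary cases rather than the generic one. Specifically, I would need to rule out a later-arriving job overtaking an earlier one — which the identical-queue-contents invariant prevents, since the oldest un-started job sits at the head of \emph{every} queue — and to handle epochs where several servers are simultaneously idle or a task reaches multiple queue heads at once. There the rule ``if more than one task starts service simultaneously, retain one chosen uniformly at random'' must be matched against the $M/G/n$ convention for assigning a job among several free servers. Because the servers are statistically identical, \emph{which} server is retained does not affect the joint distribution of start and departure times, so the coupling survives these ties; verifying this compatibility carefully is the one place where the argument genuinely needs attention.
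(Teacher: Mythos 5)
Your proposal is correct and follows essentially the same route as the paper: the paper's proof also establishes that the $(n,1)$ fork-early-cancel system is equivalent in distribution to a central-queue FCFS $M/G/n$ system, noting that exactly one task of each job is served by the first server to become idle (hence $C = X$). Your write-up merely makes explicit the identical-queue-contents invariant, the coupling induction, and the tie-breaking compatibility that the paper's shorter argument leaves implicit.
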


\begin{proof}
In the $(n,1)$ fork-early-cancel system, when any one tasks reaches the head of its queue, all others are canceled immediately. The redundant tasks help find the queue with the least work left, and exactly one task of each job is served by the first server that becomes idle. Thus, as illustrated in Fig.~\ref{fig:fork_early_mgn_eq}, the latency of the $(n,1)$ fork-early-cancel system is equivalent in distribution to an $M/G/n$ queue. Hence $\E{T} = \E{T^{M/G/n}}$ and $\E{C} = \E{X}$.
\end{proof}

\begin{figure}[t]
\centering
\includegraphics[width=3.3in]{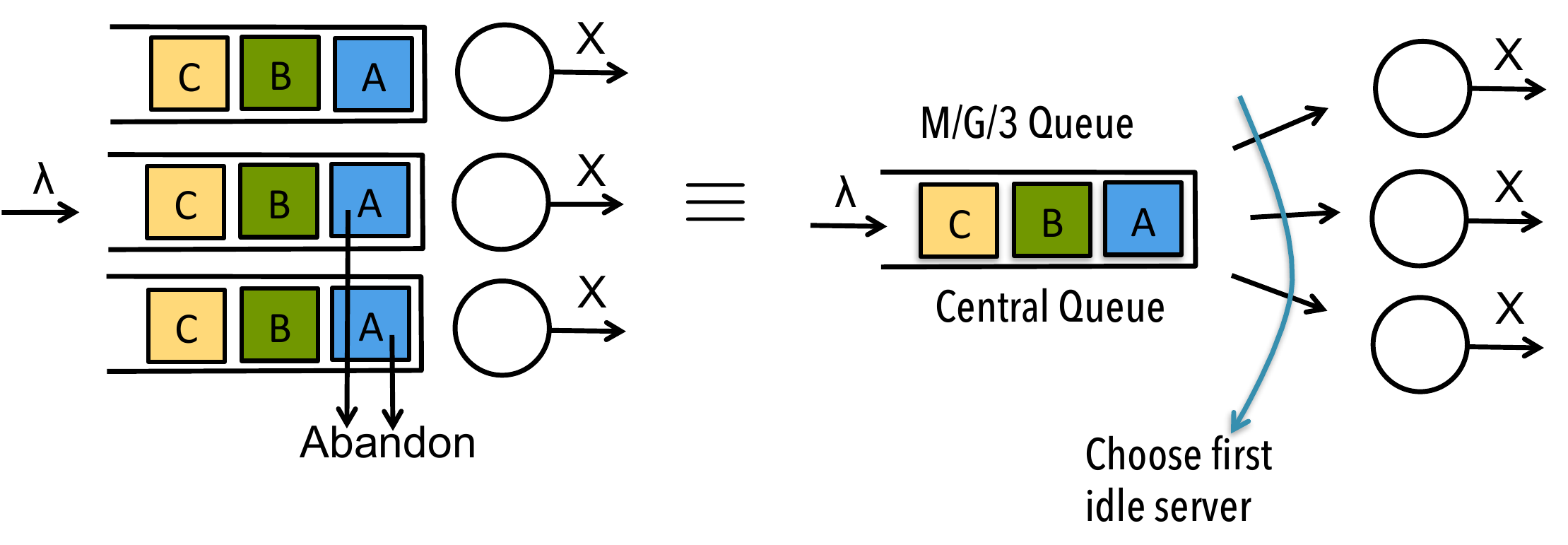}
\caption{Equivalence of the $(n,1)$ fork-early cancel system to an $M/G/n$ queue with each server taking time $X \sim F_X$ to serve task, i.i.d.\ across servers and tasks. \label{fig:fork_early_mgn_eq}}
\end{figure}

The exact analysis of mean response time $\E{T^{M/G/n}}$ has long been an open problem in queueing theory. A well-known approximation given by \cite{lee_loughton} is,
\begin{align}
 \E{T^{M/G/n}} \approx \E{X} +  \frac{ \E{X^2}}{2 \E{X}^2}\E{W^{M/M/n}}  \label{eqn:rep_case_latency_early_cancel}  
\end{align}
where $\E{W^{M/M/n}}$ is the expected waiting time in an $M/M/n$ queueing system with load $\rho = \lambda \E{X}/n$. This expected waiting time can be evaluated using the Erlang-C model \cite[Chapter~14]{mor_book}. 
A related work that studies the centralized queue model that the $(n,1)$ fork-early-cancel system is equivalent to is \cite{visschers_product_2012}, which considers the case of heterogeneous job classes with exponential service times.

Next we compare the latency and cost with and without early cancellation given by Theorem~\ref{thm:rep_queueing_early_cancel} and Theorem~\ref{thm:rep_queueing}. \Cref{coro:early_cancel_E_C_trend} below follows from \Cref{lem:r_E_X_1_r_trend}.


\begin{coro}
\label{coro:early_cancel_E_C_trend}
If $\bar{F}_X$ is log-concave (log-convex), then $\E{C}$ of the $(n,1)$ fork-early-cancel system is greater than or equal to (less than or equal to) that of $(n,1)$ fork-join system. 
\end{coro}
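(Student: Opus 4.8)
The plan is to recognize both cost expressions as the two endpoints of the single monotone sequence studied in \Cref{lem:r_E_X_1_r_trend}, so that the comparison reduces to one application of that lemma. By \Cref{thm:rep_queueing_early_cancel} the fork-early-cancel system has $\E{C} = \E{X}$, while by \Cref{thm:rep_queueing} the fork-join system has $\E{C} = n\,\E{X_{1:n}}$. The key observation is that, writing $g(r) = r\,\E{X_{1:r}}$, the early-cancel cost is exactly $g(1) = \E{X}$ (since $X_{1:1} = X$) and the fork-join cost is exactly $g(n) = n\,\E{X_{1:n}}$. Thus the two systems to be compared sit at the indices $r=1$ and $r=n$ of one and the same sequence, with $1 \le n$.

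First I would invoke \Cref{lem:r_E_X_1_r_trend}, which asserts that $g(r) = r\,\E{X_{1:r}}$ is non-decreasing in $r$ when $\bar{F}_X$ is log-concave and non-increasing when $\bar{F}_X$ is log-convex. Evaluating this monotone sequence at the two indices $r = 1$ and $r = n$ and reading off the resulting order between $g(1)$ and $g(n)$ in each case yields precisely the comparison between the two systems stated in the corollary. Since the whole content is a specialization of \Cref{lem:r_E_X_1_r_trend} to $r \in \{1, n\}$, no additional computation is needed, which is consistent with the paper's remark that this corollary follows directly from that lemma.

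The one point demanding care, and the step I expect to be the main obstacle, is correctly aligning the direction of the monotonicity of $g$ with the roles of the two systems, i.e.\ matching the early-cancel cost to the index $r=1$ and the fork-join cost to $r=n$ before applying the log-concave (respectively log-convex) monotonicity. As a consistency check I would test the boundary case $X \sim \Exp(\mu)$: here $X_{1:r} \sim \Exp(r\mu)$, so $g(r) = 1/\mu$ is constant in $r$, and because the exponential is simultaneously log-concave and log-convex the two inequalities must degenerate to equality. This confirms that the endpoints and the monotonicity direction have been paired correctly and that the two regimes meet, as expected, at the memoryless distribution.
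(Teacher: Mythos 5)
Your proposal is, in substance, identical to the paper's proof: the paper dispatches this corollary in one line by comparing $\E{C}=\E{X}$ from \Cref{thm:rep_queueing_early_cancel} with $\E{C}=n\E{X_{1:n}}$ from \Cref{thm:rep_queueing} through \Cref{lem:r_E_X_1_r_trend}, which is exactly your identification of the two costs as $g(1)$ and $g(n)$ for $g(r)=r\E{X_{1:r}}$.

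However, on the one step you yourself flagged as the main obstacle --- aligning the direction of the monotonicity with the two systems --- you stopped short of writing the inequalities out, and here that matters. Carrying the reading-off to completion: for log-concave $\bar{F}_X$, $g$ non-decreasing gives $\E{X}=g(1)\leq g(n)=n\E{X_{1:n}}$, i.e.\ the fork-early-cancel system has \emph{lower or equal} cost than fork-join, with the reverse for log-convex. This is the opposite of the corollary's literal parenthetical wording, whose two directions appear to be transposed in the source: the direction the lemma actually yields is the one consistent with Table~\ref{tbl:result_summary} (cancel early is cost-optimal for log-concave, keep redundancy for log-convex), with \Cref{coro:early_cancel_E_T_trend} via \Cref{coro:high_traffic_comp}, and with the canonical log-concave example $X\sim\SExp(\Delta,\mu)$, where $\E{X}=\Delta+1/\mu$ while $n\E{X_{1:n}}=n\Delta+1/\mu$. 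Note also that your consistency check with $X\sim\Exp(\mu)$ is direction-blind: the exponential is simultaneously log-concave and log-convex and collapses both inequalities to equality, so it can never detect a swapped pairing; a strictly one-sided example such as the shifted exponential is needed. In short: right lemma, right endpoints, same argument as the paper --- but you should state the final inequalities explicitly rather than asserting they match the statement, because as printed they do not, and the discrepancy lies in the statement's wording rather than in your argument.
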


In the low $\lambda$ regime, the $(n,1)$ fork-join system gives lower $\E{T}$ than $(n,1)$ fork-early-cancel because of higher diversity due to redundant tasks. By \Cref{coro:high_traffic_comp}, in the high $\lambda$ regime, the system with lower $\E{C}$ has lower expected latency. 

\begin{coro}
\label{coro:early_cancel_E_T_trend}
If $\bar{F}_X$ is log-concave, early cancellation gives higher $\E{T}$ than $(n,1)$ fork-join when $\lambda$ is small, and lower in the high $\lambda$ regime. If $\bar{F}_X$ is log-convex, then early cancellation gives higher $\E{T}$ for both low and high $\lambda$.
\end{coro}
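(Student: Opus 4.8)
The plan is to handle the two traffic regimes separately, since the comparison is driven by completely different mechanisms at low and high $\lambda$. At low load the queues are essentially empty and only the \emph{diversity} gained by retaining redundant tasks matters; at high load the queues dominate and, by \Cref{coro:high_traffic_comp}, the system with the smaller $\E{C}$ wins. Crucially, the low-$\lambda$ comparison turns out to be distribution-agnostic, so it settles the ``small $\lambda$'' claim for both the log-concave and the log-convex case at once, while the log-concavity of $\bar{F}_X$ enters only through the cost comparison in the high-$\lambda$ analysis.

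For the low-$\lambda$ regime I would take $\lambda \to 0$ in the two latency expressions of \Cref{thm:rep_queueing} and \Cref{thm:rep_queueing_early_cancel}. In the fork-join system (an $M/G/1$ queue with service $X_{1:n}$) the Pollaczek--Khinchine term vanishes and $\E{T} \to \E{X_{1:n}}$; in the fork-early-cancel system (an $M/G/n$ queue with service $X$) an arriving job finds an idle server and $\E{T} \to \E{X}$. Since $X_{1:n} = \min(X_1,\dots,X_n) \le X_1$ pointwise, taking expectations gives $\E{X_{1:n}} \le \E{X}$, with strict inequality for $n>1$ and non-degenerate $X$. Hence early cancellation yields the strictly larger $\E{T}$ at small $\lambda$, irrespective of log-concavity, which is exactly the ``higher $\E{T}$ for small $\lambda$'' assertion in both cases.

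For the high-$\lambda$ regime I would invoke \Cref{coro:high_traffic_comp}: among symmetric policies the one minimizing $\E{C}$ attains the lowest $\E{T}$ as $\lambda \to \lambda^{*}_{max}$. The two costs are $\E{C} = n\E{X_{1:n}}$ (fork-join, \eqref{eqn:E_C_rep_queueing}) and $\E{C} = \E{X}$ (early-cancel, \eqref{eqn:rep_case_cost_early_cancel}), and the latter equals $r\E{X_{1:r}}$ evaluated at $r=1$ while the former is its value at $r=n$. Applying \Cref{lem:r_E_X_1_r_trend}: if $\bar{F}_X$ is log-concave then $r\E{X_{1:r}}$ is non-decreasing, so $\E{X} \le n\E{X_{1:n}}$ and early cancellation has the smaller cost, hence the lower $\E{T}$ in heavy traffic; if $\bar{F}_X$ is log-convex then $r\E{X_{1:r}}$ is non-increasing, so $\E{X} \ge n\E{X_{1:n}}$ and early cancellation has the larger cost, hence the higher $\E{T}$. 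Equivalently, since $\lambda_{max} = n/\E{C}$, the two systems have different service capacities, and the one with the larger $\E{C}$ has its latency diverge first.

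The main obstacle is making the two ``regime'' limits precise rather than heuristic. The $\lambda \to 0$ limit is clean from the explicit $M/G/1$ formula, but $\E{T^{M/G/n}}$ has no closed form, so I would argue the low-load limit directly from the fact that an arriving job sees all $n$ servers idle with probability tending to $1$, giving $\E{T} \to \E{X}$; the approximation \eqref{eqn:rep_case_latency_early_cancel} is consistent with this but I would not rely on it. On the high-$\lambda$ side the delicate point is that \Cref{coro:high_traffic_comp} compares systems \emph{at their respective capacities}: one must note that the strict cost ordering yields strictly different $\lambda^{*}_{max}$, so that near the smaller capacity the higher-cost system's latency blows up while the lower-cost system's stays finite, which is the precise sense in which early cancellation gives ``lower'' or ``higher'' $\E{T}$ in the high-$\lambda$ regime.
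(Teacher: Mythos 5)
Your proposal is correct and follows essentially the same route as the paper: the low-$\lambda$ claim comes from the diversity argument ($\E{X_{1:n}} \leq \E{X}$, independent of log-concavity), and the high-$\lambda$ claim comes from comparing $\E{C} = n\E{X_{1:n}}$ against $\E{C} = \E{X}$ via \Cref{lem:r_E_X_1_r_trend} and then invoking \Cref{coro:high_traffic_comp}. The paper packages the cost comparison as \Cref{coro:early_cancel_E_C_trend} and states the limits more informally, but the substance — including your observation that the higher-cost system's latency diverges first because its $\lambda_{max}$ is smaller — is exactly the paper's argument.
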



\begin{figure}[t]
    \begin{minipage}[t]{0.48\linewidth}
    \centering
    \includegraphics[width=3.2in]{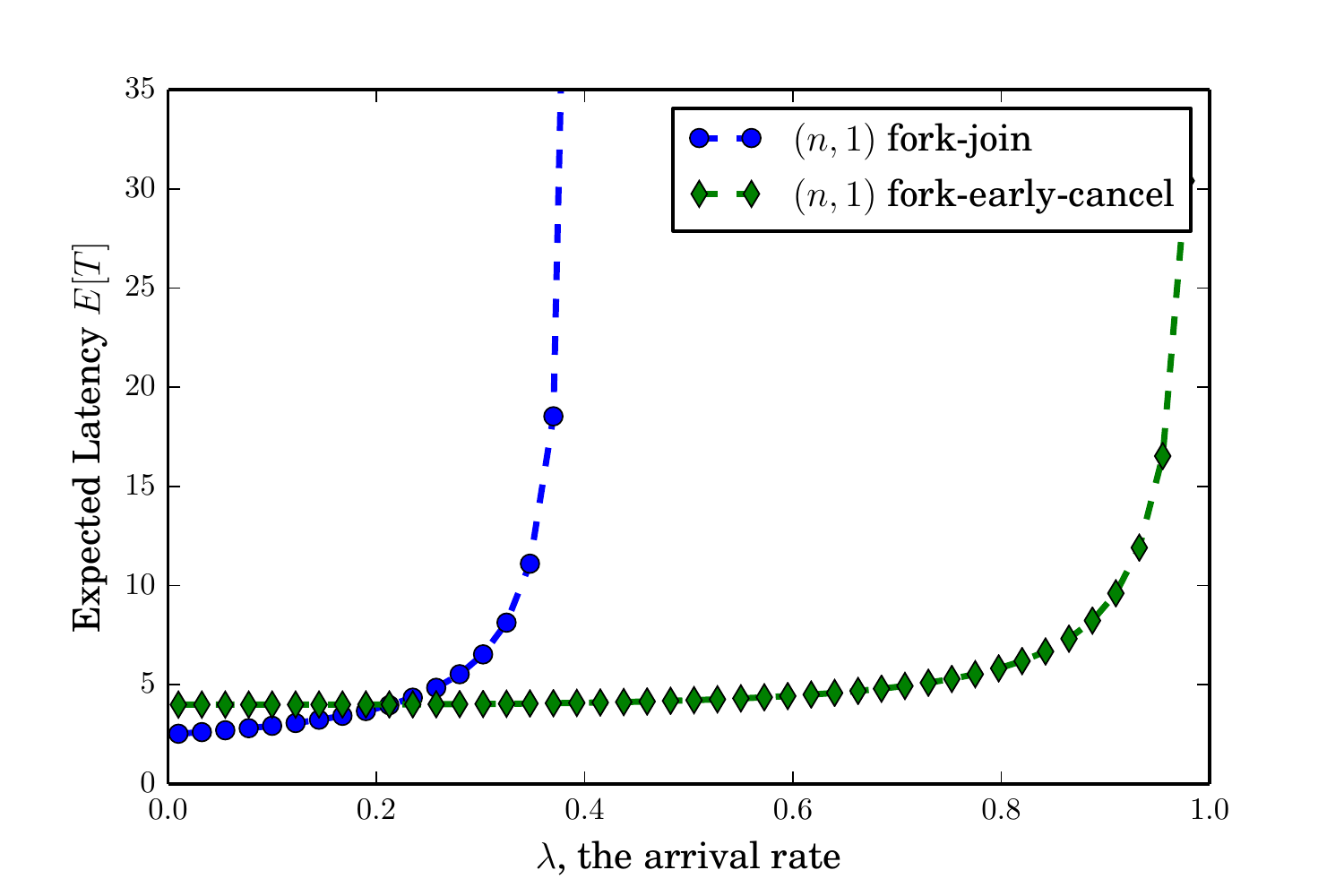}
    \caption{For the $(4,1)$ system with service time $X \sim \SExp(2, 0.5)$ which is log-concave, early cancellation is better in the high $\lambda$ regime, as given by \Cref{coro:early_cancel_E_T_trend}. \label{fig:normal_early_vs_lambda_log_concave}}
    \end{minipage}
    \hspace{0.04\linewidth}
    \begin{minipage}[t]{0.48\linewidth}
\centering
\includegraphics[width=3.2in]{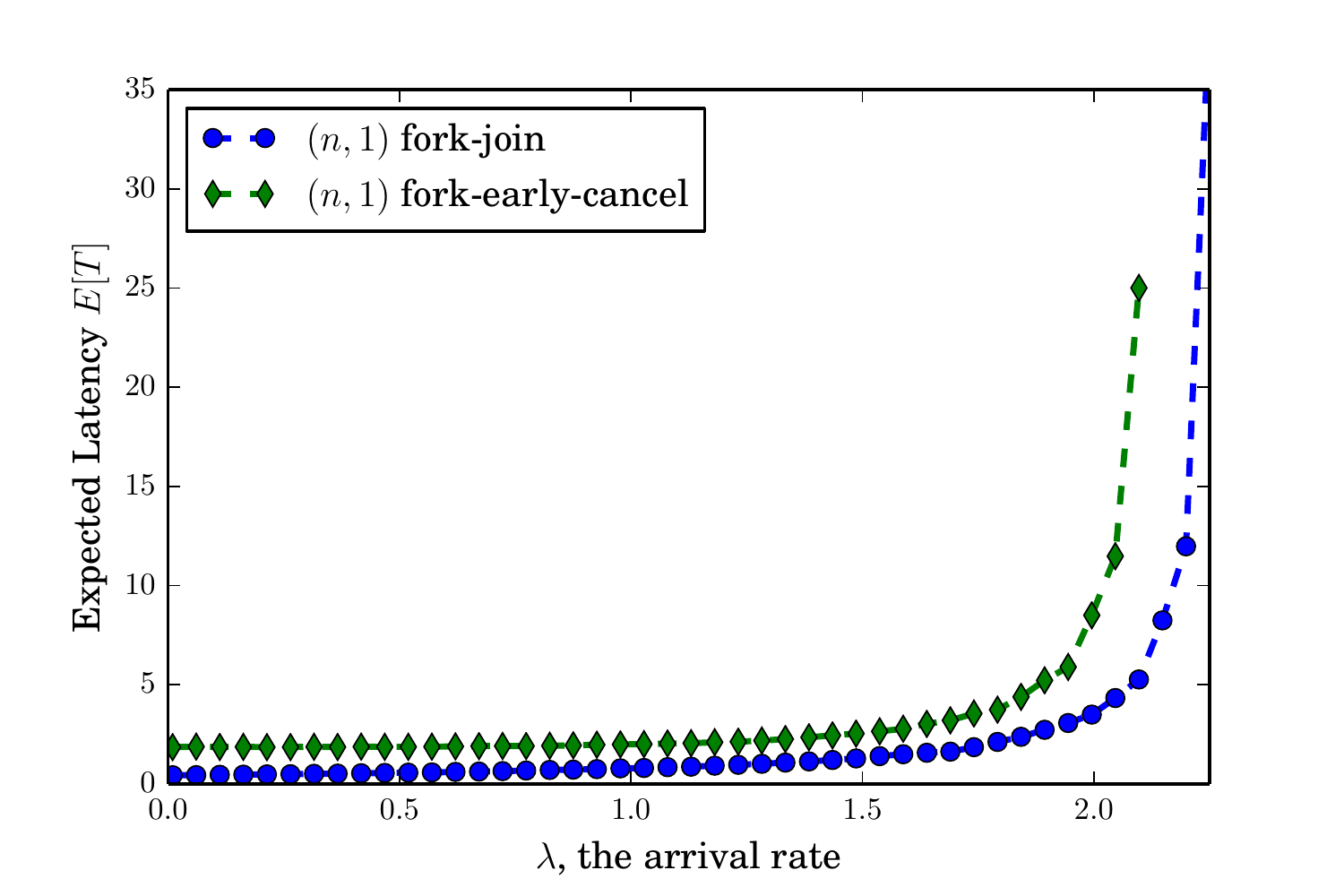}
\caption{ For the $(4,1)$ system with $X \sim \HyperExp(0.1, 1.5, 0.5)$, which is log-convex, early cancellation is worse in both low and high $\lambda$ regimes, as given by \Cref{coro:early_cancel_E_T_trend}. \label{fig:normal_early_vs_lambda_log_convex}}
    \end{minipage}
\end{figure}

%
%

%
%

Fig.~\ref{fig:normal_early_vs_lambda_log_concave} and Fig.~\ref{fig:normal_early_vs_lambda_log_convex} illustrate \Cref{coro:early_cancel_E_T_trend}. Fig.~\ref{fig:normal_early_vs_lambda_log_concave} shows a comparison of $\E{T}$ with and without early cancellation of redundant tasks for the $(4,1)$ system with service time $X \sim \SExp(2,0.5)$. We observe that early cancellation gives lower $\E{T}$ in the high $\lambda$ regime. In Fig.~\ref{fig:normal_early_vs_lambda_log_convex} we observe that when $X$ is $\HyperExp(0.1, 1.5, 0.5)$ which is log-convex, early cancellation is worse for both small and large $\lambda$.

In general, early cancellation is better when $X$ is less variable (lower coefficient of variation). For example, a comparison of $\E{T}$ with $(n,1)$ fork-join and $(n,1)$ fork-early-cancel systems as $\Delta$, the constant shift of service time $\SExp(\Delta, \mu)$ varies indicates that early cancellation is better for larger $\Delta$. When $\Delta$ is small, there is more randomness in the service time of a task, and hence keeping the redundant tasks running gives more diversity and lower $\E{T}$. But as $\Delta$ increases, task service times are more deterministic due to which it is better to cancel the redundant tasks early.

\section{PARTIAL FORKING ($k=1$ CASE)}
\label{sec:partial_fork}
For applications with a large number of servers $n$, full forking of jobs to all servers can be expensive in terms of the network cost of issuing and canceling the tasks. In this section we analyze the $k=1$ case of the $(n,r,k)$ fork-join system, where an incoming job is forked to some $r$ out $n$ servers and we wait for any $1$ task to finish. The $r$ servers are chosen using a symmetric policy (\Cref{defn:symmetric_forking}). Some examples of symmetric policies are:
\begin{enumerate}
\item \textit{Group-based random: } This policy holds when $r$ divides $n$. The $n$ servers are divided into $n/r$ groups of $r$ servers each. A job is forked to one of these groups, chosen uniformly at random. 
\item \textit{Uniform Random: } A job is forked to any $r$ out of $n$ servers, chosen uniformly at random.
\end{enumerate}

Fig.~\ref{fig:partial_fork} illustrates the $(4,2,1)$ partial-fork-join system with the group-based random and the uniform-random policies. In the sequel, we develop insights into the best $r$ and the choice of servers for a given service time distribution $F_X$. 

\begin{rem}[Relation to Power-of-$r$ Scheduling]
Power-of-$r$ scheduling \cite{powerof2} is a well-known policy in multi-server systems. It chooses $r$ out of the $n$ servers at random and assigns an incoming task to the shortest queue among them. A major advantage of the power-of-$r$ policy is that even with $r << n$, the latency achieved by it is close to the join-the-shortest queue policy (equivalent to power-of-$r$ with $r = n$). 

The $(n,r,1)$ partial-fork-join system with uniform random policy also chooses $r$ queues at random. However, instead of choosing the shortest queue, it creates replicas of the task at all the queues. The replicas help find the queue with the least work left, which gives better load balancing than joining the shortest queue. But unlike power-of-$r$, servers might spend redundant time on replicas that will eventually be canceled.


\end{rem}

\begin{figure}[t]
\centering
\begin{subfigure}[t]{0.5\linewidth}
    \centering
   \includegraphics[width=1.80in]{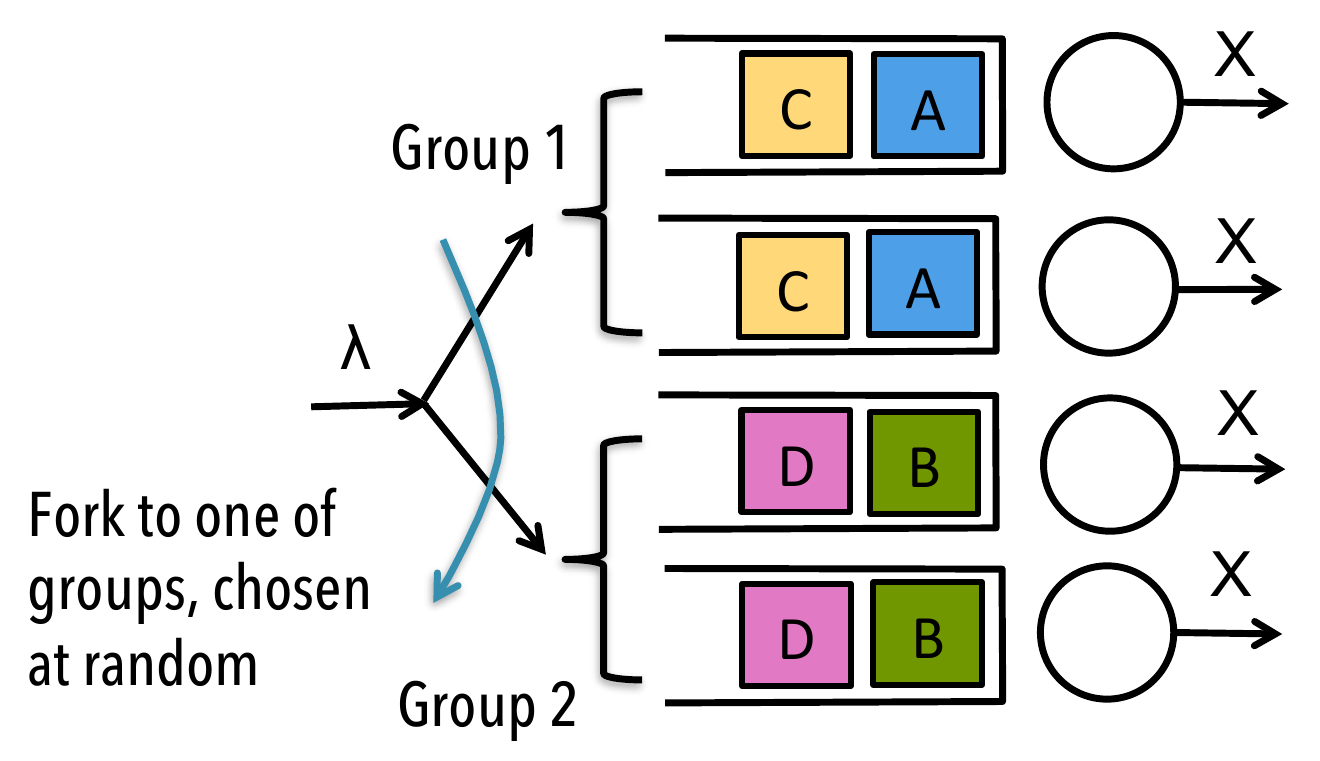}
	\caption{Group-based random}
\end{subfigure}
~
\begin{subfigure}[t]{0.4 \linewidth}
    \centering
   \includegraphics[width=1.55in]{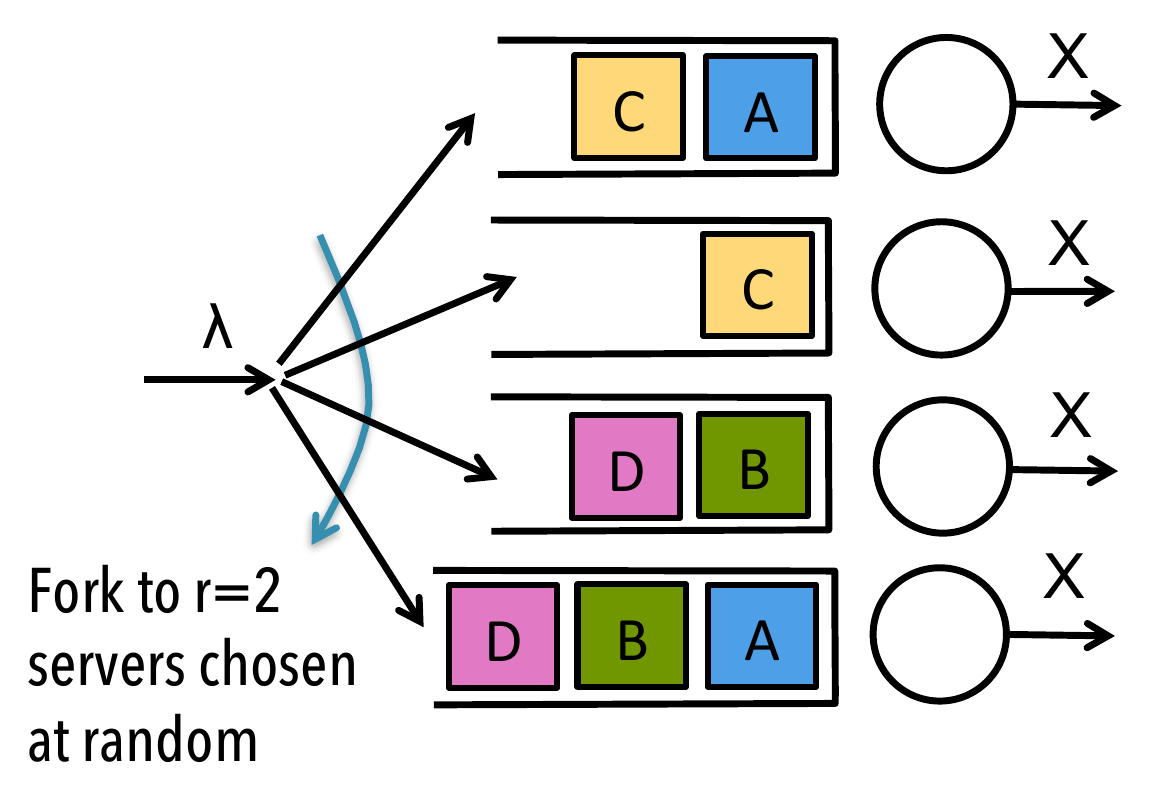}
\caption{Uniform random}
\end{subfigure}
\caption{$(4,2,1)$ partial-fork-join system, where each job is forked to $r=2$ servers, chosen according to the group-based random or uniform random policies. \label{fig:partial_fork}}
\end{figure}

\subsection{Latency-Cost Analysis}
%
In the group-based random policy, the job arrivals are split equally across the groups, and each group behaves like an independent $(r,1)$ fork-join system. Thus, the expected latency and cost follow from \Cref{thm:rep_queueing} as given in \Cref{lem:latency_cost_group_based} below.

\begin{lem}[Group-based random]
\label{lem:latency_cost_group_based}
The expected latency and cost when each job is forked to one of $n/r$ groups of $r$ servers each are given by
\begin{align}
\E{T} &= \E{X_{1:r}}  + \frac{\lambda r \E{X_{1:r}^2}}{2(n - \lambda r \E{X_{1:r}})} \label{eqn:E_T_group_based} \\
\E{C} &= r \E{X_{1:r}} \label{eqn:E_C_group_based}
\end{align}
\end{lem}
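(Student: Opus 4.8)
The plan is to reduce the group-based policy to the already-established single-system result in \Cref{thm:rep_queueing}. The key structural observation is that, under the group-based random policy, an incoming job is routed to exactly one of the $n/r$ groups, chosen uniformly at random and independently of everything else. Since the overall arrival process is Poisson with rate $\lambda$, independent uniform splitting produces, by the standard thinning property of Poisson processes, an independent Poisson arrival stream of rate $\lambda r / n$ at each group. Crucially, each group consists of $r$ servers that receive \emph{only} the tasks forked to that group, so the groups do not interact at all: each group is a self-contained $(r,1)$ fork-join system operating in isolation.

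Given this decoupling, I would invoke \Cref{thm:rep_queueing} directly with the substitutions $n \mapsto r$ (the group size) and $\lambda \mapsto \lambda r / n$ (the thinned arrival rate). The cost expression is immediate and does not even depend on the queueing: a job served by its group occupies $r$ servers, each for time $X_{1:r}$, so $\E{C} = r\,\E{X_{1:r}}$, matching \eqref{eqn:E_C_group_based}. For the latency, \eqref{eqn:E_T_rep_queueing} with the substituted parameters gives
\begin{align}
\E{T} = \E{X_{1:r}} + \frac{(\lambda r/n)\,\E{X_{1:r}^2}}{2\left(1 - (\lambda r/n)\,\E{X_{1:r}}\right)}.
\end{align}
A short algebraic simplification of the second term—multiplying numerator and denominator by $n$—turns $1 - (\lambda r/n)\E{X_{1:r}}$ into $n - \lambda r\,\E{X_{1:r}}$ scaled by $n$, and cancels to yield exactly the form in \eqref{eqn:E_T_group_based}. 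This manipulation is routine.

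The only genuine step requiring care is justifying that each group really does behave as an \emph{independent} $M/G/1$-equivalent system in the sense required by \Cref{thm:rep_queueing}. Two things must hold: first, that the thinned per-group arrival process is Poisson (this is the thinning lemma and needs the Poisson assumption flagged in the footnote to \Cref{defn:fork_join}); and second, that the $(n,1)$-to-$M/G/1$ equivalence argument from the proof of \Cref{thm:rep_queueing}—namely that all $r$ tasks of a job start simultaneously and the group behaves like a single $M/G/1$ queue with service time $X_{1:r}$—carries over verbatim when $r < n$. The latter is automatic because within a group the forking is full (all $r$ group servers receive a task), so the simultaneous-start argument applies unchanged. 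I expect this independence-and-Poisson-thinning justification to be the main (and essentially the only) conceptual obstacle; once it is in place, the lemma follows by direct substitution into \Cref{thm:rep_queueing} with no further work.
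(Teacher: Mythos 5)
Your proposal is correct and follows essentially the same route as the paper: splitting the Poisson arrivals uniformly across the $n/r$ groups yields independent Poisson streams of rate $\lambda r/n$, each group is an independent $(r,1)$ fork-join system, and \Cref{thm:rep_queueing} applied with $n \mapsto r$, $\lambda \mapsto \lambda r/n$ gives both \eqref{eqn:E_T_group_based} and \eqref{eqn:E_C_group_based} after trivial algebra. The paper's proof is just a terser version of exactly this argument, so nothing further is needed.
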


\begin{proof}
The job arrivals are split equally across the $n/r$ groups, such that the arrival rate to each group is a Poisson process with rate $\lambda r/n$. The $r$ tasks of each job start service at their respective servers simultaneously, and thus each group behaves like an independent $(r,1)$ fork-join system with Poisson arrivals at rate $\lambda r/n$. Hence, the expected latency and cost follow from \Cref{thm:rep_queueing}. 
\end{proof}

Using \eqref{eqn:E_C_group_based} and \Cref{lem:capacity_in_terms_of_EC}, we can infer that the service capacity (maximum supported $\lambda$) for an $(n,r,1)$ system with group-based random policy is
\begin{align}
\lambda_{max} = \frac{n}{r\E{X_{1:r}}} \label{eqn:lmbda_max_group_based}
\end{align}
From \eqref{eqn:lmbda_max_group_based} we can infer that the $r$ that minimizes $r \E{X_{1:r}}$ results in the highest service capacity, and hence the lowest $\E{T}$ in the heavy traffic regime. By \Cref{lem:r_E_X_1_r_trend}, the optimal $r$ is $r=1$ ($r=n$) for log-concave (log-convex) $\bar{F}_X$. For distributions that are neither log-concave nor log-convex, an intermediate $r$ may be optimal and we can determine it using \Cref{lem:latency_cost_group_based}. For example, \Cref{fig:ET_vs_EC_group_based_var_r} shows a plot of latency versus cost as given by \Cref{lem:latency_cost_group_based} for $n=12$ servers. The task service time $X \sim \Pareto(1,2.2)$. Each job is replicated at $r$ servers according to the group-based random policy, with $r$ varying along each curve. Initially increasing $r$ reduces the latency, but beyond $r^*$, the replicas cause an increase in the queueing delay. This increase in queueing delay is more dominant for higher $\lambda$. Thus the optimal  $r^*$ decreases as $\lambda$ increases. 

\begin{figure}[t]
    \centering
\includegraphics[width=3.5in]{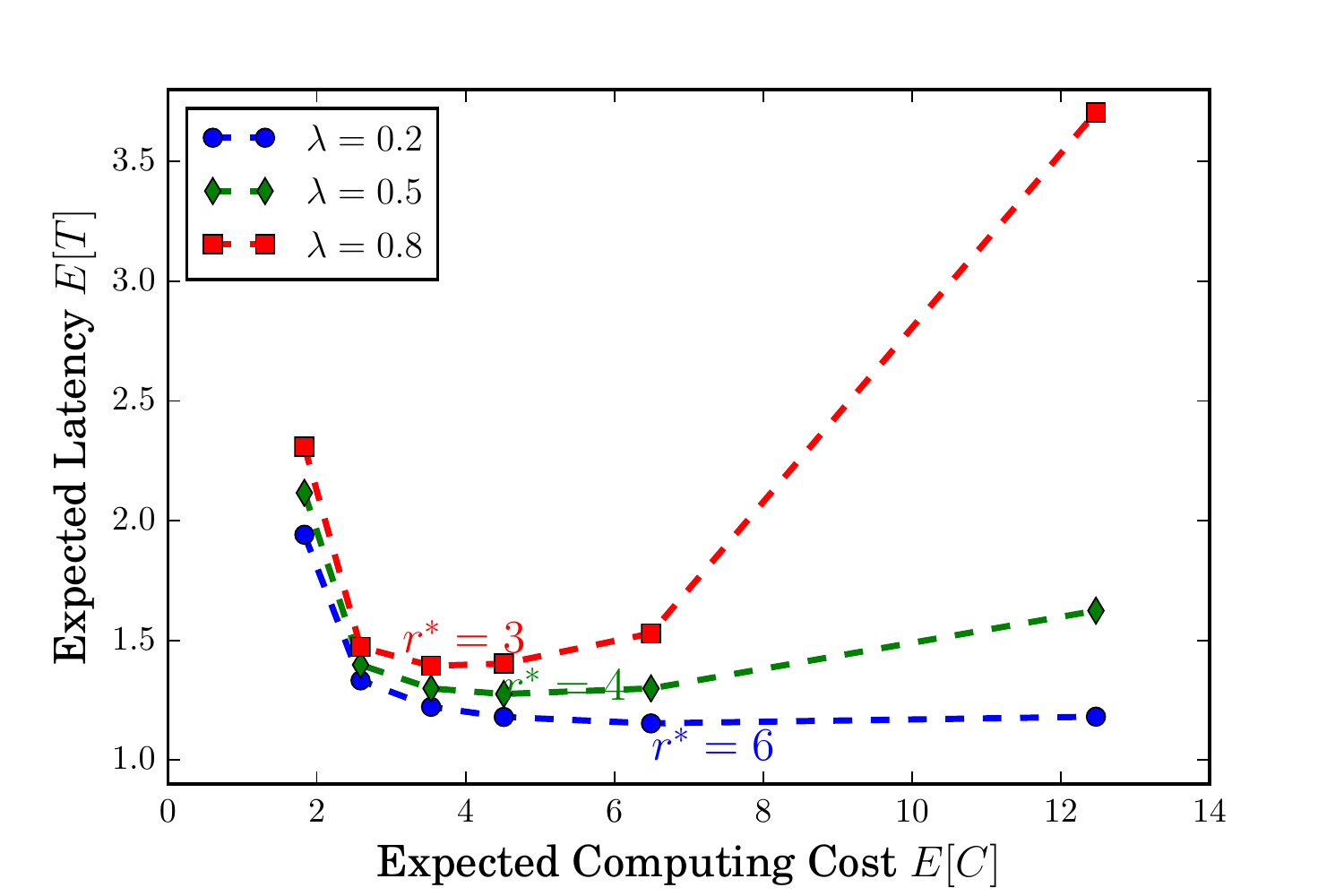}
\caption{Analytical plot of latency versus cost for $n=12$ servers. Each job is replicated at $r$ servers chosen by the group-based random policy, with $r$ increasing as $1$, $2$, $3$, $4$, $6$, and $12$ along each curve. The task service time $X \sim \Pareto(1,2.2)$. As $\lambda$ increases the replicas increase queueing delay. Thus the optimal $r^*$ that minimizes $\E{T}$ shifts downward as $\lambda$ increases. \label{fig:ET_vs_EC_group_based_var_r}}
\end{figure}

For other symmetric policies, it is difficult to get an exact analysis of $\E{T}$ and $\E{C}$ because the tasks of a job can start at different times. However, we can get bounds on $\E{C}$ depending on the log-concavity of $X$, given in \Cref{thm:E_C_r_trend} below.

\begin{thm}
\label{thm:E_C_r_trend}
Consider an $(n,r,1)$ partial-fork join system, where a job is forked into tasks at $r$ out of $n$ servers chosen according to a symmetric policy. For any relative task start times $t_i$, $\E{C}$ can be bounded as follows.
\begin{align} 
 r \E{X_{1:r}} \geq \E{C} &\geq \E{X} \quad \quad \text{if } \bar{F}_X \text{ is  log-concave} \label{eqn:E_C_bounds_log_concave}\\
\E{X} \geq \E{C} &\geq r \E{X_{1:r}}  \quad \text{if } \bar{F}_X \text{ is  log-convex} \label{eqn:E_C_bounds_log_convex}
\end{align} 
In the extreme case when $r=1$, $\E{C} = \E{X}$, and when $r = n$, $\E{C} = n \E{X_{1:n}}$. 
\end{thm}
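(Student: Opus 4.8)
The plan is to treat $\E C$ as a function of the relative start times $(t_1,\dots,t_r)=(0,t_2,\dots,t_r)$ and to show that the two configurations named in the statement are exactly the ones that maximize and minimize it. Using \eqref{eqn:C_expr} with $t_{r+1}=\dots=t_n=\infty$ (so only the $r$ forked tasks contribute), I would write $C=\sum_{i=1}^r \posfunc{S-t_i}$ and combine this with the tail formula \eqref{eqn:S_tail_dist} to obtain the closed form
\begin{align}
\E C &= \int_0^\infty n(s)\prod_{j=1}^r \bar F_X(s-t_j)\,ds, \qquad n(s)=\sum_{i=1}^r \mathbf 1[t_i\le s],
\end{align}
where $\bar F_X(x)=1$ for $x<0$ and $n(s)$ counts the servers that have started by time $s$. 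Evaluating this at the all-simultaneous configuration $t_1=\dots=t_r=0$ gives $n(s)\equiv r$ and recovers $\E C = r\E{X_{1:r}}$, while letting $t_2,\dots,t_r\to\infty$ leaves a single active server and gives $\E C=\E X$; these are the two endpoints of the claimed inequalities.

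The heart of the argument is a monotonicity lemma: I would show $\E C$ is non-increasing (log-concave $\bar F_X$) or non-decreasing (log-convex $\bar F_X$) in whichever start time is currently the largest. Fixing all coordinates but a largest one $t_k$ and differentiating the display above gives, after the substitution $v=s-t_k$,
\begin{align}
\frac{\partial \E C}{\partial t_k} &= \int_0^\infty n(v+t_k)\,f_X(v)\,P_{-k}(v+t_k)\,dv - P_{-k}(t_k),
\end{align}
where $P_{-k}(s)=\prod_{j\ne k}\bar F_X(s-t_j)$. Because $t_k$ is largest, $n(v+t_k)=r$ throughout and every factor of $P_{-k}(v+t_k)$ carries a nonnegative offset $t_k-t_j\ge0$, so the submultiplicativity of $\bar F_X$ implied by log-concavity (\Cref{propty:sub_super_additivity}) yields $P_{-k}(v+t_k)\le \bar F_X(v)^{r-1}P_{-k}(t_k)$. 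Substituting this bound and using the elementary identity $\int_0^\infty f_X(v)\bar F_X(v)^{r-1}\,dv=1/r$ collapses the right-hand side to $P_{-k}(t_k)\left(r\cdot\tfrac1r-1\right)=0$, so $\partial \E C/\partial t_k\le 0$; for log-convex $\bar F_X$ supermultiplicativity reverses every inequality and gives $\partial \E C/\partial t_k\ge 0$.

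With this lemma both bounds follow by sweeping from the top. In the log-concave case, pushing the current largest coordinate to $\infty$ only decreases $\E C$ and removes one server, so induction on $r$ (base case $r=1$, where $\E C=\E X$) gives $\E C\ge \E X$; conversely, sliding the largest coordinate (or a tied block of largest coordinates) down toward the rest only increases $\E C$, and iterating until all start times coincide at $0$ gives $\E C\le r\E{X_{1:r}}$. The log-convex bounds are identical with the inequalities reversed. I expect the main obstacle to be precisely the restriction to the largest coordinate: the clean submultiplicative bound on $P_{-k}$ fails once some $t_j>t_k$, so $\E C$ need not be monotone in a non-maximal start time, and every step of the induction/sweep must act on a weakly largest coordinate. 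Checking that this still reaches both the all-simultaneous and the fully-staggered configurations, and that the limit $t_k\to\infty$ genuinely reduces to the $(r-1)$-server system, is the step needing the most care. A minor technical caveat is the assumed differentiability of $\bar F_X$; when $X$ has no density the same monotonicity can be recovered by a direct pathwise comparison of two start-time vectors instead of by differentiation.
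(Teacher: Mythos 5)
Your proposal is correct, but it proves the theorem by a genuinely different route than the paper. The paper's proof (Appendix~\ref{sec:rep_queueing_proofs}) is purely algebraic: it expands $\E{C}$ as the sum of integrals in \eqref{eqn:E_C_simp}, rewrites each as a telescoping difference of integrals over $[0,\infty)$, applies the change of variables $x = x'/u$, and regroups so that the leading term is exactly $\E{X}$ (respectively, after first invoking \Cref{propty:scaling}, exactly $r\E{X_{1:r}}$); the sign of every remaining term is then controlled by the two product-comparison inequalities of \Cref{lem:E_C_E_X_lemma} and \Cref{lem:E_C_r_E_X_1_r_lemma}. You instead prove a variational statement the paper never establishes: that $\E{C}$ is monotone in a weakly largest start time, via the derivative identity
\begin{align}
\frac{\partial \E{C}}{\partial t_k} &= r\int_0^\infty f_X(v)\,P_{-k}(v+t_k)\,dv - P_{-k}(t_k),
\end{align}
the sub-multiplicativity bound $P_{-k}(v+t_k)\leq \bar{F}_X(v)^{r-1}P_{-k}(t_k)$ from \Cref{propty:sub_super_additivity}, and the exact cancellation $r\cdot\tfrac{1}{r}-1=0$ coming from $\int_0^\infty f_X(v)\bar{F}_X(v)^{r-1}dv = 1/r$; I verified this computation and it is sound, including your restriction to the largest coordinate (for a non-maximal coordinate both $n(v+t_k)=r$ and the nonnegativity of the offsets $t_k - t_j$ fail, exactly as you say). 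Your approach buys a stronger structural result---a continuum of comparisons between intermediate start-time configurations, with the all-simultaneous and fully-staggered configurations identified as the extremes---and is arguably more intuitive; its costs are the technical overhead you already flag: a density/differentiability assumption on $\bar{F}_X$ (genuinely restrictive, e.g., deterministic service time is log-concave with no density), the need to move tied blocks of largest coordinates together so each step stays in the regime where the lemma applies, and a dominated-convergence argument for the $t_k\to\infty$ limit in the induction on $r$. The paper's regrouping argument is less transparent but needs none of these: it handles an arbitrary start-time vector in one shot and never differentiates, so it applies to any log-concave or log-convex $\bar{F}_X$ without a smoothness caveat. Both proofs ultimately rest on the same two consequences of log-concavity, \Cref{propty:scaling} and \Cref{propty:sub_super_additivity}.
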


To prove \Cref{thm:E_C_r_trend} we take expectation in \eqref{eqn:C_expr}, and show that for log-concave and log-convex $\bar{F}_X$, we get the bounds in \eqref{eqn:E_C_bounds_log_concave} and \eqref{eqn:E_C_bounds_log_convex}, which are independent of the relative task start times $t_i$. 
The detailed proof is given in Appendix~\ref{sec:rep_queueing_proofs}. 

\begin{figure}
\centering
\includegraphics[width=3.5in]{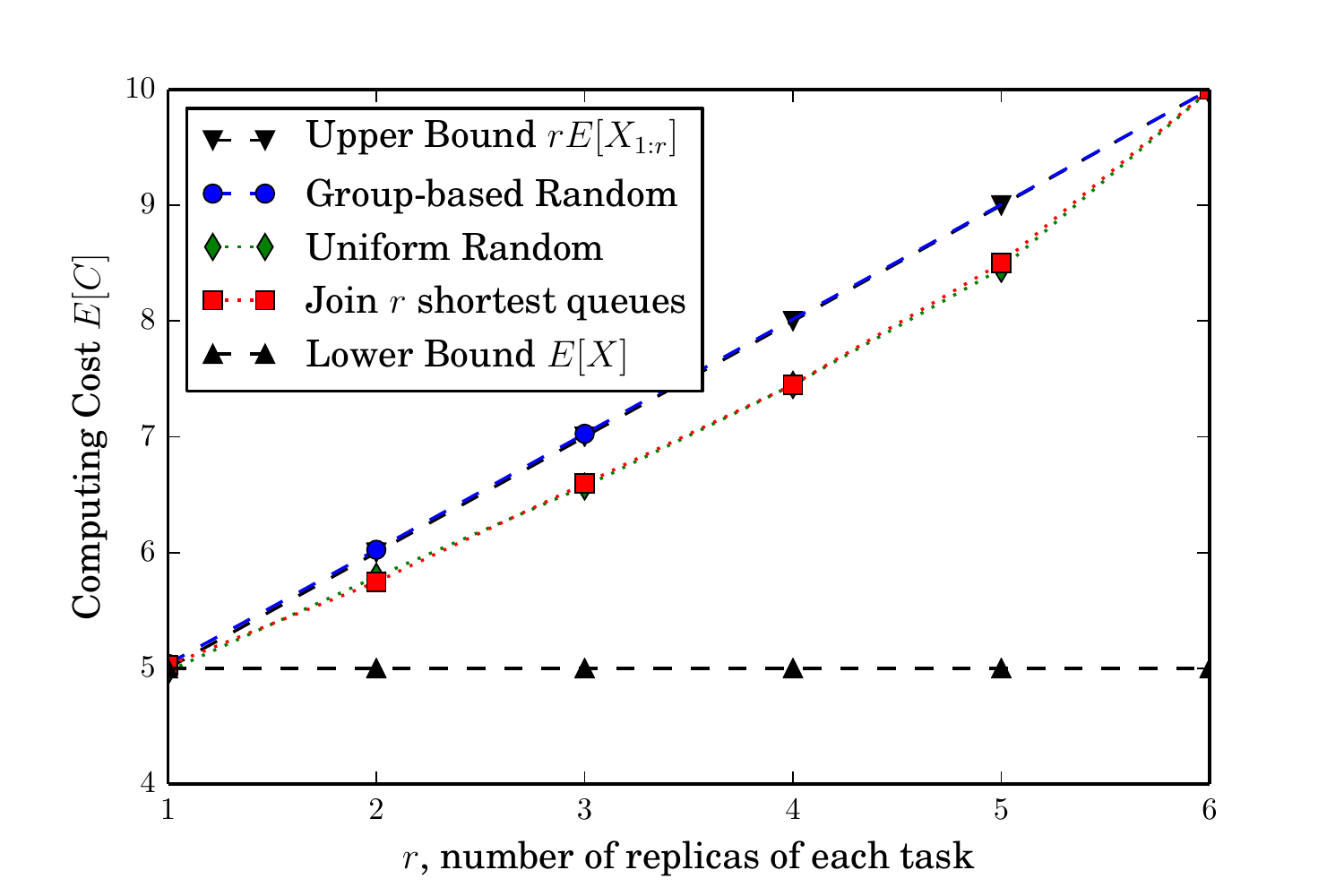}
\caption{Expected cost $\E{C}$ versus $r$ for $X \sim \ShiftedExp(1, 0.25)$, $n=6$ servers, arrival rate $\lambda = 0.5$ and different scheduling policies. The upper bound $r \E{X_{1:r}}$ is exact for the group-based random policy, and fairly tight for other policies.\label{fig:EC_bounds_shifted_exp_vs_r}}
\end{figure}

In \Cref{fig:EC_bounds_shifted_exp_vs_r} we show the bounds given by \eqref{eqn:E_C_bounds_log_concave} for log-concave distributions alongside simulation values, for different scheduling policies. The service time $X \sim \ShiftedExp(1, 0.25)$, and arrival rate $\lambda = 0.5$. Since all replicas start simultaneously with the group-based random policy, the upper bound $\E{C} \geq r \E{X_{1:r}}$ is tight for any $r$. For other scheduling policies, the bound is more loose for the policy that staggers relative start times of replicas to a greater extent. 


\subsection{Optimal value of $r$}
We can use the bounds in \Cref{thm:E_C_r_trend} to gain insights into choosing the best $r$ when $\bar{F}_X$ is log-concave or log-convex. In particular, we study two extreme traffic regimes: low traffic ($\lambda \rightarrow 0$) and heavy traffic ($\lambda \rightarrow \lambda^{*}_{max}$), where $\lambda^{*}_{max}$ is the service capacity of the system introduced in \Cref{defn:serv_capacity}. 


\begin{coro}[Expected Cost vs. $r$]
\label{coro:EC_vs_r}
For a system of $n$ servers with symmetric forking of each job to $r$ servers, $r=1$ ($r=n$) minimizes the expected cost $\E{C}$ when $\bar{F}_X$ is log-concave (log-convex).
\end{coro}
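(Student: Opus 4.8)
The plan is to read the corollary off directly from the bounds in \Cref{thm:E_C_r_trend}, combined with the monotonicity of $r\E{X_{1:r}}$ supplied by \Cref{lem:r_E_X_1_r_trend}. The guiding observation is that, in each regime, one of the two bounds in \Cref{thm:E_C_r_trend} is simultaneously (i) a valid bound on $\E{C}$ for \emph{every} $r$, and (ii) attained exactly at the $r$ that the corollary claims to be optimal.

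First I would treat the log-concave case. By \eqref{eqn:E_C_bounds_log_concave}, any symmetric policy forking to $r$ servers satisfies $\E{C} \geq \E{X}$, and this lower bound is the same constant $\E{X}$ for all $r$. The extreme case recorded in \Cref{thm:E_C_r_trend} gives $\E{C} = \E{X}$ when $r=1$. Writing $\E{C}(r)$ for the cost under forking to $r$ servers, this yields $\E{C}(r) \geq \E{X} = \E{C}(1)$ for every $r$, so $r=1$ minimizes $\E{C}$.

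Next I would treat the log-convex case. By \eqref{eqn:E_C_bounds_log_convex}, $\E{C} \geq r\E{X_{1:r}}$ for every $r$. By \Cref{lem:r_E_X_1_r_trend}, $r\E{X_{1:r}}$ is non-increasing in $r$ when $\bar{F}_X$ is log-convex, so $r\E{X_{1:r}} \geq n\E{X_{1:n}}$ for all $r \leq n$. Chaining these gives $\E{C}(r) \geq n\E{X_{1:n}}$, and the extreme case in \Cref{thm:E_C_r_trend} gives $\E{C}(n) = n\E{X_{1:n}}$. Hence $\E{C}(r) \geq \E{C}(n)$ for all $r$, so $r=n$ minimizes $\E{C}$.

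The only point requiring care is pairing the correct bound with the correct extreme value: in the log-concave case the relevant fact is the $r$-independent lower bound $\E{X}$, attained at $r=1$, whereas in the log-convex case it is the $r$-dependent lower bound $r\E{X_{1:r}}$, whose monotonicity from \Cref{lem:r_E_X_1_r_trend} places its minimum at $r=n$, where it is attained exactly. Since the substantive inequalities are already established in \Cref{thm:E_C_r_trend}, I expect no genuine obstacle; the corollary is essentially a repackaging of those bounds.
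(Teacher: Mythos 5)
Your proof is correct, and it draws on the same two results the paper has just established (\Cref{thm:E_C_r_trend} and \Cref{lem:r_E_X_1_r_trend}); however, it is actually more complete than the paper's own justification. The paper disposes of this corollary with a single line citing \Cref{lem:r_E_X_1_r_trend}, i.e., the monotonicity of $r\E{X_{1:r}}$ in $r$. That argument is airtight only when the cost literally equals $r\E{X_{1:r}}$, which holds for policies (such as group-based random) where all $r$ replicas start service simultaneously; for a general symmetric policy the cost depends on the relative task start times $t_i$ and need not equal $r\E{X_{1:r}}$. Your chaining handles exactly this generality: in the log-concave case you compare every policy's cost against the start-time-independent lower bound $\E{X}$ from \eqref{eqn:E_C_bounds_log_concave}, attained at $r=1$; in the log-convex case you chain the lower bound $r\E{X_{1:r}}$ from \eqref{eqn:E_C_bounds_log_convex} through its monotonicity (\Cref{lem:r_E_X_1_r_trend}) down to $n\E{X_{1:n}}$, attained at $r=n$. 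The one point worth making explicit in your write-up is why the extreme values are attained: $r=1$ trivially gives $\E{C}=\E{X}$, and $r=n$ forces all tasks of a job to start simultaneously (as in \Cref{thm:rep_queueing}), giving $\E{C}=n\E{X_{1:n}}$. So: correct, same toolbox, but your version closes a small gap that the paper's one-line proof leaves to the reader.
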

The proof follows from \Cref{lem:r_E_X_1_r_trend}, $r \E{X_{1:r}}$ is non-decreasing (non-increasing) with $r$ for log-concave (log-convex) $\bar{F}_X$.
%
%
%
%

\begin{lem}[Expected Latency vs. $r$]
\label{lem:latency_vs_r}
In the low-traffic regime, forking to all servers ($r =n$) gives the lowest $\E{T}$ for any service time distribution $F_X$. In the heavy traffic regime, $r=1$ ($r=n$) gives lowest $\E{T}$ if $\bar{F}_X$ is log-concave (log-convex).
\end{lem}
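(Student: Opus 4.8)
The plan is to handle the two traffic regimes separately, since each reduces to a fact already established in the excerpt.

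For the low-traffic regime, I would first argue that as $\lambda \to 0$ the queueing contribution to $\E{T}$ vanishes. Under any symmetric policy, the probability that an arriving job finds any of its $r$ chosen servers busy tends to zero as $\lambda \to 0$, so with probability approaching one all $r$ tasks begin service simultaneously (all relative start times $t_i = 0$). In this limit the latency of the job is exactly $S = X_{1:r}$ by \eqref{eqn:S_tail_dist}, and hence $\E{T} \to \E{X_{1:r}}$. To conclude that $r = n$ is optimal, I would use the tail distribution $\Pr(X_{1:r} > x) = \bar{F}_X(x)^r$, which is non-increasing in $r$ because $\bar{F}_X(x) \in [0,1]$; thus $X_{1:r}$ is stochastically non-increasing in $r$, and so $\E{X_{1:r}}$ is non-increasing in $r$. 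Therefore forking to all $n$ servers minimizes $\E{T}$, and this argument makes no use of log-concavity, so it holds for any $F_X$.

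For the heavy-traffic regime, the result follows immediately from two earlier statements. By \Cref{coro:high_traffic_comp}, the redundancy strategy minimizing $\E{C}$ achieves the lowest $\E{T}$ as $\lambda \to \lambda^{*}_{max}$. By \Cref{coro:EC_vs_r}, $\E{C}$ is minimized at $r=1$ for log-concave $\bar{F}_X$ and at $r=n$ for log-convex $\bar{F}_X$ (which in turn rests on the monotonicity of $r\E{X_{1:r}}$ from \Cref{lem:r_E_X_1_r_trend}). Combining these two facts yields exactly the claimed optimal choices of $r$ in heavy traffic, with no further computation required.

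The main obstacle I anticipate is making the low-traffic limit rigorous for arbitrary symmetric policies, since for policies other than group-based random the relative start times $t_i$ need not be zero and the tasks of a single job interact across shared queues. The cleanest route is to sandwich $\E{T}$ between the no-queue latency $\E{X_{1:r}}$ and a correction term that vanishes with the server utilization $\rho = \lambda \E{C}/n \to 0$; the monotonicity of $\E{X_{1:r}}$ in $r$ then transfers to the limiting latency. By contrast, the heavy-traffic part is essentially a corollary of the cost analysis already carried out and demands no new work.
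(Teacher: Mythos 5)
Your proof is correct and follows essentially the same route as the paper: for low traffic the paper likewise argues that queueing delay vanishes so $\E{T} \to \E{X_{1:r}}$, which is non-increasing in $r$ for any $F_X$, and for heavy traffic it likewise invokes \Cref{coro:high_traffic_comp} to reduce latency minimization to cost minimization. The only cosmetic difference is that the paper justifies the cost step directly from \Cref{thm:E_C_r_trend}, noting that $r=1$ (resp.\ $r=n$) attains the lower bound $\E{X}$ (resp.\ $n\E{X_{1:n}}$) with equality, whereas you cite \Cref{coro:EC_vs_r}, which packages the same facts.
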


\begin{proof}
In the low traffic regime with $\lambda \rightarrow 0$, the waiting time in queue tends to zero. Thus all replicas of a task start service at the same time, irrespective of the scheduling policy. Then the expected latency is $\E{T} = \E{X_{1:r}}$, which decreases with $r$. Thus, $r=n$ gives the lower $\E{T}$ for any service time distribution $F_X$.

By \Cref{coro:high_traffic_comp}, the optimal replication strategy in heavy traffic is the one that minimizes $\E{C}$. For log-convex $\bar{F}_X$, $r=n$ achieves the lower bound $\E{C} = n \E{X_{1:n}}$ in \eqref{eqn:E_C_bounds_log_convex} with equality. Thus, $r=n$ is the optimal strategy in the heavy traffic regime. For log-concave $\bar{F}_X$, $r=1$ achieves the lower bound $\E{C} = \E{X}$ in \eqref{eqn:E_C_bounds_log_concave} with equality. Thus, in heavy traffic, $r=1$ gives lowest $\E{T}$ for log-concave $\bar{F}_X$.
\end{proof}

\begin{figure}[t]
    \begin{minipage}[t]{0.48\linewidth}
    \centering
    \includegraphics[width=3.25in]{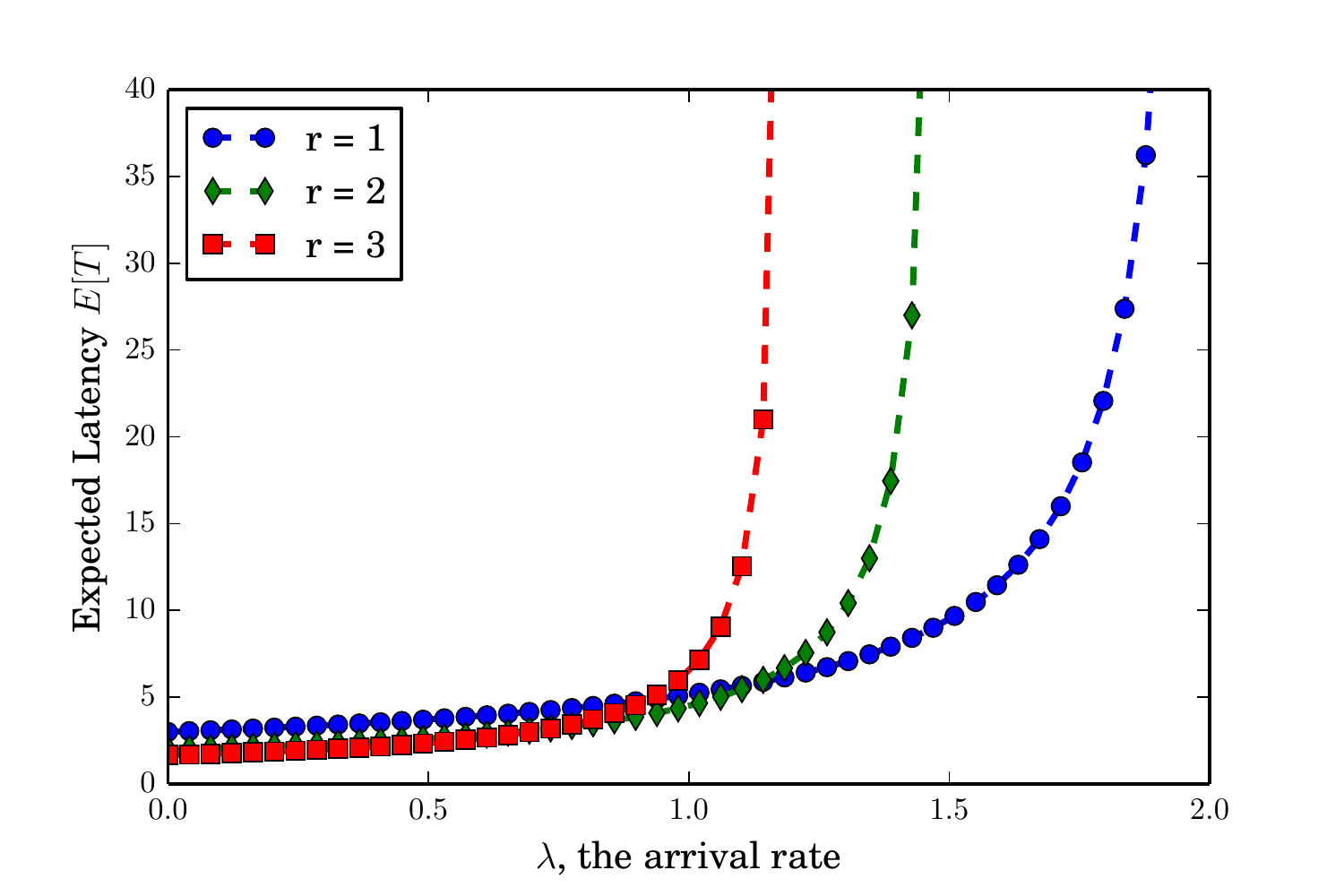}
    \caption{For $X \sim \SExp(1,0.5)$ which is log-concave, forking to less (more) servers reduces expected latency in the low (high) $\lambda$ regime. Each job is replicated at $r$ out of $n=6$ servers, chosen by the group-based random policy. \label{fig:E_T_rep_shifted_exp_vs_lambda_diff_r}}
    \end{minipage}
    \hspace{0.04\linewidth}
    \begin{minipage}[t]{0.48\linewidth}
\centering
\includegraphics[width=3.25in]{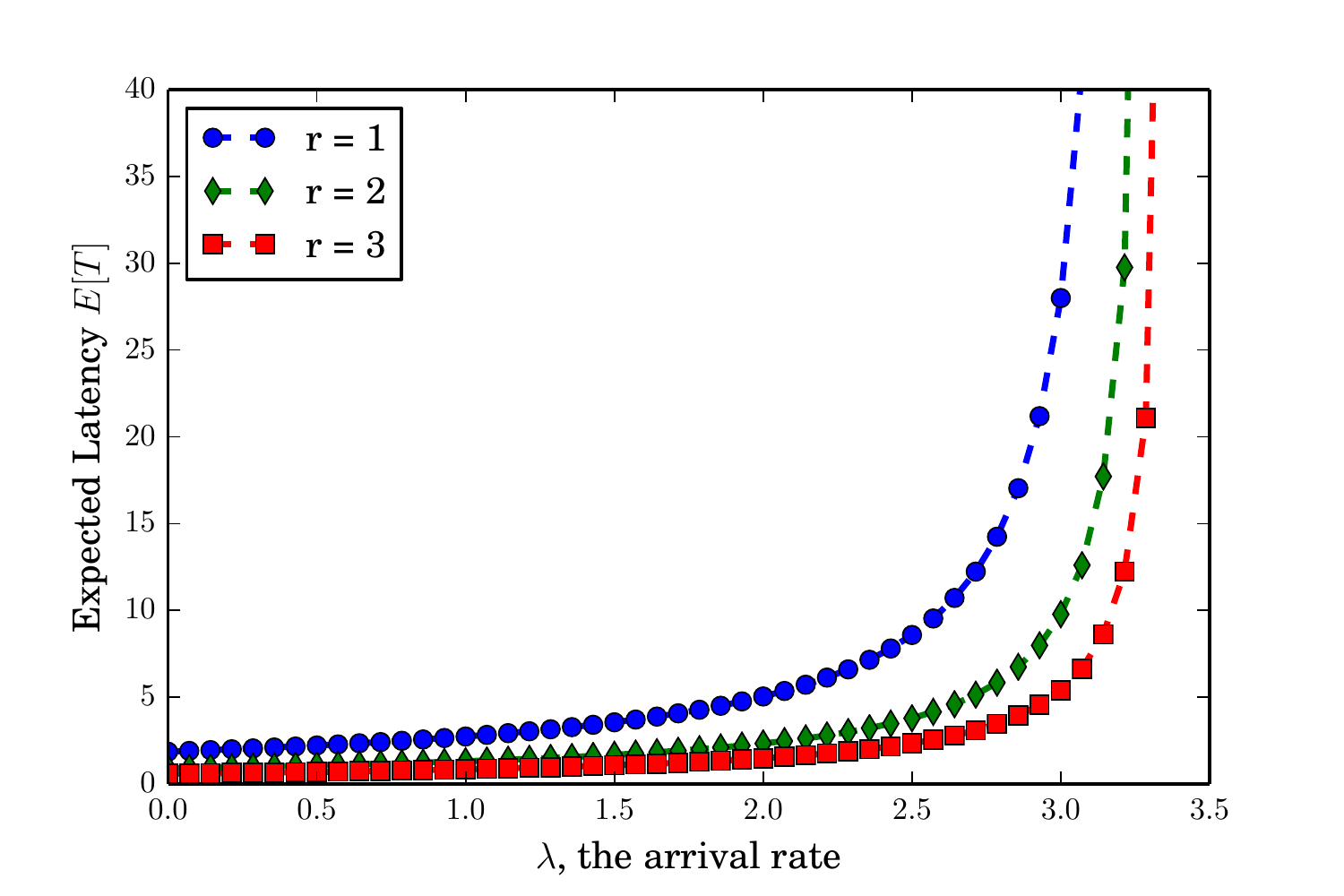}
\caption{For $X \sim \HyperExp(p, \mu_1, \mu_2)$ with $p=0.1$, $\mu_1 = 1.5$, and $\mu_2 = 0.5$ which is log-convex, larger $r$ gives lower expected latency for all $\lambda$. Each job is replicated at $r$ out of $n=6$ servers, chosen according to the group-based random policy. \label{fig:E_T_rep_hyper_exp_vs_lambda_diff_r}}
    \end{minipage}
\end{figure}

\Cref{lem:latency_vs_r} is illustrated by Fig.~\ref{fig:E_T_rep_shifted_exp_vs_lambda_diff_r} and Fig.~\ref{fig:E_T_rep_hyper_exp_vs_lambda_diff_r} where $\E{T}$ calculated analytically using \eqref{eqn:E_T_group_based} is plotted versus $\lambda$ for different values of $r$. Each job is assigned to $r$ servers chosen uniformly at random from $n = 6$ servers. In \Cref{fig:E_T_rep_shifted_exp_vs_lambda_diff_r} the service time distribution is $\SExp(\Delta, \mu)$ (which is log-concave) with $\Delta = 1$ and $\mu = 0.5$. When $\lambda$ is small, more redundancy (higher $r$) gives lower $\E{T}$, but in the high $\lambda$ regime, $r=1$ gives lowest $\E{T}$ and highest service capacity. On the other hand in Fig.~\ref{fig:E_T_rep_hyper_exp_vs_lambda_diff_r}, for a log-convex distribution $\HyperExp(p, \mu_1, \mu_2)$, in the high load regime $\E{T}$ decreases as $r$ increases.

\Cref{lem:latency_vs_r} was previously proven for new-better-than-used (new-worse-than-used) instead of log-concave (log-convex) $\bar{F}_X$ in \cite{shah_when_2013,koole_righter_2008}, using a combinatorial argument. Using \Cref{thm:E_C_r_trend}, we get an alternative, and arguably simpler way to prove this result. Note that our version is weaker because log-concavity implies new-better-than-used but the converse is not true in general (see \Cref{propty:sub_super_additivity} in Appendix~\ref{sec:tail_properties}). 


Due to the network cost of issuing and canceling the replicas, there may be an upper limit $r \leq r_{max}$ on the number of replicas.  The optimal strategy under this constraint is given by \Cref{lem:opt_under_r_max} below.

\begin{lem}[Optimal $r$ under $r \leq r_{max}$]
\label{lem:opt_under_r_max}
For log-convex $\bar{F}_X$, $r=r_{max}$ is optimal. For log-concave $\bar{F}_X$, $r=1$ is optimal in heavy traffic.  
\end{lem}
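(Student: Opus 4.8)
The plan is to recognize that this lemma is simply the constrained analogue of \Cref{lem:latency_vs_r} and \Cref{coro:EC_vs_r}: capping the number of forked servers at $r_{max}$ preserves the monotone structure those results already establish, so the proof should reduce to checking that the cap either selects the boundary or fails to bind. I would separate the log-convex and log-concave cases, and within each distinguish the low-traffic ($\lambda \to 0$) and heavy-traffic ($\lambda \to \lambda^{*}_{max}$) regimes exactly as in the proof of \Cref{lem:latency_vs_r}.

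For log-convex $\bar{F}_X$, the aim is to show that both regimes point to $r = r_{max}$. In the low-traffic limit all replicas start simultaneously, so $\E{T} = \E{X_{1:r}}$, which is decreasing in $r$; hence the largest admissible value $r = r_{max}$ is optimal. In heavy traffic, \Cref{coro:high_traffic_comp} reduces optimality to minimizing $\E{C}$. Here I would combine the lower bound $\E{C} \geq r\E{X_{1:r}}$ from \eqref{eqn:E_C_bounds_log_convex} with the fact from \Cref{lem:r_E_X_1_r_trend} that $r\E{X_{1:r}}$ is non-increasing in $r$: for any symmetric policy forking to $r \leq r_{max}$ servers, $\E{C} \geq r\E{X_{1:r}} \geq r_{max}\E{X_{1:r_{max}}}$, and this minimal value is attained with equality by the group-based random policy at $r = r_{max}$ (where tasks start simultaneously, so the upper and lower bounds of \Cref{thm:E_C_r_trend} coincide). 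Thus $r = r_{max}$ minimizes $\E{C}$ over all admissible policies, hence minimizes heavy-traffic latency, and since both regimes agree, $r = r_{max}$ is optimal throughout.

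For log-concave $\bar{F}_X$ in heavy traffic, the argument is that the constraint never binds. Using the lower bound $\E{C} \geq \E{X}$ from \eqref{eqn:E_C_bounds_log_concave}, which is achieved with equality at $r = 1$, and noting that $r = 1$ is always feasible whenever $r_{max} \geq 1$, the cost-minimizing choice $r = 1$ remains available under the cap. By \Cref{coro:high_traffic_comp} this minimizes heavy-traffic latency, recovering the unconstrained conclusion of \Cref{lem:latency_vs_r} verbatim.

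The main obstacle I anticipate is the log-convex heavy-traffic claim, where optimality must be asserted over the full class of symmetric policies—which may stagger the relative start times $t_i$—and not merely over group-based random. The sandwich in \Cref{thm:E_C_r_trend} is exactly what makes this rigorous: it guarantees that $r\E{X_{1:r}}$ is a universal lower bound on $\E{C}$ regardless of the start-time profile, so the monotonicity of $r\E{X_{1:r}}$ transfers into a genuine lower bound that is tight at $r = r_{max}$. Once that bound is in place, every remaining step is a direct citation of the established monotonicity and the heavy-traffic comparison corollary.
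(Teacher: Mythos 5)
Your proposal is correct and takes essentially the same route as the paper: the paper's entire proof is the remark that the argument of \Cref{lem:latency_vs_r} goes through with $n$ replaced by $r_{max}$, which is precisely what you carry out (low-traffic monotonicity of $\E{X_{1:r}}$, then \Cref{coro:high_traffic_comp} plus the cost bounds of \Cref{thm:E_C_r_trend} in heavy traffic). Your explicit chain $\E{C} \geq r\E{X_{1:r}} \geq r_{max}\E{X_{1:r_{max}}}$, with equality for the simultaneous-start policy at $r = r_{max}$, merely spells out the steps the paper leaves implicit.
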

The proof is similar to \Cref{lem:latency_vs_r} with $n$ replaced by $r_{max}$.

\subsection{Choice of the $r$ servers}
For a given $r$, we now compare different policies of choosing the $r$ servers for each job. The choice of the $r$ servers determines the relative starting times of the tasks. By using the bounds in \Cref{thm:E_C_r_trend} that hold for any relative task start times we get the following result. 


\begin{lem}[Cost of different policies]
\label{lem:cost_diff_policies}
Given $r$, if $\bar{F}_X$ is log-concave (log-convex), the symmetric policy that results in the tasks starting at the same time ($t_i = 0$ for all $1 \leq i \leq r$) results in higher (lower) $\E{C}$ than one that results in $0 < t_i < \infty$ for one or more $i$.
\end{lem}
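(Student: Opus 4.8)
The plan is to recognize this lemma as essentially an immediate consequence of \Cref{thm:E_C_r_trend}, obtained by identifying which task-start configuration realizes each of the two bounds stated there. First I would evaluate the cost expression \eqref{eqn:C_expr} at the simultaneous-start configuration $t_1 = t_2 = \cdots = t_r = 0$. Here the minimum $S = \min(X_1 + t_1, \dots, X_r + t_r)$ collapses to $S = X_{1:r}$, and each summand $\posfunc{S - t_i} = \posfunc{S} = S$ because service times are strictly positive, so the cost becomes $C = r X_{1:r}$. Taking expectations, the simultaneous-start policy yields exactly $\E{C} = r \E{X_{1:r}}$.

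Next I would invoke \Cref{thm:E_C_r_trend}, which bounds $\E{C}$ for \emph{arbitrary} relative start times $t_i$: in the log-concave case $r\E{X_{1:r}} \ge \E{C} \ge \E{X}$, and in the log-convex case $\E{X} \ge \E{C} \ge r\E{X_{1:r}}$. Thus $r\E{X_{1:r}}$ is precisely the upper extreme of $\E{C}$ when $\bar{F}_X$ is log-concave and the lower extreme when $\bar{F}_X$ is log-convex. Since the simultaneous-start policy attains this value, it follows at once that for log-concave $\bar{F}_X$ no other choice of $t_i$ can give a larger $\E{C}$, while for log-convex $\bar{F}_X$ no other choice can give a smaller $\E{C}$. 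In particular, any policy with some $0 < t_i < \infty$ places $\E{C}$ weakly below (respectively above) the simultaneous-start value.

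The one delicate point will be upgrading this comparison from weak to \emph{strict}, since the bounds in \Cref{thm:E_C_r_trend} are stated non-strictly and the simultaneous-start policy sits exactly on the boundary. For this I would return to the term-by-term inequalities used in the proof of \Cref{thm:E_C_r_trend} and track their equality conditions: intuitively, the log-concavity (log-convexity) inequality governing each contribution $\E{\posfunc{S - t_i}}$ is tight only when $t_i = 0$, so introducing any strictly positive finite $t_i$ opens a strict gap, provided $\bar{F}_X$ is not simultaneously log-concave and log-convex. The borderline exponential case, for which $r\E{X_{1:r}}$ is independent of the start times, is exactly the degenerate situation flagged after \Cref{lem:r_E_X_1_r_trend}, and there the two sides coincide.
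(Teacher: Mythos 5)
Your proposal is correct and follows essentially the same route as the paper's own proof: evaluate the cost at simultaneous start to get $\E{C} = r\E{X_{1:r}}$, then invoke \Cref{thm:E_C_r_trend}, whose bounds hold for arbitrary relative start times, to conclude that this value is the extreme in each case. Your closing observation about strictness is a fair refinement the paper itself glosses over---its proof, like your main argument, only yields the weak inequality, with equality exactly in the degenerate exponential case.
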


\begin{proof}
The symmetric policy that results in $t_i = 0$ for all $1 \leq i \leq r$ (for eg.\ the group-based random policy) results in $\E{C} = r \E{X_{1:r}}$. By \Cref{thm:E_C_r_trend}, if $\bar{F}_X$ is log-concave, $\E{C} \leq r \E{X_{1:r}}$ for any symmetric policy. Thus, for log-concave distributions, the symmetric policy that results in $0 < t_i < \infty$ for one or more $i$ gives lower $\E{C}$ than the group-based random policy. On the other hand, for log-convex distributions, $\E{C} \geq r \E{X_{1:r}}$ with any symmetric policy. Thus the policies that result in relative task start times $t_i = 0$ for all $1 \leq i \leq r$ give lower $\E{C}$ than other symmetric policies.
\end{proof}

\begin{lem}[Latency in high $\lambda$ regime]
\label{lem:latency_diff_policies}
Given $r$, if $\bar{F}_X$ is log-concave (log-convex), the symmetric policy that results in the tasks starting at the same time ($t_i = 0$ for all $1 \leq i \leq r$) results in higher (lower) $\E{T}$ in the heavy traffic regime than one that results in $0 < t_i < \infty$ for some $i$.
\end{lem}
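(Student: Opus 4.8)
The plan is to reduce this latency comparison entirely to the cost comparison already established in \Cref{lem:cost_diff_policies}, invoking the heavy-traffic principle of \Cref{coro:high_traffic_comp}. The key observation is that the two policies being compared are identical except for their relative task start times $t_i$, and that \Cref{lem:capacity_in_terms_of_EC} converts any difference in $\E{C}$ directly into a difference in the service capacity $\lambda_{max} = n/\E{C}$. So once the sign of the $\E{C}$ comparison is known for each log-concavity regime, the heavy-traffic latency ordering follows mechanically.

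First I would recall from \Cref{lem:cost_diff_policies} that for log-concave $\bar{F}_X$ the synchronized policy (with $t_i = 0$ for all $1 \leq i \leq r$) has the larger $\E{C}$, while for log-convex $\bar{F}_X$ it has the smaller $\E{C}$. Next I would apply \Cref{lem:capacity_in_terms_of_EC} to translate this into a statement about the maximum supported arrival rate: the policy with the smaller $\E{C}$ has the strictly larger $\lambda_{max}$. Then I would invoke the heavy-traffic argument: as $\lambda$ approaches the service capacity $\lambda^{*}_{max}$, the policy with the smaller $\lambda_{max}$ satisfies $\lambda_{max} < \lambda^{*}_{max}$ and hence has $\E{T} \to \infty$ first (as noted immediately after \Cref{coro:high_traffic_comp}), while the other policy remains stable. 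Consequently, for $\lambda$ sufficiently close to $\lambda^{*}_{max}$, the policy with the larger $\E{C}$ has the strictly larger $\E{T}$. Substituting the sign of the cost comparison in each case yields exactly the claim: for log-concave $\bar{F}_X$ the synchronized policy gives higher $\E{T}$, and for log-convex $\bar{F}_X$ it gives lower $\E{T}$.

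The only real subtlety—and thus the main thing to argue carefully—is that \Cref{coro:high_traffic_comp} is phrased for the policy minimizing $\E{C}$ over \emph{all} symmetric policies, whereas here I compare only two particular policies. I would address this by observing that the heavy-traffic ordering depends solely on the relative ordering of the two capacities: whichever of the two has the smaller $\lambda_{max}$ becomes unstable first as $\lambda$ increases toward it, so the $\E{T}$ comparison between the two is governed purely by their $\E{C}$ comparison, independent of where the global optimum lies. A secondary point worth stating precisely is that the conclusion is inherently asymptotic—it holds in the limit $\lambda \to \lambda^{*}_{max}$, equivalently for $\lambda$ in a left-neighborhood of the smaller of the two capacities—rather than for all $\lambda$, which is precisely why the lemma is confined to the heavy traffic regime.
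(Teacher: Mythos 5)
Your proposal is correct and follows essentially the same route as the paper's own proof, which simply combines \Cref{coro:high_traffic_comp} (the policy minimizing $\E{C}$ has lowest $\E{T}$ in heavy traffic) with the cost ordering of \Cref{lem:cost_diff_policies}. Your additional unpacking of the mechanism---translating the $\E{C}$ ordering into an ordering of the stability thresholds $\lambda_{max} = n/\E{C}$ via \Cref{lem:capacity_in_terms_of_EC}, so that the policy with larger $\E{C}$ becomes unstable first---is a more careful justification of the same step, not a different argument.
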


\begin{proof}
By \Cref{coro:high_traffic_comp}, the optimal replication strategy in heavy traffic is the one that minimizes $\E{C}$. Then the proof follows from \Cref{lem:cost_diff_policies}.
\end{proof}

\Cref{lem:latency_diff_policies} is illustrated by Fig.~\ref{fig:E_T_shifted_exp_vs_lambda_diff_policies} and Fig.~\ref{fig:E_T_hyper_exp_vs_lambda_diff_policies} for $n=6$ and $r=3$. The simulations are run for $100$ workloads with $1000$ jobs each. The $r$ tasks may start at different times with the uniform random policy, whereas they always start simultaneously with group-based random policy. Thus, in the high $\lambda$ regime, the uniform random policy results in lower latency for log-concave $\bar{F}_X$, as observed in Fig.~\ref{fig:E_T_shifted_exp_vs_lambda_diff_policies}. But for log-convex $\bar{F}_X$, group-based forking is better in the high $\lambda$ regime as seen in Fig.~\ref{fig:E_T_hyper_exp_vs_lambda_diff_policies}. For low $\lambda$, uniform random policy is better for any $\bar{F}_X$ because it gives lower expected waiting time in queue. 

\begin{figure}[t]
    \begin{minipage}[t]{0.48\linewidth}
    \centering
    \includegraphics[width=3.25in]{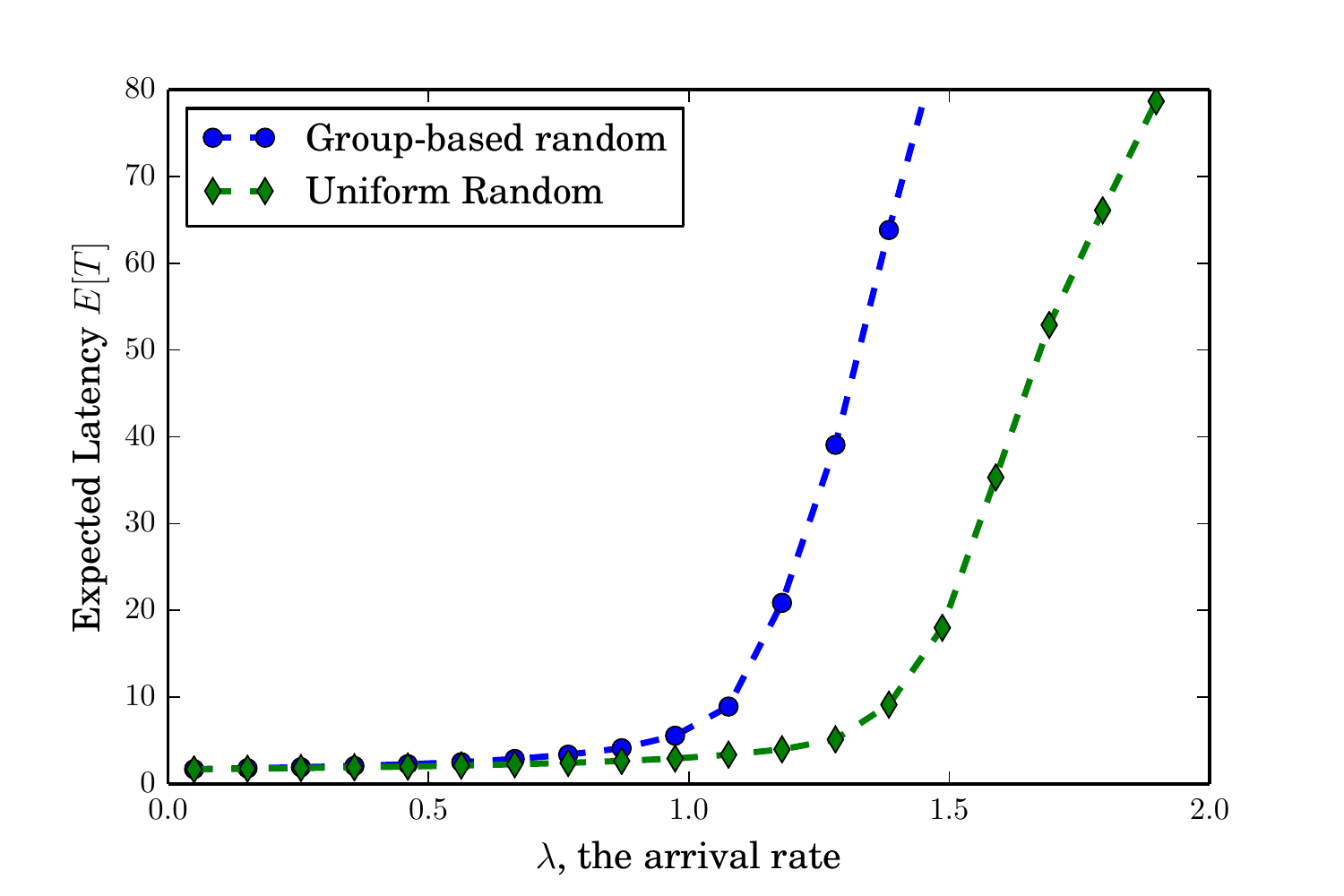}
    \caption{For service time distribution $\SExp(1, 0.5)$ which is log-concave, uniform random scheduling (which staggers relative task start times) gives lower $\E{T}$ than group-based random for all $\lambda$. The system parameters are $n= 6$, $r=3$. \label{fig:E_T_shifted_exp_vs_lambda_diff_policies}}
    \end{minipage}
    \hspace{0.04\linewidth}
    \begin{minipage}[t]{0.48\linewidth}
\centering
\includegraphics[width=3.25in]{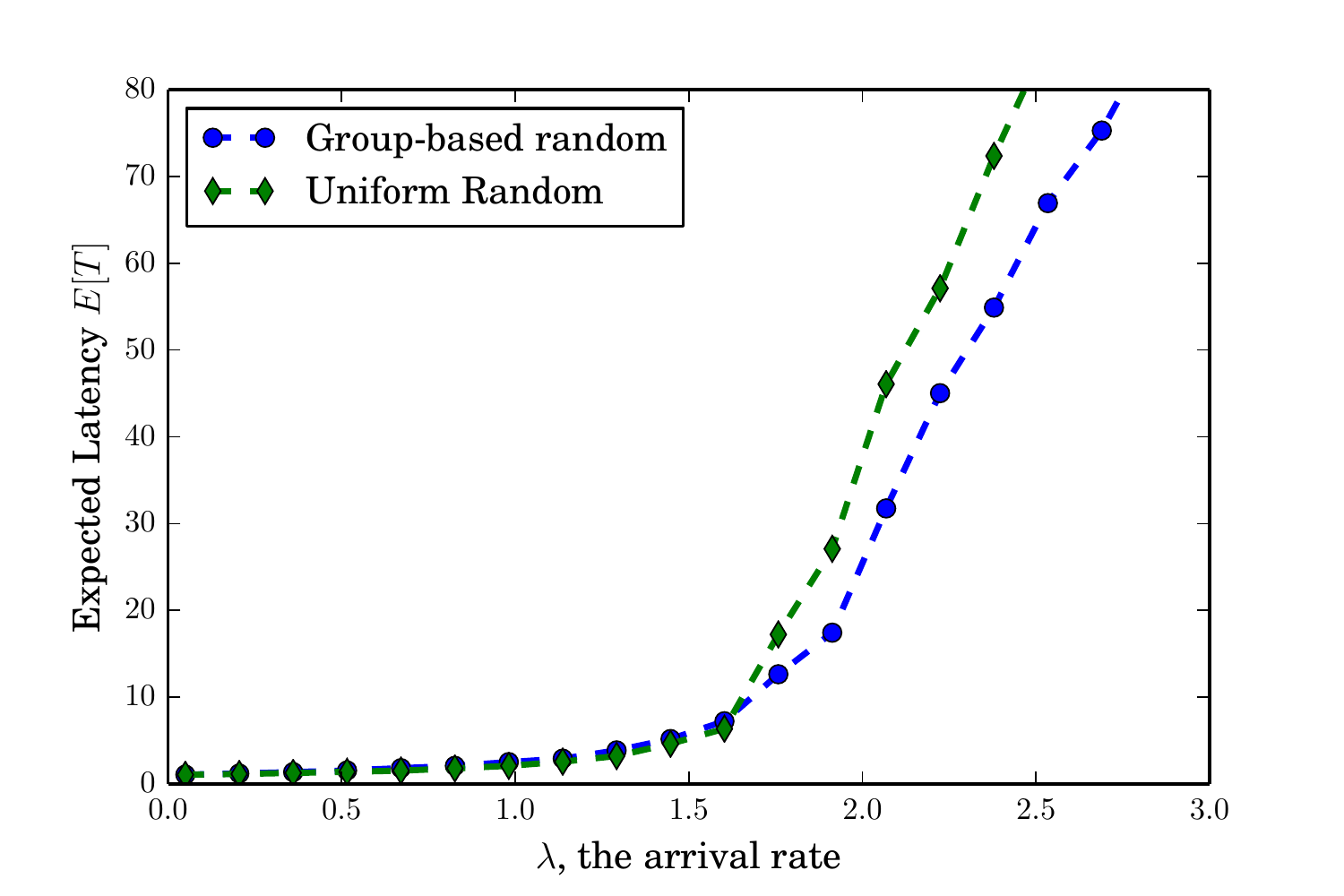}
\caption{ For service time distribution $\HyperExp(0.1, 2.0, 0.2)$ which is log-convex, group-based scheduling gives lower $\E{T}$ than uniform random in the high $\lambda$ regime. The system parameters are $n= 6$, $r=3$. \label{fig:E_T_hyper_exp_vs_lambda_diff_policies}}
    \end{minipage}
\end{figure}

\section{THE GENERAL $k$ CASE}
\label{sec:coded_with_queueing}
We now move to general $k$ case, where a job requires any $k$ out of $n$ tasks to complete. In practice, the general $k$ case arises in large-scale parallel computing frameworks such as MapReduce, and in content download from coded distributed storage systems. In this section we present bounds on the latency and cost of the $(n,k)$ fork-join and $(n,k)$ fork-early-cancel systems. In \Cref{subsec:diversity_parallelism} we demonstrate an interesting diversity-parallelism trade-off in choosing $k$. 


\subsection{Latency and Cost of the $(n,k)$ fork-join system}
Unlike the $k=1$ case, for general $k$ exact analysis is hard because multiple jobs can be in service simultaneously (for e.g.\ Job A and Job B in Fig.~\ref{fig:fork_join_queue}). Even for the $k=n$ case studied in \cite{nelson_tantawi,varki_merc_chen}, only bounds on latency are known. We generalize those latency bounds to any $k$, and also provide bounds on cost $\E{C}$. The analysis of $\E{C}$ can be used to estimate the service capacity using \Cref{lem:capacity_in_terms_of_EC}.

\begin{thm}[Bounds on Latency]
\label{thm:latency_bnds_gen}
The latency $\E{T}$ is bounded as follows.
\begin{align}
\E{T} &\leq \E{X_{k:n}} + \frac{\lambda \E{X_{k:n}^2}}{2(1 - \lambda \E{X_{k:n}})}, \label{eqn:upper_bnd_gen} \\
\E{T} &\geq \E{X_{k:n}} + \frac{\lambda \E{X_{1:n}^2}}{2(1 - \lambda\E{X_{1:n}})}. \label{eqn:lower_bnd_gen}
\end{align}
\end{thm}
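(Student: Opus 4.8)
The plan is to prove the two bounds by separate coupling arguments, each comparing the $(n,k)$ fork-join system to a tractable $M/G/1$ queue whose Pollaczek--Khinchine (P-K) mean response time matches the desired bound.

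For the upper bound \eqref{eqn:upper_bnd_gen}, I would introduce the \emph{split-merge} relaxation of the $(n,k)$ fork-join system: a modified system in which no task of a newly arriving job may begin service until the previous job has completely departed and all of its redundant tasks have been canceled. In this synchronized system all $n$ servers start a job's tasks simultaneously from idle, so the job occupies the servers for exactly $X_{k:n}$, the time for $k$ of $n$ i.i.d.\ tasks to finish. Hence the split-merge system is precisely an $M/G/1$ queue with service time $X_{k:n}$, whose mean response time is the P-K formula and equals the right-hand side of \eqref{eqn:upper_bnd_gen}. First I would set up a sample-path coupling of the two systems on the same Poisson arrivals and the same service-time realizations, and argue by induction on the job index that each job departs no later in the true fork-join system than in the split-merge system: the fork-join system lets a server begin the next job's task as soon as it is free, whereas split-merge forces it to idle, so fork-join is never behind. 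This yields $\E{T} \le \E{T^{M/G/1}}$ with service $X_{k:n}$, which is \eqref{eqn:upper_bnd_gen}.

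For the lower bound \eqref{eqn:lower_bnd_gen}, I would decompose the response time of a tagged job into a queueing (waiting) component and an intrinsic service component and bound each below. For the service component, I would observe that even with no queueing a job cannot depart until $k$ of its own tasks finish, so sample-path-wise $T \ge X_{k:n}$, giving a floor of $\E{X_{k:n}}$; staggered task starts due to queueing only delay completion further. For the queueing component, I would couple the $(n,k)$ system with the $(n,1)$ fork-join system on common arrivals and service times: since a job in the $(n,k)$ system needs $k$ completions rather than $1$, it lingers at least as long, so the $(n,k)$ system is at least as congested and its waiting time dominates that of the $(n,1)$ system. By \Cref{thm:rep_queueing} the $(n,1)$ system is an $M/G/1$ queue with service $X_{1:n}$, whose mean waiting time is exactly $\lambda \E{X_{1:n}^2}/\bigl(2(1-\lambda\E{X_{1:n}})\bigr)$. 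Combining the two lower bounds additively yields \eqref{eqn:lower_bnd_gen}.

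The hard part will be making both couplings rigorous under a \emph{general}, non-memoryless service distribution $F_X$. Because tasks begin at staggered times and canceled tasks waste partial service, the residual service time of a running task depends on how long it has run, so the naive assignment of a single i.i.d.\ service realization per (job, server) pair must be handled with care, e.g.\ by re-sampling fresh service times at each task start, which is valid since $X$ is i.i.d.\ across tasks. I expect the most delicate point in the lower bound to be justifying that the queueing delay and the $\E{X_{k:n}}$ service floor can be added without double counting, i.e.\ that the waiting interval used in the $(n,1)$ comparison genuinely precedes the interval over which the tagged job accrues its own $X_{k:n}$ of service, rather than the inequalities themselves, which follow from the monotonicity established by the coupling.
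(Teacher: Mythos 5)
Your proposal matches the paper's proof essentially step for step: the upper bound via the split-merge (blocking) relaxation, which is an $M/G/1$ queue with service time $X_{k:n}$ and hence the Pollaczek--Khinchine formula, and the lower bound via decomposing the response time into a waiting component bounded below by the $(n,1)$-system's $M/G/1$ waiting time with service $X_{1:n}$ plus the intrinsic service floor $\E{X_{k:n}}$. The extra coupling details you sketch (and your care about not double counting waiting and service) only add rigor to arguments the paper states at the same intuitive level, so this is the same approach, correctly executed.
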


%

The proof is given in Appendix~\ref{sec:coded_queueing_proofs}. In Fig.~\ref{fig:E_T_pareto_vs_k} 
we plot the bounds on latency alongside the simulation values for Pareto service time. The upper bound \eqref{eqn:upper_bnd_gen} becomes more loose as $k$ increases, because the split-merge system considered to get the upper bound (see proof of \Cref{thm:latency_bnds_gen}) becomes worse as compared to the fork-join system. For the special case $k=n$ we can improve the upper bound in \Cref{lem:tighter_upper_bnd} below, by generalizing the approach used in \cite{nelson_tantawi}.

\begin{lem}[Tighter Upper bound when $k=n$]
\label{lem:tighter_upper_bnd}
For the case $k=n$, another upper bound on latency is given by,
\begin{align}
\E{T} \leq \E{ \max \left( R_1, R_2, \cdots R_n \right) },
\end{align}
where $R_i$ are i.i.d.\ realizations of the response time $R$ of an $M/G/1$ queue with arrival rate $\lambda$, service time distribution $F_X$. 
\end{lem}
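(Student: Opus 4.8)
The plan is to realize the latency of the $(n,n)$ fork-join system as the maximum of $n$ \emph{individually} $M/G/1$-distributed sojourn times that are positively coupled, and then to show that replacing them by independent copies can only increase the expected maximum. First I would observe that when $\kJoin=\nFork$ no task is ever canceled, since a job departs only after \emph{all} $\nFork$ of its tasks finish; hence each server processes to completion every task it receives. Because jobs arrive as a Poisson process and each job contributes exactly one task with service time $\sim F_X$ to every server, the queue at server $i$, viewed in isolation, is an $M/G/1$ queue with arrival rate $\lambda$ and service distribution $F_X$. Writing $R_i$ for the sojourn time of a tagged stationary job's task at server $i$, we have $T = \max(R_1,\dots,R_\nFork)$, and each $R_i$ is marginally distributed as the $M/G/1$ response time $R$.

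The only coupling between the $R_i$ enters through the shared arrival stream, so the key step is to condition on it. Let $\mathcal{A}$ denote the entire arrival process. Given $\mathcal{A}$, the workload process at server $i$ is a functional of $\mathcal{A}$ together with server $i$'s own service-time sequence; since these sequences are independent across servers and identically distributed, the variables $R_1,\dots,R_\nFork$ are conditionally independent and identically distributed given $\mathcal{A}$, with a common conditional CDF $g(t;\mathcal{A}) = \Pr(R_i \le t \mid \mathcal{A})$. I would then write
\begin{align}
\E{T} &= \int_0^\infty \left(1 - \Pr(R_1 \le t,\dots,R_\nFork \le t)\right) dt = \int_0^\infty \left(1 - \E{g(t;\mathcal{A})^\nFork}\right) dt. \nonumber
\end{align}
Because $x \mapsto x^{\nFork}$ is convex on $[0,1]$, Jensen's inequality gives $\E{g(t;\mathcal{A})^{\nFork}} \ge \left(\E{g(t;\mathcal{A})}\right)^{\nFork} = \Pr(R \le t)^{\nFork}$, where the last equality uses $\E{g(t;\mathcal{A})} = \Pr(R_i \le t)$. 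Substituting this lower bound yields
\begin{align}
\E{T} \le \int_0^\infty \left(1 - \Pr(R \le t)^{\nFork}\right) dt = \E{\max(\tilde R_1,\dots,\tilde R_\nFork)} \nonumber
\end{align}
for i.i.d.\ copies $\tilde R_1,\dots,\tilde R_\nFork$ of $R$, which is exactly the claimed bound.

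The routine parts are the integral representation of the expected maximum and the single application of Jensen's inequality. The step I expect to demand the most care is the conditional-independence claim: I must argue rigorously that conditioning on the common arrival process renders the per-server sojourn times independent, which rests on the fact that for $\kJoin=\nFork$ each server's workload evolves autonomously, with no inter-server cancellation, driven by the shared arrivals and its own independent service times. Making the \textit{tagged stationary job} precise—via PASTA, so that the tagged arrival sees the stationary workload at each queue—and verifying that the conditional law $g(t;\mathcal{A})$ is genuinely common across servers (which uses only that $F_X$ is identical everywhere) are the details that turn the positive-correlation intuition of \cite{nelson_tantawi} into a proof valid for a general service distribution.
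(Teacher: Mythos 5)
Your proof is correct, and it reaches the bound by a genuinely different route than the paper. The paper's proof is a one-step appeal to the theory of associated random variables: since the per-server response times are driven by the common arrival stream, $R_1,\dots,R_n$ are associated, and the expected maximum of associated random variables is at most that of independent copies with the same marginals (this is the generalization of \cite{nelson_tantawi} credited in the text, with a citation for association). What you do instead is prove, from scratch, exactly the consequence of association that is needed: with $k=n$ there are no cancellations, so each queue is an autonomous $M/G/1$ queue driven by the shared arrivals and its own i.i.d.\ service times; conditioning on the arrival process $\mathcal{A}$ therefore makes $R_1,\dots,R_n$ conditionally i.i.d.\ with common conditional CDF $g(t;\mathcal{A})$, and Jensen's inequality applied to the convex map $x\mapsto x^n$ gives $\Pr\left(\max_i R_i\le t\right)=\E{g(t;\mathcal{A})^n}\ge \Pr(R\le t)^n$, which integrates to the claimed bound. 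Your version buys self-containedness: no external property of associated variables is invoked, and the source of the positive dependence (the shared arrivals) is made explicit and rigorous; the paper's version is shorter and situates the lemma in a standard framework. Your approach also illuminates the paper's closing remark that the argument does not extend to $k<n$: even setting aside the fact that cancellations destroy the autonomous $M/G/1$ structure when $k<n$, the conditional probability that the $k$th order statistic is at most $t$ equals $\sum_{j=k}^{n}\binom{n}{j}g^j(1-g)^{n-j}$, which is not convex in $g$ (for $k=1$ it is concave), so the Jensen step fails for essentially the same reason the order-statistic property of associated variables fails.
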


The proof is given in Appendix~\ref{sec:coded_queueing_proofs}. Transform analysis \cite[Chapter~25]{mor_book} can be used to determine the distribution of $R$, the response time of an $M/G/1$ queue in terms of $F_X(x)$. The Laplace-Stieltjes transform $R(s)$ of the probability density function of $f_R(r)$ of $R$ is given by,
\begin{align}
R(s) &= \frac{ s X(s) \left(1- \frac{\lambda}{\E{X}} \right)}{s - \lambda (1-X(s))}\label{eqn:R_i_pdf_laplace_tx},
\end{align}
where $X(s)$ is the Laplace-Stieltjes transform of the service time distribution $f_X(x)$. 


The lower bound on latency \eqref{eqn:lower_bnd_gen} can be improved for shifted exponential $F_X$, generalizing the approach in \cite{varki_merc_chen} based on the memoryless property of the exponential tail. 

\begin{figure}[th]
    \begin{minipage}[t]{0.48\linewidth}
 	\centering
	\includegraphics[width=3.5in]{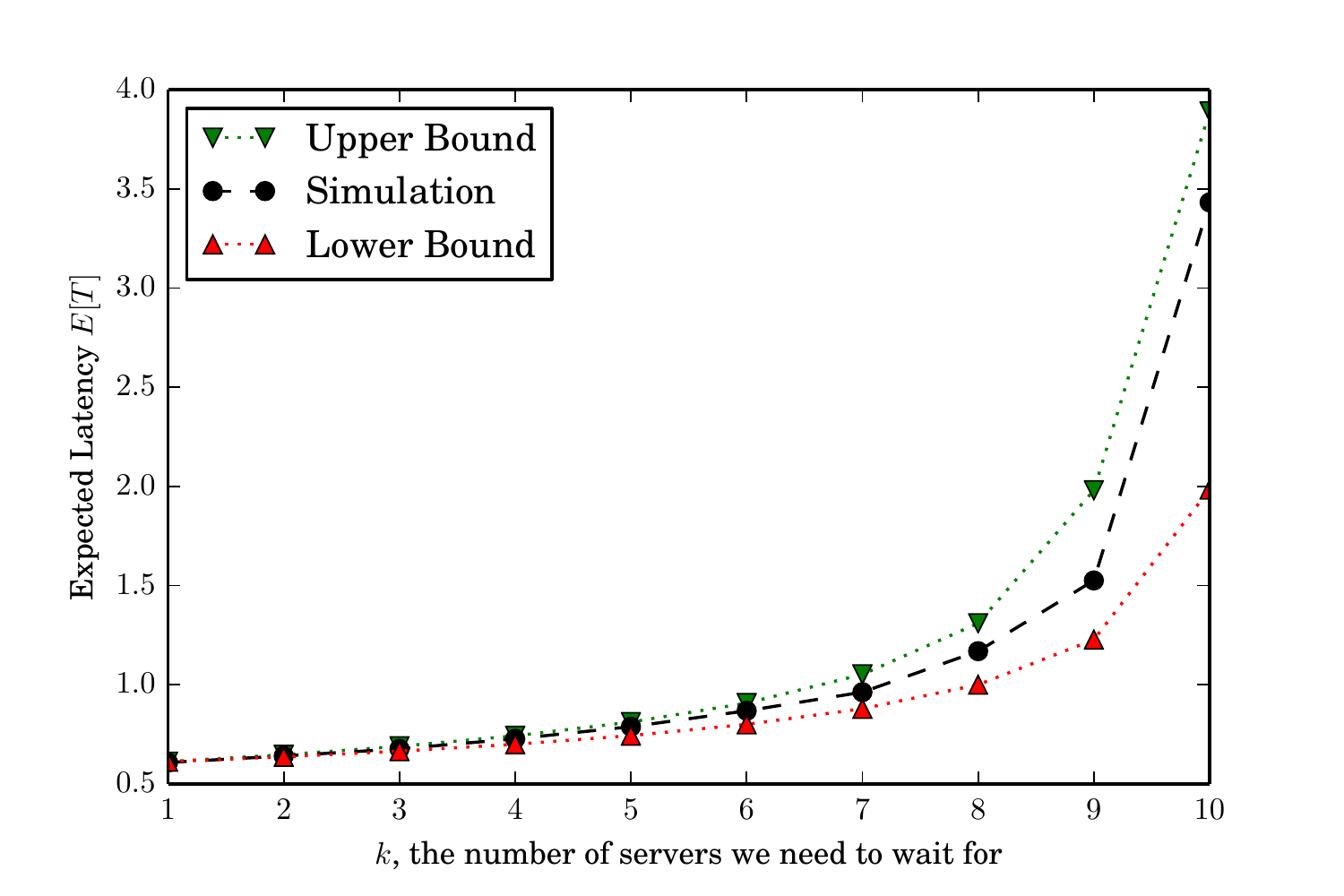}
	\caption{Bounds on latency $\E{T}$ versus $k$ (\Cref{thm:latency_bnds_gen}), alongside simulation values. The service time $X\sim \Pareto(0.5,2.5)$, $n=10$, and $\lambda =0.5$. A tigher upper bound for  $k=n$ is evaluated using \Cref{lem:tighter_upper_bnd}.\label{fig:E_T_pareto_vs_k}}
    \end{minipage}
    \hspace{0.04\linewidth}
    \begin{minipage}[t]{0.48\linewidth}
	\centering
	\includegraphics[width=3.5in]{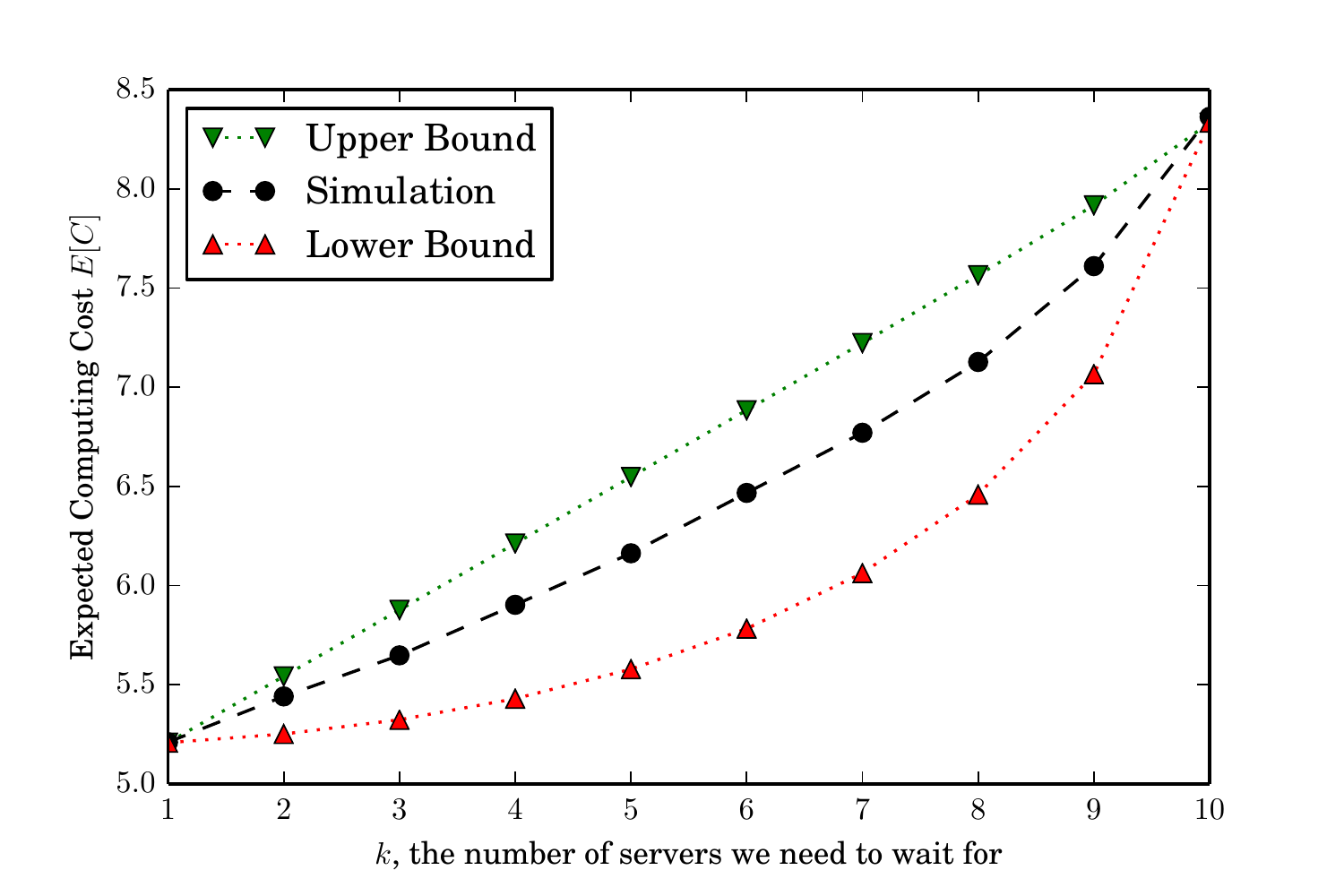}
	\caption{Bounds on cost $\E{C}$ versus $k$ (\Cref{thm:cost_bounds_gen}) alongside simulation values. The service time $X \sim \Pareto(0.5,2.5)$, $n=10$, and $\lambda =0.5$. The bounds are tight for $k=1$ and $k=n$.\label{fig:E_C_pareto_vs_k}}
    \end{minipage}
\end{figure}

\begin{thm}[Bounds on Cost]
\label{thm:cost_bounds_gen}
The expected computing cost $\E{C}$ can be bounded as follows.
\begin{align}
\E{C} &\leq (k-1)\E{X} + (n-k+1) \E{X_{1:n-k+1}} \label{eqn:E_C_gen_upper_bnd} \\
\E{C} &\geq \sum_{i=1}^{k} \E{X_{i:n}} + (n-k) \E{X_{1:n-k+1}} \label{eqn:E_C_gen_lower_bnd}
\end{align}
\end{thm}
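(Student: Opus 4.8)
The plan is to work from the relative-start-time representation of the per-job cost. Generalizing \eqref{eqn:C_expr} to arbitrary $k$, if the $n$ tasks start at relative times $t_1 \le t_2 \le \cdots \le t_n$ and the job finishes at the instant $D$ of its $k$-th completion, then the service time contributed by task $i$ is $\min\!\left(X_i,\, \posfunc{D - t_i}\right)$, so that $C = \sum_{i=1}^{n} \min\!\left(X_i,\, \posfunc{D - t_i}\right)$. Since the $t_i$ are fixed by the queueing of \emph{other} jobs and are independent of this job's service times $X_1,\dots,X_n$, I would condition on the $t_i$ throughout and average over the i.i.d.\ $X_i$. The key structural observation is that the job's execution splits into two stages: the first $k-1$ completions, and a final stage in which the remaining $n-k+1$ tasks race for one more completion. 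That final stage is exactly an $(n-k+1,1)$ sub-problem of the kind analyzed in \Cref{thm:rep_queueing} and \Cref{thm:E_C_r_trend}, which is what lets the $\E{X_{1:n-k+1}}$ term appear in both bounds.

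For the upper bound \eqref{eqn:E_C_gen_upper_bnd} I would charge the two stages separately. Each of the $k-1$ tasks finishing before $D$ contributes its full service time, and since these are $k-1$ of the i.i.d.\ copies of $X$ their expected total is at most $(k-1)\E{X}$. The $n-k+1$ tasks alive in the final stage perform the work of a wait-for-one race; bounding their (already partially elapsed) residual times by fresh service times and invoking \Cref{lem:r_E_X_1_r_trend}, which identifies the synchronized start $t_i = 0$ as extremal for the scaled expected minimum $r\E{X_{1:r}}$, caps this stage at $(n-k+1)\E{X_{1:n-k+1}}$. For the lower bound \eqref{eqn:E_C_gen_lower_bnd} I would instead argue pointwise. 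Whichever $k$ tasks complete, their service times are $k$ of the $n$ realizations, hence their sum is at least that of the $k$ smallest, namely $\sum_{i=1}^{k} X_{i:n}$; taking expectations gives the first term. For the $n-k$ cancelled tasks I would again use the final race: each is in service throughout the contest among the $n-k+1$ survivors for the last completion, so collectively they accrue at least the $(n-k)$-fold minimum over $n-k+1$ service times, i.e.\ at least $(n-k)\E{X_{1:n-k+1}}$ in expectation. Summing recovers \eqref{eqn:E_C_gen_lower_bnd}, and one checks that both bounds collapse to $n\E{X_{1:n}}$ at $k=1$ and to $n\E{X}$ at $k=n$, matching \Cref{thm:rep_queueing} and the all-must-finish case.

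The main obstacle is the coupling introduced by the random completion time $D$ and by the random \emph{identity} of the set of completing tasks: both depend on all $t_i$ and all $X_i$ at once, so the two stages cannot be decoupled by a naive independence argument, and the residual times entering the final stage are not fresh. I expect the delicate step to be justifying the reduction of the final stage to a genuine $(n-k+1,1)$ race, i.e.\ replacing elapsed-residual minima by fresh minima in the inequality-preserving direction for each bound, while correctly accounting for the work the surviving tasks may already have done in earlier stages, and handling the $\posfunc{\cdot}$ truncation. The way through is to condition on the start times (independent of the service times) and reuse the extremality and monotonicity of $r\E{X_{1:r}}$ from \Cref{lem:r_E_X_1_r_trend} exactly as in the $k=1$ arguments behind \Cref{thm:E_C_r_trend}; pinning down precisely which relative-start configuration is extremal, so that synchronized starts are bounded correctly, is where I anticipate the real work to lie.
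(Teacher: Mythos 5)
There is a genuine gap, and it is precisely at the step you flag as delicate: your temporal decomposition (first $k-1$ completions, then a ``final race'' among the $n-k+1$ survivors) is not the paper's decomposition, and it cannot be repaired for general $F_X$. The paper's proof rests on a structural fact about the queueing dynamics that your proposal never identifies: when job $(i-1)$ departs, the $n-k$ servers whose tasks are canceled \emph{and} the server that produced the $k$-th completion all become free at the same instant, so at least $n-k+1$ tasks of job $i$ start service \emph{simultaneously, with fresh i.i.d.\ service times}. That is the only source of the quantity $X_{1:n-k+1}$ in both \eqref{eqn:E_C_gen_upper_bnd} and \eqref{eqn:E_C_gen_lower_bnd}: it is a minimum of fresh copies, available because those tasks begin together. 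In your decomposition the $n-k+1$ survivors of the first stage have heterogeneous elapsed service, so the final stage is a race of \emph{residual} times. Since the theorem assumes nothing about $F_X$, residuals can be stochastically smaller than fresh copies (new-better-than-used laws, e.g.\ deterministic or shifted-exponential $X$) or larger (new-worse-than-used laws, e.g.\ hyperexponential $X$); hence the ``inequality-preserving replacement of elapsed-residual minima by fresh minima'' does not exist in either direction uniformly. Concretely, your lower-bound step --- the cancelled tasks accrue at least $(n-k)X_{1:n-k+1}$ because each survives the final race --- is false: with deterministic $X=1$, a survivor that has already run $0.9$ ends the race after $0.1$, far below a fresh minimum. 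The paper obtains this term differently: since only $k-1$ tasks can start early, at least one of the $n-k+1$ fresh simultaneous starters must be among the $k$ completions, so the departure time $D$ satisfies $D \geq t_0 + X_{1:n-k+1}$ (with $t_0$ the common start time), and every cancelled task has been in service at least since $t_0$; this gives $(n-k)X_{1:n-k+1}$ pointwise, with no residual-time argument at all.

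Your upper bound has two further problems. First, the set of the first $k-1$ finishers is selected using the realized service times, so ``these are $k-1$ of the i.i.d.\ copies, hence their expected total is at most $(k-1)\E{X}$'' is not automatic; it needs a negative-correlation argument (the indicator that task $i$ is among the first $k-1$ finishers is non-increasing in $X_i$, so $\E{X_i \mathbf{1}\{i \in S\}} \leq \E{X}\Pr(i \in S)$, then sum over $i$). This is fixable, but note the paper avoids it entirely: its group of $k-1$ tasks is defined by \emph{start times}, which are determined by earlier jobs and hence independent of the present job's service times, so charging each of them $\E{X}$ is immediate. Second, your two-stage accounting charges the survivors only for their final-stage work and the first $k-1$ finishers for their full services, which silently drops the service the survivors accrued \emph{before} the $(k-1)$-th completion; that work must be charged somewhere and cannot be absorbed into $(n-k+1)\E{X_{1:n-k+1}}$ by any residual-to-fresh comparison, for the same NBU/NWU reason as above. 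In short, the first terms of both bounds ($(k-1)\E{X}$ and the pointwise $\sum_{i=1}^{k} X_{i:n}$, which you handle exactly as the paper does) survive, but every appearance of $\E{X_{1:n-k+1}}$ requires the simultaneous-fresh-start observation from the queueing dynamics, and that is the one idea your proposal is missing.
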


The proof is given in Appendix~\ref{sec:coded_queueing_proofs}. Fig.~\ref{fig:E_C_pareto_vs_k} shows the bounds alongside the simulation plot of the computing cost $\E{C}$ when $F_X$ is $\Pareto(x_m, \alpha)$ with $x_m = 0.5$ and $\alpha = 2.5$. The arrival rate $\lambda = 0.5$, and $n=10$ with $k$ varying from $1$ to $10$ on the x-axis. The simulation is run for $100$ iterations of $1000$ jobs. We observe that the bounds on $\E{C}$ are tight for $k=1$ and $k=n$, which can also be inferred from \eqref{eqn:E_C_gen_upper_bnd} and \eqref{eqn:E_C_gen_lower_bnd}.

\subsection{Diversity-Parallelism Trade-off}
\label{subsec:diversity_parallelism}

\begin{figure}[t]
    \begin{minipage}[t]{0.48\linewidth}
\centering
\includegraphics[width=3.5in]{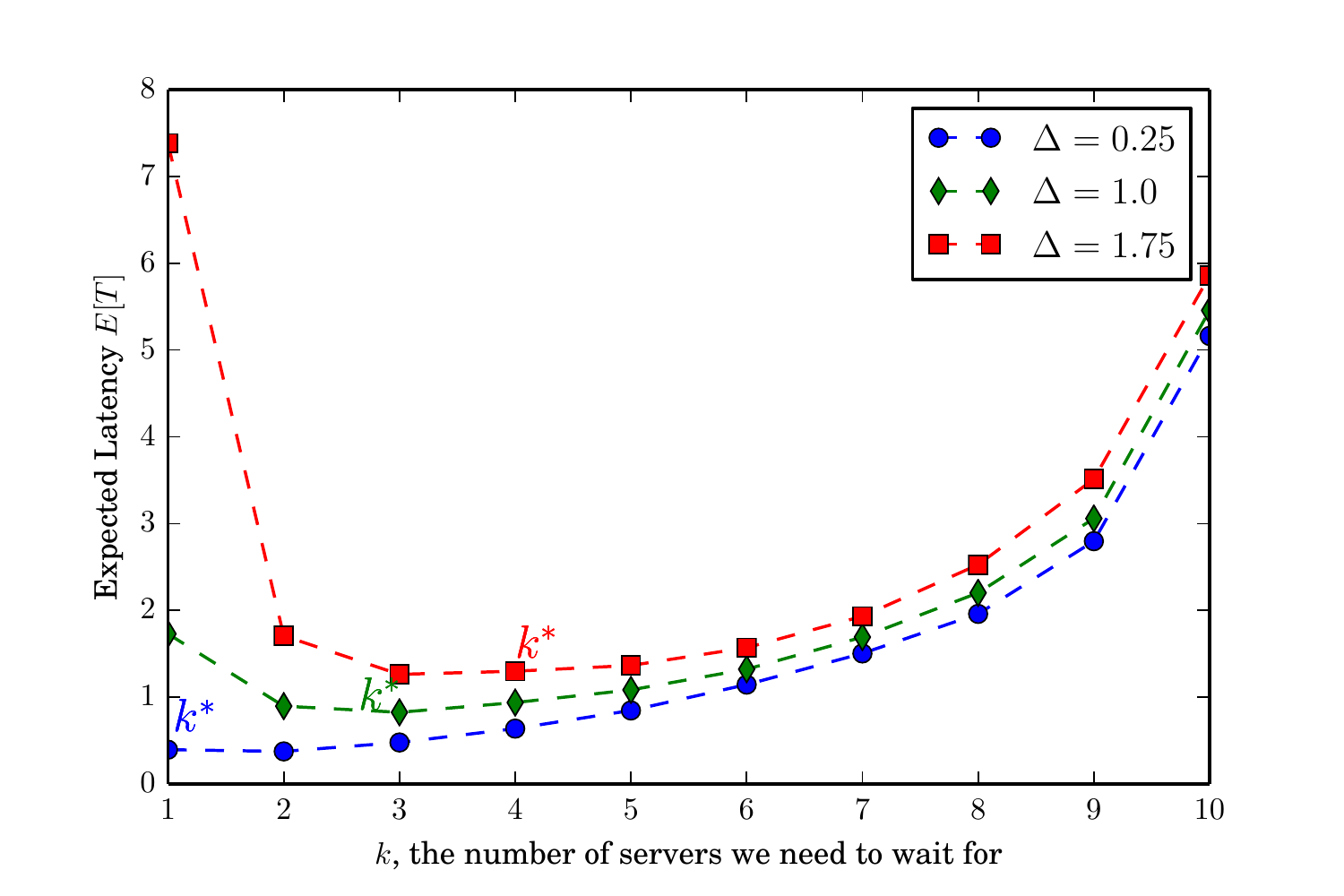}
\caption{Expected latency versus $k$ for task service time $X \sim \SExp(\Delta/k , 1.0)$, and arrival rate $\lambda =0.5$. As $k$ increases, we lose diversity but the parallelism benefit is higher because each task is smaller.  \label{fig:ET_shifted_exp_vs_k}}
    \end{minipage}
    \hspace{0.04\linewidth}
    \begin{minipage}[t]{0.48\linewidth}
\centering
\includegraphics[width=3.5in]{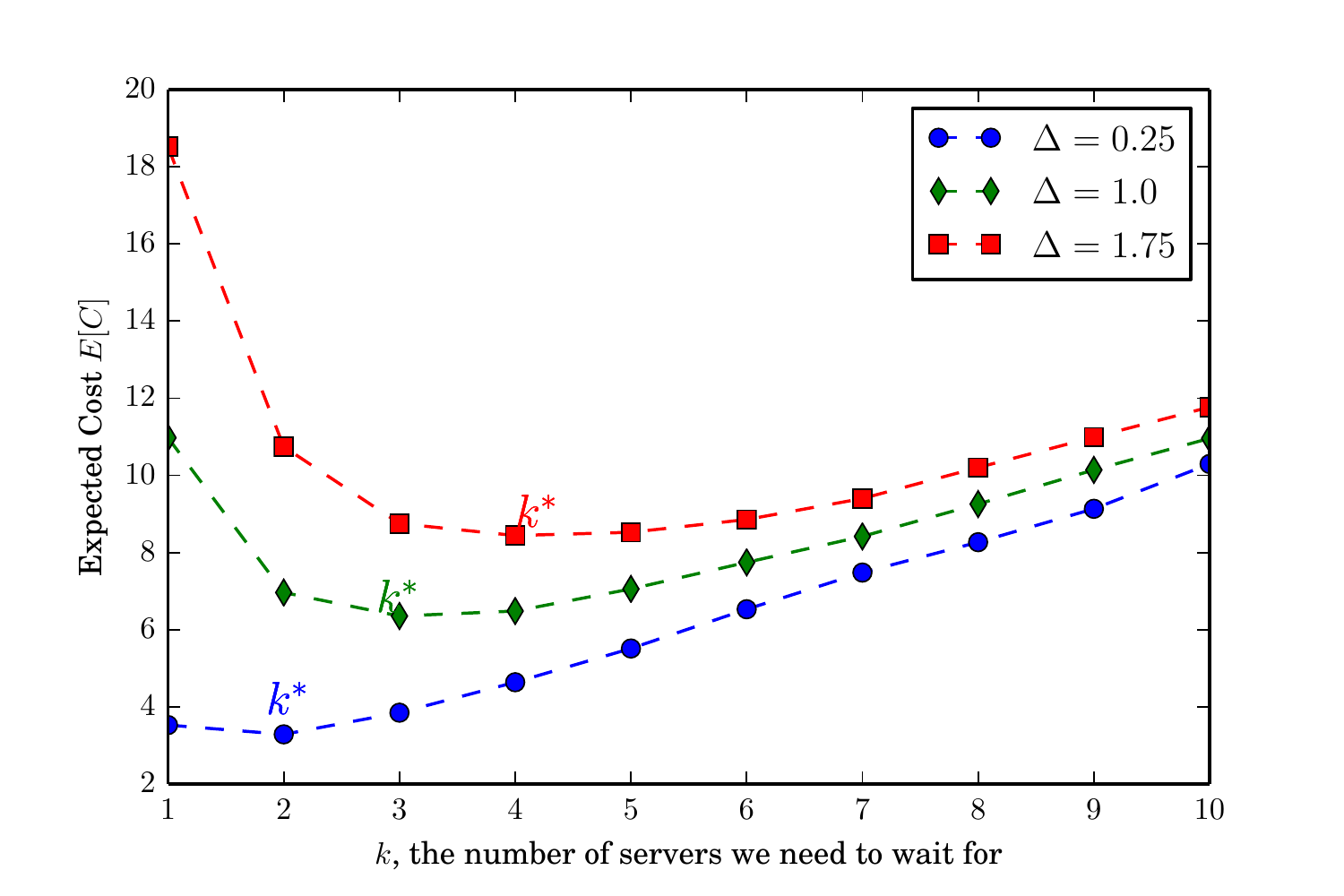}
\caption{Expected cost versus $k$ for task service time $X \sim \SExp(\Delta/k , 1.0)$, and arrival rate $\lambda =0.5$. As $k$ increases, we lose diversity but the parallelism benefit is higher because each task is smaller.   \label{fig:EC_shifted_exp_vs_k}}
    \end{minipage}
\end{figure}

%

%

In Fig.~\ref{fig:E_T_pareto_vs_k} we observed the expected latency increases with $k$, because we need to wait for more tasks to complete, and the service time $X$ is independent of $k$. But in most computing and storage applications, the service time $X$ decreases as $k$ increases because each task becomes smaller. We refer to this as the `parallelism benefit' of splitting a job into more tasks. But as $k$ increases, we lose the `diversity benefit' provided by redundant tasks and having to wait only for a subset of the tasks to finish. Thus, there is a diversity-parallelism trade-off in choosing the optimal $k^*$ that minimizes latency $\E{T}$. We demonstrate the diversity-parallelism trade-off in simulation plot Fig.~\ref{fig:ET_shifted_exp_vs_k} for service time $X \sim \SExp(\Delta_k ,\mu)$, with $\mu = 1.0$, and $\Delta_k = \Delta/k$. As $k$ increases, we lose diversity but the parallelism benefit is higher because each task is smaller. As $\Delta$ increases, the optimal $k^*$ shifted upward because the service time distribution becomes `less random' and so there is less diversity benefit.

We can also observe the diversity-parallelism trade-off mathematically in the low traffic regime, for $X \sim \SExp(\Delta/k, \mu)$. If we take $\lambda \rightarrow 0$ in \eqref{eqn:lower_bnd_gen} and \eqref{eqn:upper_bnd_gen}, both bounds coincide and we get,
\begin{align}
\lim_{\lambda \rightarrow \infty} \E{T} &= \E{X_{k:n}} 
= \frac{\Delta}{k} + \frac{H_n - H_{n-k}}{\mu} \label{eqn:diversity_parallelism_2},
\end{align}
where $H_n = \sum_{i=1}^{n} 1/i $, the $n^{th}$ harmonic number. The parallelism benefit comes from the first term in \eqref{eqn:diversity_parallelism_2}, which reduces with $k$. The diversity of waiting for $k$ out of $n$ tasks causes the second term to increase with $k$. The optimal $k^*$ that minimizes \eqref{eqn:diversity_parallelism_2} strikes a balance between these two opposing trends.

Fig.~\ref{fig:EC_shifted_exp_vs_k} shows a similar diversity-parallelism trade-off in choosing $k$ to minimize the computing cost $\E{C}$. In the heavy traffic regime, by \Cref{coro:high_traffic_comp} the policy that minimizes $\E{C}$ also minimizes $\E{T}$. Thus the same $k^*$ will minimize both $\E{T}$ and $\E{C}$.


\subsection{Latency and Cost of the $(n,k)$ fork-early-cancel system}
We now analyze the latency and cost of the $(n,k)$ fork-early-cancel system where the redundant tasks are canceled as soon as any $k$ tasks start service. 
%
%

\begin{thm}[Latency-Cost with Early Cancellation]
\label{thm:early_cancel_gen}
The cost $\E{C}$ and an upper bound on the expected latency $\E{T}$ with early cancellation is given by
\begin{align}
\E{C} &= k \E{X} \label{eqn:EC_early_cancel_bnd}\\
\E{T} &\leq \E{\max \left( R_1, R_2, \cdots R_k \right)} \label{eqn:ET_early_cancel_bnd}
\end{align}
where $R_i$ are i.i.d.\ realizations of $R$, the reponse time of an $M/G/1$ queue with arrival rate $\lambda k/n$ and service time distribution $F_X$. 
\end{thm}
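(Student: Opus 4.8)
The two claims are of quite different character, so I would treat them separately. The cost identity \eqref{eqn:EC_early_cancel_bnd} I expect to be a direct accounting argument, exactly paralleling the $k=1$ result $\E{C}=\E{X}$ in \Cref{thm:rep_queueing_early_cancel}. By the early-cancellation rule, the instant that $k$ tasks of a job enter service the remaining $n-k$ tasks are cancelled while still waiting in queue, so they perform no work; and if more than $k$ happen to start simultaneously we retain exactly $k$ and cancel the surplus immediately, again at negligible work. Hence precisely $k$ tasks of each job ever occupy a server, each for an independent service time distributed as $X$. Summing and taking expectations by linearity gives $\E{C}=k\E{X}$, independent of the scheduling dynamics and of the arrival process, which is why this part requires no queueing analysis.

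For the latency bound \eqref{eqn:ET_early_cancel_bnd} the plan is to bound the fork-early-cancel system above by a less flexible comparison system whose tagged-job latency is exactly the maximum of $k$ M/G/1 response times. First I would note that, since exactly $k$ tasks per job are served and the policy is symmetric, each of the $n$ servers carries an effective served-task rate of $\lambda k/n$; this is the rate that must appear in the comparison queues. I would then introduce a comparison system $\Pi$ that routes each job to a balanced random $k$-subset of the $n$ servers (so each server again sees rate $\lambda k/n$) and waits for all $k$ of those tasks to finish. System $\Pi$ removes the adaptive load balancing inherent in fork-early-cancel --- where a job is effectively served by whichever $k$ servers become free earliest among all $n$ --- so I expect a coupling / sample-path argument to show $\E{T^{\text{fork-early-cancel}}}\le\E{T^{\Pi}}$.

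It then remains to bound the latency of $\Pi$. In $\Pi$ the tagged job arrives simultaneously to its $k$ assigned queues and completes when the slowest of them finishes, so its latency is the maximum of $k$ sojourn times at queues that are each marginally M/G/1 with arrival rate $\lambda k/n$. These sojourn times are positively associated, being monotone functionals of arrival streams that share the tagged arrival, so by the association inequality their maximum is stochastically dominated by the maximum of $k$ independent copies $R_1,\dots,R_k$ with the common marginal $R$; taking expectations yields $\E{\max(R_1,\dots,R_k)}$. This is precisely the mechanism used for the $k=n$ fork-join bound in \Cref{lem:tighter_upper_bnd}, now specialized to $k$ queues at the reduced rate $\lambda k/n$, and it correctly recovers the $M/G/n$ pooling bound of \Cref{thm:rep_queueing_early_cancel} in the case $k=1$.

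The main obstacle I anticipate is the domination step $\E{T^{\text{fork-early-cancel}}}\le\E{T^{\Pi}}$: making precise that adaptively serving a job on the $k$ currently least-loaded servers out of $n$ is never worse than committing in advance to a fixed balanced $k$-subset requires a careful coupling, and that coupling is delicate because the identity of the serving servers --- and hence which i.i.d.\ service times are consumed --- differs across the two systems. The association claim for the maximum in $\Pi$ is the secondary technical point; it needs the arrival streams to the $k$ tagged queues to be realized as monotone functions of a common driving process so that the association (FKG-type) inequality can be invoked, exactly as in the argument underlying \Cref{lem:tighter_upper_bnd}.
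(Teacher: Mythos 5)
Your proposal follows essentially the same route as the paper's proof: the same accounting argument for $\E{C}=k\E{X}$, the same comparison with an $(n,k,k)$ partial-fork system (your system $\Pi$) whose per-server arrival rate is $\lambda k/n$, the same domination claim that early cancellation's adaptive choice of the $k$ shortest queues can only help, and the same association-of-response-times inequality (as in \Cref{lem:tighter_upper_bnd}) to bound the maximum by that of $k$ independent copies of $R$. The coupling step you flag as delicate is indeed the one the paper also treats informally, so there is no substantive difference in approach or rigor.
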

The proof is given in Appendix~\ref{sec:coded_queueing_proofs}. The Laplace-Stieltjes transform of the response time $R$ of an $M/G/1$ queue with service time distribution $F_X(x)$ and arrival rate is same as \eqref{eqn:R_i_pdf_laplace_tx}, with $\lambda$ replaced by $\lambda k/n$.

By comparing the cost $\E{C} = k \E{X}$ in \eqref{eqn:EC_early_cancel_bnd} to the bounds in Theorem~\ref{thm:cost_bounds_gen} without early cancellation, we can get insights into when early cancellation is effective for a given service time distribution $F_X$. For example, when $\bar{F}_X$ is log-convex, the upper bound in \eqref{eqn:E_C_gen_upper_bnd} is smaller than $k \E{X}$. Thus we can infer that the $(n,k)$ fork-early-cancel system is always worse than the $(n,k)$ fork-join system when $X$ is log-convex. We also observed this phenomenon in \Cref{fig:normal_early_vs_lambda_log_convex} for the $k=1$ case.

\section{GENERAL REDUNDANCY STRATEGY}
\label{sec:heuristic_algo}
From the analysis in \Cref{sec:rep_with_queueing} and \Cref{sec:partial_fork}, we get insights into designing the best redundancy strategy for log-concave and log-convex service time. But it is not obvious to infer the best strategy for arbitrary service time distributions, or when only empirical traces of the service time are given. We now propose such a redundancy strategy to minimize the latency, subject to computing and network cost constraints. This strategy can also be used on traces of task service time when closed-form expressions of $F_X$ and its order statistics are not known.

\subsection{Generalized Fork-join Model}
We first introduce a general fork-join variant that is a combination of the partial fork introduced in \Cref{sec:prob_setup}, and partial early cancellation of redundant tasks.

\begin{defn}[$(n,r_{f}, r, k)$ fork-join system]
For a system of $n$ servers and a job that requires $k$ tasks to complete, we do the following:
\begin{itemize}
\item Fork the job to $r_{f}$ out of the $n$ servers chosen uniformly at random. 
\item When any $r \leq r_{f}$ tasks are at the head of queues or in service already, cancel all other tasks immediately. If more than $r$ tasks start service simultaneously, retain $r$ randomly chosen ones out of them.
\item When any $k \leq r$ tasks finish, cancel all remaining tasks immediately.
\end{itemize}
Note $k$ tasks may finish before some $r$ start service, and thus we may not need to perform the partial early cancellation in the second step above.
\end{defn}

Recall that the $n$ servers have service time distribution $X$ that is i.i.d.\ across the servers and tasks. The $r_{f} - r$ tasks that are canceled early, help find the shortest $r$ out of the $r_{f}$ queues, thus reducing waiting time. From the $r$ tasks retained, waiting for any $k$ to finish provides diversity and hence reduces service time. 

The special cases $(n, n, n,k)$, $(n,n,k,k)$ and  $(n,r,r,k)$ correspond to the $(n,k)$ fork-join and $(n,k)$ fork-early-cancel and $(n,r,k)$ partial-fork-join systems respectively, which are defined in Section~\ref{sec:prob_setup}.

\subsection{Choosing Parameters $r_f$ and $r$}
We propose a strategy to choose $r_{f}$ and $r$ to minimize expected latency $\E{T}$, subject to a computing cost constraint is $\E{C} \leq \gamma$, and a network cost constraint is $r_{f} \leq r_{max}$. We impose the second constraint because forking to more servers results in higher network cost of remote-procedure-calls (RPCs) to launch and cancel the tasks. 

\begin{defn}[Proposed Redundancy Strategy]
\label{defn:heuristic_strategy}
Choose $r_{f}$ and $r$ to minimize $\E{T}$ subject to constraints $\E{C} \leq \gamma$ and $r_{f} \leq r_{max}$. The solutions are
\begin{align}
r^*_{f} =&  r_{max} \label{eqn:heuristic_opt_r_fork}, \\
r^* =& \argmin_{r \in [0, r_{max}]} \hat{T}(r), \quad   s. t.  \quad \hat{C}(r) \leq \gamma \label{eqn:heuristic_opt_r_keep}
\end{align}
where $\hat{T}(r)$ and $\hat{C}(r)$ are estimates of the expected latency $\E{T}$ and cost $\E{C}$, defined as follows:
\begin{align}
\hat{T}(r) &\triangleq \E{X_{k:r}} + \frac{ \lambda r \E{X_{k:r}^2}}{2 (n- \lambda r \E{X_{k:r}})}, \\
\hat{C}(r) &\triangleq r \E{X_{k:r}}.
\end{align}
\end{defn}

To justify the strategy above, observe that for a given $r$, increasing $r_f$ gives higher diversity in finding the queues with the least-work-left and thus reduces latency. Since $r_f - r$ tasks are canceled early before starting service, $r_{f}$ affects $\E{C}$ only mildly, through the relative task start times of $r$ tasks that are retained. So we conjecture that it is optimal to set $r_f = r_{max}$ in \eqref{eqn:heuristic_opt_r_fork}, the maximum value possible under network cost constraints. Changing $r$ on the other hand does affect both the computing cost and latency significantly. Thus to determine the optimal $r$, we minimize $\hat{T}(r)$ subject to constraints $\hat{C}(r) \leq \gamma$ and $r \leq r_{max}$ as given in \eqref{eqn:heuristic_opt_r_keep}. 

The estimates $\hat{T}(r)$ and $\hat{C}(r)$ are obtained by generalizing \Cref{lem:latency_cost_group_based} for group-based random forking to any $k$, and $r$ that may not divide $n$. When the order statistics of $F_X$ are hard to compute, or $F_X$ itself is not explicitly known, $\hat{T}(r)$ and $\hat{C}(r)$ can be also be found using empirical traces of $X$. 

The sources of inaccuracy in the estimates $\hat{T}(r)$ and $\hat{C}(r)$ are as follows.
\begin{enumerate}

\item For $k > 1$, the latency estimate $\hat{T}(r)$ is a generalization of the split-merge queueing upper bound in \Cref{thm:latency_bnds_gen}. Since the bound becomes loose as $k$ increases, the error $\lvert \hat{T}(r)- \E{T}\rvert$ increases with $k$.

\item The estimates $\hat{T}(r)$ and $\hat{C}(r)$ are by definition independent of $r_{f}$, which is not true in practice. As explained above, for $r_f > r$, the actual $\E{T}$ is generally less than $\hat{T}(r)$, and $\E{C}$ can be slightly higher or lower than $\hat{C}(r)$.

\item Since the estimates $\hat{T}(r)$ and $\hat{C}(r)$ are based on group-based forking, they consider that all $r$ tasks start simultaneously. Variability in relative task start times can result in actual latency and cost that are different from the estimates. For example, from \Cref{thm:E_C_r_trend} we can infer that when $\bar{F}_X$ is log-concave (log-convex), the actual computing cost $\E{C}$ is less than (greater than) $\hat{C}(r)$. 
\end{enumerate}

The factor (1) above is the largest source of inaccuracy, especially for larger $k$ and $\lambda$. Since the estimate $\hat{T}_r$ is an upper bound on the actual latency, the $r^*$ and $r_f^*$ recommended by the strategy are smaller than or equal to their optimal values. Factors (2) and (3) only affect the relative task start times and generally result in a smaller error in estimating $\E{T}$ and $\E{C}$.

\subsection{Simulation Results}

\begin{figure}[t]
    \begin{minipage}[t]{0.48\linewidth}
	\centering
	\includegraphics[width=3.5in]{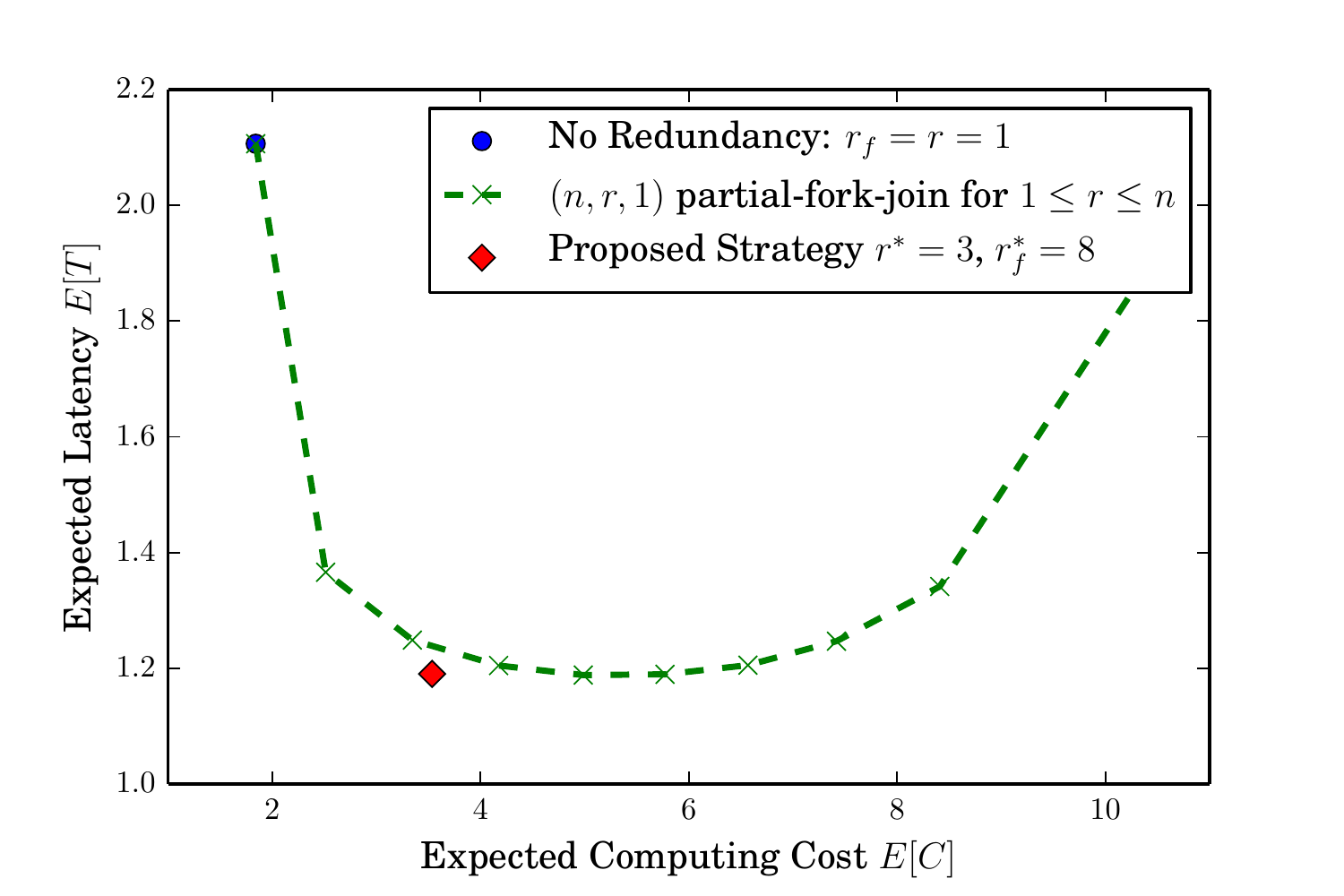}
	\caption{The latency-cost trade-off of the proposed redundancy strategy is close to that of the best $(n,r,k)$ partial-fork-join system. Service time $X \sim \Pareto(1,2.2)$, and the cost constraints are $\E{C} \leq 5$ and $r \leq  r_f \leq 8$ The first constraint is active in this example.  \label{fig:ET_vs_EC_heuristic_pareto}}
    \end{minipage}
    \hspace{0.04\linewidth}
    \begin{minipage}[t]{0.48\linewidth}
	\centering
	\includegraphics[width=3.5in]{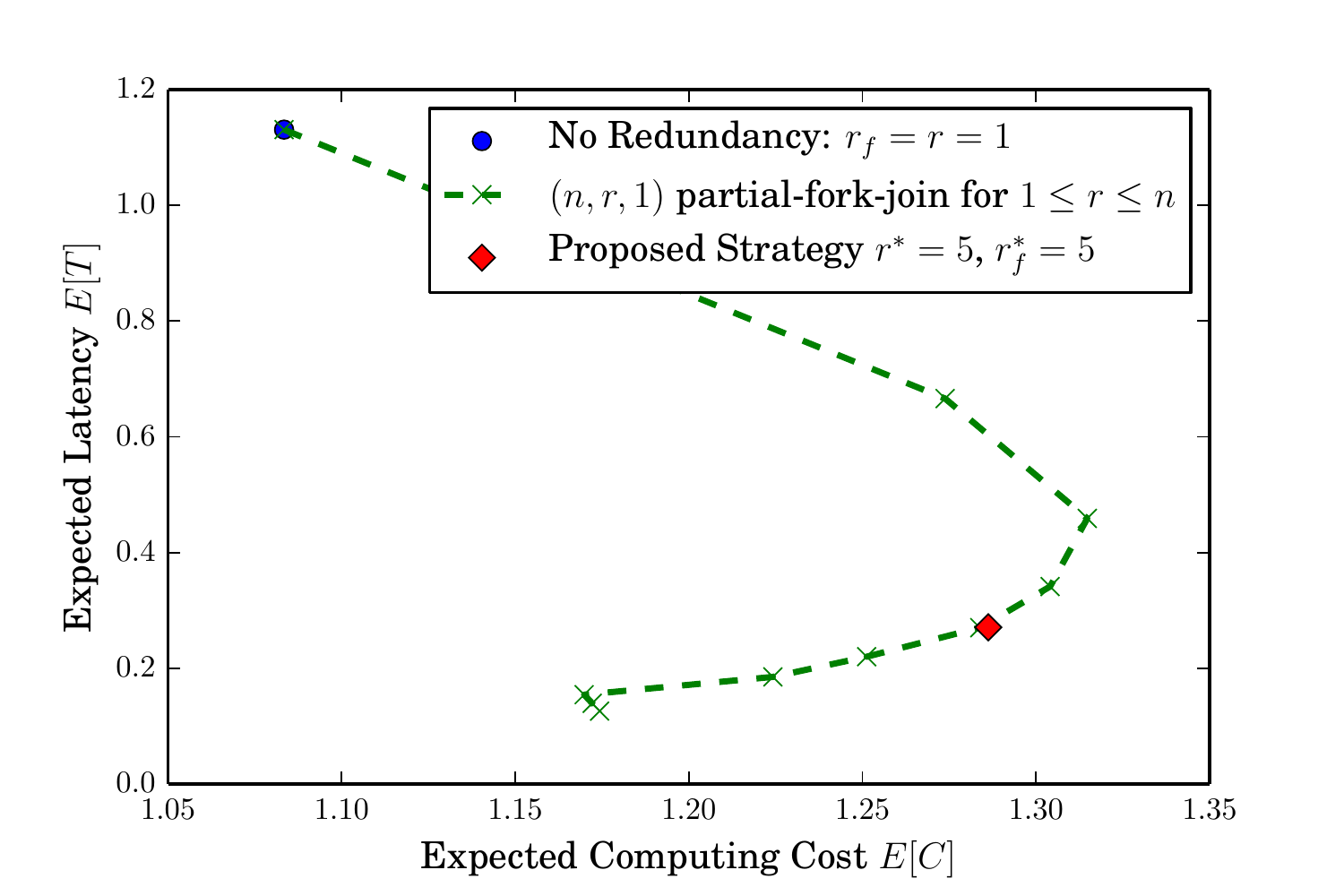}
	\caption{ The latency-cost trade-off of the proposed redundancy strategy is close to that of the best $(n,r,k)$ partial-fork-join system. The service time $X$ is an equiprobable mixture of $\Exp (2)$ and $\SExp(1,1.5)$, and the cost constraints are $\E{C} \leq 2$ and $r \leq r_f \leq 5$. The second constraint is active in this example.  \label{fig:ET_vs_EC_heuristic_mix_exp_shifted_exp}}
    \end{minipage}
\end{figure}

We now present simulation results comparing the proposed strategy given in \Cref{defn:heuristic_strategy} to the $(n,r,k)$ partial-fork-join system with $r$ varying from $k$ to $n$. The service time distributions considered here are neither log-concave nor log-convex, thus making it hard to directly infer the best redundancy strategy using the analysis presented in the previous sections. The simulations are run for $100$ workloads with $1000$ jobs each.

In Fig.~\ref{fig:ET_vs_EC_heuristic_pareto} the service time $X \sim \Pareto(1,2.2)$, $n=10$, $k=1$, and arrival rate $\lambda = 0.7$. The computing and network cost constraints are $\E{C} \leq 5$ and $r_f \leq 8$ respectively. We observe that the proposed strategy gives a significant latency reduction as compared to the no redundancy case ($r=k$ in the $(n,r,k)$ partial-fork-join system). We observe that the proposed strategy gives a latency-cost trade-off that is better than the $(n,r,k)$ partial-fork-join system. Using partial early cancellation ($r_f > r$) gives an additional reduction in latency by providing greater diversity and helping us find the $r$ out of $r_f$ queues with the least work left. 

In Fig.~\ref{fig:ET_vs_EC_heuristic_mix_exp_shifted_exp} we show a case where the cost $\E{C}$ does not always increase with the amount of redundancy $r$. The task service time $X$ is a mixture of an exponential $\Exp(2)$ and a shifted exponential $\SExp(1,1.5)$, each occurring with equal probability. The other parameters are $n=10$, $k=1$, and arrival rate $\lambda = 0.3$. The proposed strategy found using \Cref{defn:heuristic_strategy} is $r^* = r^*_{f} = r_{max} = 5$, limited by the $r_{f} \leq r_{max}$ constraint rather than the $\E{C} \leq \gamma$ constraint. Since $r_f = r$, it coincides exactly with the $(n,r,k)$ partial-fork-join system.

\section{CONCLUDING REMARKS}
\label{sec:conclu}
In this paper we consider a redundancy model where each incoming job is forked to queues at multiple servers and we wait for any one replica to finish. We analyze how redundancy affects the latency, and the cost of computing time, and demonstrate how the log-concavity of service time is a key factor affecting the latency-cost trade-off. Some key insights from this analysis are:
\begin{itemize}
\item For log-convex service times, forking to more servers (more redundancy) reduces both latency and cost. On the other hand, for log-concave service times, more redundancy can reduce latency only at the expense of an increase in cost. 
\item Early cancellation of redundant requests can save both latency and cost for log-concave service time, but it is not effective for log-convex service time. 
\end{itemize}

Using these insights, we also propose a general redundancy strategy for an arbitrary service time distribution, that may be neither log-concave nor log-convex. This strategy can also be used on empirical traces of service time, when a closed-form expression of the distribution is not known. 

Ongoing work includes developing online strategies to simultaneously learn the service time distribution, and the best redundancy strategy. More broadly, the proposed redundancy techniques can be used to reduce latency in several applications beyond the realm of cloud storage and computing systems, for example crowdsourcing, algorithmic trading, manufacturing etc.\ 
%
%

\section{ACKNOWLEDGMENTS}
We thank Da Wang, Devavrat Shah, Sem Borst and Rhonda Righter for helpful dicussions. We also thank anonymous reviews for valuable feedback that helped improve this work.

\appendix
\section*{APPENDIX}

%
\section{LOG-CONCAVITY OF $\bar{F}_X$}
\label{sec:tail_properties}
In this section we present some properties and examples of log-concave and log-convex random variables that are relevant to this work. For more properties please see \cite{log_concave}.

\begin{propty}[Jensen's Inequality]
\label{propty:jensen_ineq}
If $\bar{F}_X$ is log-concave, then for $0 < \theta < 1$ and for all $x, y \in [0, \infty)$,
\begin{align}
\Pr(X> \theta x + (1- \theta) y) &\geq \Pr(X > x)^{\theta} \Pr(X>y)^{1-\theta}. \label{eqn:jensens}
\end{align}
The inequality is reversed if $\bar{F}_X$ is log-convex.
\end{propty}
\begin{proof}
Since $\bar{F}_X$ is log-concave, $\log \bar{F}_X$ is concave. Taking $\log$ on both sides on \eqref{eqn:jensens} we get the Jensen's inequality which holds for concave functions.
\end{proof}
In past literature saying $X$ is log-concave usually means that $f$ is log-concave. This implies that $F$ and $\bar{F}$. However log-convex $f$, does not always imply log-convexity of $F$ and $\bar{F}$.

\begin{propty}[Scaling]
\label{propty:scaling}
If $\bar{F}_X$ is log-concave, for $0 < \theta < 1$,
\begin{align}
 \Pr(X > x)  &\leq \Pr(X> \theta x)^{1/\theta} \label{eqn:scaling}
\end{align}
The inequality is reversed if $\bar{F}_X$ is log-convex.
\end{propty}

\begin{proof}
We can derive \eqref{eqn:scaling} by setting $y = 0$ in \eqref{eqn:jensens}.
\begin{align}
\Pr ( X> \theta x + (1-\theta) 0) &\geq \Pr(X>x)^{\theta} \Pr(X>0)^{1-\theta} \label{eqn:scaling_proof_1}, \\
\Pr( X >\theta x) &\geq \Pr(X>x)^{\theta} \label{eqn:scaling_proof_2}.
\end{align}
To get \eqref{eqn:scaling_proof_2} we observe that if $\bar{F}_X$ is log-concave, then $\Pr(X>0)$ has to be $1$. Otherwise log-concavity is violated at $x=0$. Raising both sides of \eqref{eqn:scaling_proof_2} to power $1/\theta$ we get \eqref{eqn:scaling}. The reverse inequality of log-convex $\bar{F}_X$ can be proved similarly.
\end{proof}

\begin{propty}[Sub-multiplicativity]
\label{propty:sub_super_additivity}
If $\bar{F}_X$ is log-concave, the conditional tail probability of $X$ satisfies for all  $t, x >0$,
\begin{align}
 &\Pr(X > x + t| X> t) \leq \Pr(X>x) \label{eqn:sub_additivity} \\
 \Leftrightarrow &\Pr(X>x+t) \leq \Pr(X>x)\Pr(X>t)
\end{align}
The inequalities above are reversed if $\bar{F}_X$ is log-convex. 
\end{propty}

\begin{proof}
\begin{align}
& \Pr(X>x) \Pr(X>t) \\
&= \Pr \left(X> \frac{x}{x+t} (x+t)\right) \Pr \left(X> \frac{t}{x+t} (x+t) \right) \label{eqn:sub_additivity_proof_1},\\
&\geq \Pr(X>x+t)^{\frac{x}{x+t}} \Pr(X>x+t)^{\frac{t}{x+t}}, \label{eqn:sub_additivity_proof_2}
\end{align}
where we apply \Cref{propty:scaling} to \eqref{eqn:sub_additivity_proof_1} to get \eqref{eqn:sub_additivity_proof_2}. Equation \eqref{eqn:sub_additivity} follows from \eqref{eqn:sub_additivity_proof_2}.
\end{proof}

Note that for exponential $F_X$ which is memoryless, \eqref{eqn:sub_additivity} holds with equality. Thus log-concave distributions can be thought to have `optimistic memory', because the conditional tail probability decreases over time. On the other hand, log-convex distributions have `pessimistic memory' because the conditional tail probability increases over time. The definition of the notions `new-better-than-used' in \cite{koole_righter_2008} is same as  \eqref{eqn:sub_additivity}. By \Cref{propty:sub_super_additivity} log-concavity of $\bar{F}_X$ implies that $X$ is new-better-than-used. New-better-than-used distributions are referred to as `light-everywhere' in \cite{shah_when_2013} and `new-longer-than-used' in \cite{sun_shroff}. 
\begin{propty}[Mean Residual Life]
If $\bar{F}_X$ is log-concave (log-convex), $\E{X-t |X>t}$, the mean residual life after time $t>0$ has elapsed is non-increasing (non-decreasing) in $t$.
\end{propty}


\begin{proof}[of \Cref{lem:r_E_X_1_r_trend}]
\Cref{lem:r_E_X_1_r_trend} is true for log-concave $\bar{F}_X$ if $r \E{X_{1:r}} \leq (r+1) \E{X_{1:r+1}}$ for all integers $r \geq 1$. This inequality can be simplified as follows.
\begin{align}
r \E{X_{1:r}} &\leq (r+1) \E{X_{1:r+1}} \label{eqn:r_E_X_1_r_trend_proof_6} \\
\Leftrightarrow r \int_{0}^{\infty} \Pr (X_{1:r} > x) dx &\leq \int_{0}^{\infty} (r+1) \Pr( X_{1:r+1} > x) dx , \label{eqn:r_E_X_1_r_trend_proof_5}\\
\Leftrightarrow r \int_{0}^{\infty} \Pr (X > x)^r dx &\leq \int_{0}^{\infty} (r+1) \Pr( X > x)^{r+1} dx , \label{eqn:r_E_X_1_r_trend_proof_4}\\
\Leftrightarrow \int_{0}^{\infty} \Pr \left(X > \frac{x'}{r}\right)^r dx' &\leq \int_{0}^{\infty} \Pr \left( X > \frac{x'}{r+1}\right)^{r+1} dx', \label{eqn:r_E_X_1_r_trend_proof_3}
\end{align} 

We get \eqref{eqn:r_E_X_1_r_trend_proof_5} using the fact that the expected value of a non-negative random variable is equal to the integral of its tail distribution. To get \eqref{eqn:r_E_X_1_r_trend_proof_4} observe that since $X_{1:r} = \min(X_1, X_2, \cdots , X_r)$ for i.i.d.\ $X_i$, we have $\Pr(X_{1:r} > x) = \Pr(X>x)^r$ for all $x > 0$. Similarly $\Pr(X_{1:r+1} > x) = \Pr(X>x)^{r+1}$. Next we perform a change of variables on both sides of \eqref{eqn:r_E_X_1_r_trend_proof_4} to get \eqref{eqn:r_E_X_1_r_trend_proof_3}.

Now we use \Cref{propty:scaling} to compare the two integrands in \eqref{eqn:r_E_X_1_r_trend_proof_3}. Setting $\theta = r/r+1$ and $x = x'/r$ in \Cref{propty:scaling}, we get
\begin{align}
\Pr \left(X > \frac{x'}{r} \right)^r &\leq \Pr \left( X > \frac{x'}{r+1} \right)^{r+1} \quad \text{ for all  } x' \geq 0. \label{eqn:using_scaling_propty}
\end{align}

Hence, by \eqref{eqn:using_scaling_propty} and the equivalences in \eqref{eqn:r_E_X_1_r_trend_proof_6}-\eqref{eqn:r_E_X_1_r_trend_proof_3} it follows that for log-concave $\bar{F}_X$ if $r \E{X_{1:r}}$ is non-decreasing in $r$. For log-convex $\bar{F}_X$, we can show that $r \E{X_{1:r}}$ is non-increasing in $r$ by reversing all inequalities above.

\end{proof}
%

\begin{propty}[Hazard Rates]
If $\bar{F}_X$ is log-concave (log-convex), then the hazard rate $h(x)$, which is defined by $-\bar{F}'_X(x)/\bar{F}_X(x)$, is non-decreasing (non-increasing) in $x$. 
\end{propty}
%
\begin{propty}[Coefficient of Variation]
The coefficient of variation $C_v = \sigma/\mu$ is the ratio of the standard deviation $\sigma$ and mean $\mu$ of random variable $X$. For log-concave (log-convex) $X$, $C_v \leq 1$ ($C_v \geq 1$), and $C_v = 1$ when $X$ is pure exponential. 
\end{propty}
\begin{propty}[Examples of Log-concave $\bar{F}_X$]
The following distributions have log-concave $\bar{F}_X$:
\begin{itemize}
\item Shifted Exponential (Exponential plus constant $\Delta > 0$)
\item Uniform over any convex set
\item Weibull with shape parameter $c \geq 1$
\item Gamma with shape parameter $c \geq 1$
\item Chi-squared with degrees of freedom $c \geq 2$
\end{itemize}
\end{propty}

\begin{propty}[Examples of Log-convex $\bar{F}_X$]
The following distributions have log-convex $\bar{F}_X$:
\begin{itemize}
\item Exponential
\item Hyper Exponential (Mixture of exponentials)
\item Weibull with shape parameter $0 < c <1$
\item Gamma with shape parameter $0 < c < 1$
\end{itemize}
\end{propty}

\section{PROOFS FOR THE $k=1$ Case}
\label{sec:rep_queueing_proofs}

\begin{proof}[of \Cref{thm:E_C_r_trend}]
Using \eqref{eqn:C_expr}, we can express the cost $C$ in terms of the relative task start times $t_i$, and $S$ as follows. Since only $r$ tasks are invoked, the relative start times $t_{r+1}, \dots , t_n$ are equal to $\infty$. 
\begin{align}
C &= S + \posfunc{ S - t_2} + \cdots + \posfunc{S - t_r} \label{eqn:C_expr_r},
\end{align}
where $S$ is the time between the start of service of the earliest task, and when any $1$ of the $r$ tasks finishes. The tail distribution of $S$ is given by
\begin{align}
\Pr(S> s) &= \prod_{i=1}^{r} \Pr(X > s - t_i).
\end{align}
By taking expectation on both sides of \eqref{eqn:C_expr_r} and simplifying we get,
\begin{align}
\E{C} &= \sum_{u=1}^{r} \int_{t_u}^{\infty}  \Pr(S>s) ds, \\
&= \sum_{u=1}^{r} u \int_{t_u}^{t_{u+1}} \Pr(S>s) ds, \\
&= \sum_{u=1}^{r} u \int_{0}^{t_{u+1}-t_u} \Pr(S > t_u + x) dx ,\\
&=  \sum_{u=1}^{r}  u \int_{0}^{t_{u+1}-t_u} \prod_{i=1}^{u} \Pr(X > x+t_u-t_i) dx .\label{eqn:E_C_simp}
\end{align}

We now prove that for log-concave $\bar{F}_X$, $\E{C} \geq \E{X}$. The proof that $\E{C} \leq \E{X}$ when $\bar{F}_X$ is log-convex follows similarly with all inequalities below reversed. We express the integral in \eqref{eqn:E_C_simp} as,
\begin{align}
\E{C} &= \sum_{u=1}^{r}  u \left(\int_{0}^{\infty} \prod_{i=1}^{u} \Pr(X > x+t_u-t_i) dx - \int_{0}^{\infty} \prod_{i=1}^{u} \Pr(X > x+t_{u+1}-t_i) dx \right), \label{eqn:E_C_E_X_proof_1} \\
&= \sum_{u=1}^{r} \left( \int_{0}^{\infty} \prod_{i=1}^{u} \Pr \left(X > \frac{x'}{u}+t_u-t_i \right) dx' - \int_{0}^{\infty} \prod_{i=1}^{u} \Pr \left(X > \frac{x'}{u}+t_{u+1}-t_i \right) dx'
 \right) , \label{eqn:E_C_E_X_proof_2}\\
 &= \E{X} + \sum_{u=2}^{r} \int_{0}^{\infty} \left( \prod_{i=1}^{u} \Pr \left(X > \frac{x'}{u}+t_u-t_i \right) - \prod_{i=1}^{u-1} \Pr \left(X > \frac{x'}{u-1}+t_{u}-t_i \right) \right) dx' \label{eqn:E_C_E_X_proof_3},\\
&\geq \E{X},
\end{align}

where in \eqref{eqn:E_C_E_X_proof_1} we express each integral in \eqref{eqn:E_C_simp} as a difference of two integrals from $0$ to $\infty$. In \eqref{eqn:E_C_E_X_proof_2} we perform a change of variables $x = x'/u$. In \eqref{eqn:E_C_E_X_proof_3} we rearrange the grouping of the terms in the sum; the $u^{th}$ negative integral is put in the $u+1$ term of the summation. Then the first term of the summation is simply  $\int_{0}^{\infty} \Pr(X>x) dx$ which is equal to $\E{X}$. In \eqref{eqn:E_C_E_X_proof_3} we use the fact that each term in the summation in \eqref{eqn:E_C_E_X_proof_2} is positive when $\bar{F}_X$ is log-concave. This is shown in Lemma~\ref{lem:E_C_E_X_lemma} below. 

Next we prove that for log-concave $\bar{F}_X$, $\E{C} \leq r \E{X_{1:r}}$. Again, the proof of $\E{C} \geq r \E{X_{1:r}}$ when $\bar{F}_X$ is log-convex follows with all the inequalities below reversed.

\begin{align}
\E{C} &\leq \sum_{u=1}^{r} u \int_{0}^{t_{u+1}-t_u} \prod_{i=1}^{u}\Pr \left(X > \frac{u(x+t_u-t_i)}{r} \right) ^{r/u} dx \label{eqn:E_C_r_E_X_1_r_proof_1},\\
&= \sum_{u=1}^{r} \left( \int_{0}^{\infty} \prod_{i=1}^{u}\Pr \left(X > \frac{x'+u(t_u-t_i)}{r} \right) ^{r/u} dx' - \int_{0}^{\infty} \prod_{i=1}^{u}\Pr \left(X > \frac{x'+u(t_{u+1}-t_i)}{r} \right) ^{r/u} dx' \right) , \label{eqn:E_C_r_E_X_1_r_proof_2}\\
&= \int_{0}^{\infty} \Pr \left( X > \frac{x'}{r} \right)^r dx' + \sum_{u=2}^{r} \left( \int_{0}^{\infty} \prod_{i=1}^{u}\Pr \left(X > \frac{x' + u(t_u-t_i)}{r} \right) ^{r/u} dx'  - \right. \nonumber \\ 
&\hspace{ 5.5 cm} \left. \int_{0}^{\infty} \prod_{i=1}^{u-1} \Pr \left(X > \frac{x'+ (u-1) (t_{u}-t_i)}{r} \right)^{\frac{r}{u-1}}dx'
 \right), \label{eqn:E_C_r_E_X_1_r_proof_3}\\
 &\leq r \E{X_{1:r}},
\end{align}
where we get  \eqref{eqn:E_C_r_E_X_1_r_proof_1} by applying \Cref{propty:scaling} to \eqref{eqn:E_C_simp}. In \eqref{eqn:E_C_r_E_X_1_r_proof_2} we express the integral as a difference of two integrals from $0$ to $\infty$, and perform a change of variables $x = x'/u$. In \eqref{eqn:E_C_r_E_X_1_r_proof_3} we rearrange the grouping of the terms in the sum; the $u^{th}$ negative integral is put in the $u+1$ term of the summation. The first term is equal to $r \E{X_{1:r}}$. We use \Cref{lem:E_C_r_E_X_1_r_lemma} to show that each term in the summation in \eqref{eqn:E_C_r_E_X_1_r_proof_3} is negative when $\bar{F}_X$ is log-concave.
\end{proof}

\begin{lem}
\label{lem:E_C_E_X_lemma}
If $\bar{F}_X$ is log-concave,
\begin{align}
& \prod_{i=1}^{u} \Pr \left(X > \frac{x'}{u}+t_u-t_i \right) \geq \prod_{i=1}^{u-1} \Pr \left(X > \frac{x'}{u-1}+t_{u}-t_i \right).
\end{align}
The inequality is reversed for log-convex $\bar{F}_X$. 
\end{lem}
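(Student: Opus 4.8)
The plan is to take logarithms and reduce the claim to a statement about the function $g(x) \triangleq \log \bar{F}_X(x)$. Because $X>0$ we have $\bar{F}_X(0)=1$, hence $g(0)=0$; moreover $g$ is non-increasing, and log-concavity of $\bar{F}_X$ is precisely concavity of $g$. Writing $a_i \triangleq t_u - t_i$, the monotonicity $t_1\le t_2\le\cdots$ gives $a_1 \geq a_2 \geq \cdots \geq a_u = 0$, so in particular all $a_i\geq 0$ and they are non-increasing in $i$. After exponentiating back, it suffices to prove
\[
\sum_{i=1}^{u} g\!\left(\tfrac{x'}{u}+a_i\right) \;\geq\; \sum_{i=1}^{u-1} g\!\left(\tfrac{x'}{u-1}+a_i\right),
\]
and I abbreviate $b_i \triangleq \tfrac{x'}{u}+a_i$ for $1\le i\le u$ and $c_i \triangleq \tfrac{x'}{u-1}+a_i$ for $1\le i\le u-1$.

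The structural fact I would exploit is that the two collections of arguments carry the \emph{same total mass}: since $a_u=0$,
\[
\sum_{i=1}^{u} b_i \;=\; x' + \sum_{i=1}^{u-1} a_i \;=\; \sum_{i=1}^{u-1} c_i .
\]
Because $g(0)=0$, I may append a dummy point at $0$ to the right-hand collection without changing its $g$-sum, so that the two multisets $\mathcal{B}=\{b_1,\dots,b_u\}$ and $\mathcal{C}=\{c_1,\dots,c_{u-1},0\}$ each have $u$ elements and equal total. I then claim $\mathcal{C}$ majorizes $\mathcal{B}$. Both are already listed in decreasing order, since the $a_i$ (hence the $b_i$ and $c_i$) are non-increasing in $i$ and the appended $0$ is the smallest entry of $\mathcal{C}$. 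Thus for each $m\le u-1$ the top-$m$ partial-sum comparison reduces to
\[
\sum_{i=1}^{m} c_i - \sum_{i=1}^{m} b_i \;=\; m\,x'\Big(\tfrac{1}{u-1}-\tfrac{1}{u}\Big) \;=\; \frac{m\,x'}{u(u-1)} \;\geq\; 0,
\]
while at $m=u$ the totals are equal. This is exactly the majorization $\mathcal{B}\prec\mathcal{C}$.

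Finally I would invoke Karamata's inequality: for concave $g$ and $\mathcal{B}\prec\mathcal{C}$ one has $\sum_{x\in\mathcal{B}} g(x) \geq \sum_{x\in\mathcal{C}} g(x)$. The dummy point contributes $g(0)=0$, so this is precisely $\sum_{i=1}^{u} g(b_i) \geq \sum_{i=1}^{u-1} g(c_i)$, which is the desired inequality. For log-convex $\bar{F}_X$ the function $g$ is convex, Karamata's inequality reverses under the same majorization, and the identical computation yields the reversed product inequality.

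I expect the main obstacle to be setting up the majorization correctly, and in particular spotting the trick of padding the shorter collection with a zero (legitimate only because $g(0)=0$) so that a comparison between equal-cardinality, equal-total multisets becomes available; once that reduction is in place the partial-sum check is a one-line positivity computation and the conclusion is immediate. Minor points to handle carefully are verifying that the $a_i$ are non-increasing in $i$ (so no re-sorting is needed to read off the partial sums) and confirming $g(0)=0$ from $X>0$. If one prefers to avoid citing Karamata, the same conclusion follows by an elementary sequence of Robin Hood transfers from $\mathcal{C}$ to $\mathcal{B}$, each of which increases the $g$-sum by concavity.
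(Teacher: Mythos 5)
Your proof is correct, and it takes a genuinely different route from the paper's. The paper argues multiplicatively, in probability language: it factors the left-hand side as $\Pr(S>t_u)\prod_{i=1}^{u}\Pr\left(X > \frac{x'}{u}+t_u-t_i \mid X > t_u-t_i\right)$, absorbs the unconditioned $i=u$ factor $\Pr\left(X>\frac{x'}{u}\right)$ into the other $u-1$ conditional factors via sub-multiplicativity (the new-better-than-used property, \Cref{propty:sub_super_additivity}), then applies the scaling property (\Cref{propty:scaling}) to each conditional (residual-life) distribution --- relying along the way on the fact that the residual life of a log-concave variable is again log-concave, which the paper asserts without proof --- and finally unconditions. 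You instead work additively with $g=\log\bar{F}_X$: after padding the shorter collection with a zero (legitimate because $g(0)=0$ follows from $X>0$), a one-line partial-sum computation establishes the majorization $\mathcal{B}\prec\mathcal{C}$, and Karamata's inequality for concave $g$ finishes the job. Your route is more self-contained and arguably cleaner: one classical inequality replaces the chain of two appendix properties plus the unproved closure of log-concavity under conditioning, and the log-convex reversal is immediate because Karamata flips for convex functions, whereas the paper must reverse every step of its chain. What the paper's route buys is uniformity: it reuses exactly the machinery already established in the appendix for other results, staying entirely in tail-probability form. One small technicality that neither treatment makes explicit: if $\bar{F}_X$ has bounded support, some tail factors vanish and $g=-\infty$ there; since every $c_i \geq b_i$ and every $c_i \geq b_u$, the right-hand side vanishes whenever the left-hand side does, so the log-concave inequality holds trivially in that degenerate case (and a log-convex tail with $g(0)=0$ cannot have bounded support, so no issue arises in the reversed direction).
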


\begin{proof}[of \Cref{lem:E_C_E_X_lemma}]
We bound the left hand side expression as follows.
\begin{align}
\prod_{i=1}^{u} \Pr \left(X > \frac{x}{u}+t_u-t_i \right) &= \Pr(S>t_u) \prod_{i=1}^{u} \Pr \left(X > \frac{x}{u}+t_u-t_i | X > t_u-t_i\right), \label{eqn:E_C_E_X_lemma_proof_1}\\
&= \Pr(S>t_u)  \Pr \left(X> \frac{x}{u} \right)^{\frac{u-1}{u-1}} \times \prod_{i=1}^{u-1} \Pr \left(X > \frac{x}{u}+t_u-t_i | X > t_u-t_i\right), \label{eqn:E_C_E_X_lemma_proof_2}\\
&\geq \Pr(S>t_u) \prod_{i=1}^{u-1} \Pr \left(X > \frac{x}{u}+t_u-t_i | X > t_u-t_i\right)^{\frac{u}{u-1}}, \label{eqn:E_C_E_X_lemma_proof_3} \\
&\geq \Pr(S >t_u) \prod_{i=1}^{u-1} \Pr \left(X > \frac{x}{u-1}+t_u-t_i | X > t_u-t_i\right), \label{eqn:E_C_E_X_lemma_proof_4} \\
& = \prod_{i=1}^{u-1} \Pr \left(X > \frac{x}{u-1}+t_{u}-t_i \right)
\end{align}
where we use Property~\ref{propty:sub_super_additivity} to get \eqref{eqn:E_C_E_X_lemma_proof_3}. The inequality in \eqref{eqn:E_C_E_X_lemma_proof_4} follows from applying \Cref{propty:scaling} to the conditional distribution $\Pr(Y > x'/u) = \Pr(X > x'/u +t_u-t_i | X > t_u-t_i )$, which is also log-concave. 

For log-convex $\bar{F}_X$ all the inequalities can be reversed. 
\end{proof}

\begin{lem}
\label{lem:E_C_r_E_X_1_r_lemma}
If $\bar{F}_X$ is log-concave,
\begin{align}
& \prod_{i=1}^{u}\Pr \left(X > \frac{x + u(t_u-t_i)}{r} \right) ^{r/u}  \leq  \prod_{i=1}^{u-1} \Pr \left(X > \frac{x+ (u-1) (t_{u}-t_i)}{r} \right)^{\frac{r}{u-1}}
\end{align}
The inequality is reversed for log-convex $\bar{F}_X$. 
\end{lem}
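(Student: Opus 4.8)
The plan is to mirror the structure of \Cref{lem:E_C_E_X_lemma}, but I expect a cleaner route through the Jensen-type inequality of \Cref{propty:jensen_ineq} instead of the sub-multiplicativity plus scaling combination used there. First I would introduce shorthand for the two families of arguments: write $c_i = (x + u(t_u-t_i))/r$ for $1 \leq i \leq u$ and $d_i = (x + (u-1)(t_u-t_i))/r$ for $1 \leq i \leq u-1$, so that the claim becomes $\prod_{i=1}^{u}\Pr(X>c_i)^{r/u} \leq \prod_{i=1}^{u-1}\Pr(X>d_i)^{r/(u-1)}$. Since $t_u - t_u = 0$, the $i=u$ factor of the left product is $\Pr(X>c_u)$ with the special value $c_u = x/r$, and this factor will play a distinguished role.

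The key observation I would establish next is the purely algebraic identity $d_i = \frac{u-1}{u}\, c_i + \frac{1}{u}\, c_u$ for every $i \leq u-1$. That is, each $d_i$ is a convex combination of $c_i$ and $c_u$ with weights $(u-1)/u$ and $1/u$, which sum to one and are both positive for $u \geq 2$. This is a one-line verification by clearing denominators, and it is the heart of the argument.

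Given the identity, I would apply \Cref{propty:jensen_ineq} with $\theta = (u-1)/u$, taking its $x$ to be $c_i$ and its $y$ to be $c_u$, to obtain $\Pr(X>d_i) \geq \Pr(X>c_i)^{(u-1)/u}\,\Pr(X>c_u)^{1/u}$ for each $i = 1,\dots,u-1$ in the log-concave case. Taking the product over these $u-1$ indices collects the $c_u$-factors into $\Pr(X>c_u)^{(u-1)/u}$, which merges with $\prod_{i=1}^{u-1}\Pr(X>c_i)^{(u-1)/u}$ to reconstitute exactly $\big(\prod_{i=1}^{u}\Pr(X>c_i)\big)^{(u-1)/u}$, i.e. the full $u$-fold product raised to the correct power. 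Raising both sides to the power $r/(u-1)$ then yields the stated inequality. For log-convex $\bar{F}_X$, the inequality in \Cref{propty:jensen_ineq} reverses, and every step reverses along with it, giving the claimed reversal.

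The hard part is not any estimate but the bookkeeping in the product step: one must see that the $i=u$ term of the left-hand $u$-fold product is precisely the $c_u = x/r$ factor that Jensen's inequality redistributes as the $1/u$-weighted piece across the $u-1$ applications, so that the exponents collapse to $r/u$ exactly. I would check this accounting most carefully. Everything else is a routine convexity argument, so unlike \Cref{lem:E_C_E_X_lemma} neither conditioning nor scaling of conditional distributions is required here.
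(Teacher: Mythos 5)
Your proof is correct and is essentially the paper's own argument: the paper also proves this lemma via Jensen's inequality (\Cref{propty:jensen_ineq}), splitting off the $i=u$ factor $\Pr(X>x/r)$ with exponent $1/u$ per term and combining it with each $\Pr(X>c_i)^{(u-1)/u}$ into exactly the convex combination $d_i = \frac{u-1}{u}c_i + \frac{1}{u}c_u$ that you identify. The contrast you anticipated with \Cref{lem:E_C_E_X_lemma} (sub-multiplicativity plus scaling) applies only to that other lemma, not to this one, so your route and the paper's coincide.
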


\begin{proof}[of \Cref{lem:E_C_r_E_X_1_r_lemma}]
We start by simplifying the left-hand side expression, raised to the power $(u-1)/r$.
\begin{align}
\prod_{i=1}^{u}\Pr \left(X > \frac{x + u(t_u-t_i)}{r} \right) ^{(u-1)/u} &= \Pr \left( X > \frac{x}{r} \right)^{\frac{u-1}{u}} \prod_{i=1}^{u-1} \Pr \left(X > \frac{x+ u (t_{u}-t_i)}{r} \right)^{\frac{u-1}{u}} \\
&=  \prod_{i=1}^{u-1} \Pr \left( X > \frac{x}{r} \right)^{\frac{1}{u}}  \Pr \left(X > \frac{x+ u (t_{u}-t_i)}{r} \right)^{\frac{u-1}{u}} \\
&\leq \prod_{i=1}^{u-1}  \Pr \left(X > \frac{x+ (u-1) (t_{u}-t_i)}{r} \right) \label{eqn:E_C_r_E_X_1_r_lemma_proof}
\end{align}
where \eqref{eqn:E_C_r_E_X_1_r_lemma_proof} follows from the log-concavity of $\Pr(X>x)$, and the Jensen's equality. The inequality is reversed for log-convex $\bar{F}_X$. 
\end{proof}
\section{PROOFS FOR GENERAL $k$}
\label{sec:coded_queueing_proofs}
\begin{proof}[of \Cref{thm:latency_bnds_gen}]
To find the upper bound on latency, we consider a related queueing system called the split-merge queueing system. In the split-merge system all the queues are blocked and cannot serve subsequent jobs until $k$ out of $n$ tasks of the current job are complete. Thus the latency of the split-merge system serves as an upper bound on that of the fork-join system. In the split-merge system we observe that jobs are served one-by-one, and no two jobs are served simultaneously. So it is equivalent to an $M/G/1$ queue with Poisson arrival rate $\lambda$, and service time $X_{k:n}$. The expected latency of an $M/G/1$ queue is given by the Pollaczek-Khinchine formula \cite[Chapter~5]{dsp_gallager}, and it reduces to the upper bound in \eqref{eqn:upper_bnd_gen}.

To find the lower bound we consider a system where the job requires $k$ out of $n$ tasks to complete, but all jobs arriving before it require only $1$ task to finish. Then the expected waiting time in queue is equal to the second term in \eqref{eqn:upper_bnd_gen} with $k$ set to $1$. Adding the expected service time $\E{X_{k:n}}$ to this lower bound on expected waiting time, we get the lower bound \eqref{eqn:lower_bnd_gen} on the expected latency.
\end{proof}

\begin{proof}[of \Cref{lem:tighter_upper_bnd}]
This upper bound is a generalization of the bound on the mean response time of the $(n,n)$ fork-join system with exponential service time presented in \cite{nelson_tantawi}. To find the bound, we first observe that the response times experienced by the tasks in the $n$ queues form a set of associated random variables \cite{assoc_rand_vars}. Then we use the property of associated random variables that their expected maximum is less than that for independent variables with the same marginal distributions. Unfortunately, this approach cannot be extended to the $k < n$ case because this property of associated variables does not hold for the $k^{th}$ order statistic for $k<n$.
\end{proof}

\begin{proof}[of \Cref{thm:cost_bounds_gen}]
A key observation used in proving the cost bounds is that at least $n-k+1$ out of the $n$ tasks of a job $i$ start service at the same time. This is because when the $k^{th}$ task of Job $(i-1)$ finishes, the remaining $n-k$ tasks are canceled immediately. These $n-k+1$ queues start working on the tasks of Job $i$ at the same time. 

To prove the upper bound we divide the $n$ tasks into two groups, the $k-1$ tasks that can start early, and the $n-k+1$ which start at the same time after the last tasks of the previous job are terminated. We consider a constraint that all the $k-1$ tasks in the first group and $1$ of the remaining $n-k+1$ tasks needs to be served for completion of the job. This gives an upper bound on the computing cost because we are not taking into account the case where more than one tasks from the second group can finish service before the $k-1$ tasks in the first group. For the $n-k+1$ tasks in the second group, the computing cost is equal to $n-k+1$ times the time taken for one of them to complete. The computing time spent on the first $k-1$ tasks is at most $(k-1) \E{X}$. Adding this to the second group's cost, we get the upper bound \eqref{eqn:E_C_gen_upper_bnd}.

We observe that the expected computing cost for the $k$ tasks that finish is at least $ \sum_{i=1}^{k} \E{X_{i:n}}$, which takes into account full diversity of the redundant tasks. Since we need $k$ tasks to complete in total, at least $1$ of the $n-k+1$ tasks that start simultaneously needs to be served. Thus, the computing cost of the $(n-k)$ redundant tasks is at least $(n-k) \E{X_{1:n-k+1}}$. Adding this to the lower bound on the first group's cost, we get \eqref{eqn:E_C_gen_lower_bnd}.
\end{proof}

\begin{proof}[of \Cref{thm:early_cancel_gen}]
Since exactly $k$ tasks are served, and others are cancelled before they start service, it follows that the expected computing cost $\E{C} = k \E{X}$. In the sequel, we find an upper bound on the latency of the $(n,k)$ fork-early-cancel system. 

First observe that in the $(n,k)$ fork-early-cancel system, the $n-k$ redundant tasks that are canceled early help find the $k$ shortest queues. The expected task arrival rate at each server is $\lambda k/n$, which excludes the redundant tasks that are canceled before they start service.

Consider an $(n,k,k)$ partial fork system without redundancy, where the $k$ tasks of each job are assigned to $k$ out of $n$ queues chosen uniformly at random. The job exits the system when all $k$ tasks are complete. The expected task arrival rate at each server is $\lambda k/n$, same as the $(n,k)$ fork-early-cancel system. However, the $(n,k)$ fork-early-cancel system gives lower latency because having the $n-k$ redundant tasks provides diversity and helps find the $k$ shortest queues. Thus the latency of the $(n,k,k)$ partial-fork-join system is bounded below by that of the $(n,k)$ fork-early-cancel system. 

Now let us upper bound the latency $\E{T^{(pf)}}$ of the partial fork system. Each queue has arrival rate $\lambda k/n$,  and service time distribution $F_X$. Using the approach in \cite{nelson_tantawi} we can show that the response times (waiting plus service time) $R_i$, $ 1 \leq i \leq k$ of the $k$ queues serving each job form a set of associated random variables. Then by the property that the expected maximum of $k$ associated random variables is less than the expected maximum of $k$ independent variables with the same marginal distributions we can show that,
\begin{align}
\E{T} &\leq \E{T^{(pf)}} \\
&\leq \E{\max \left( R_1, R_2, \cdots R_k \right)}.
\end{align}
The expected maximum can be numerically evaluated from distribution of $R$. From the transform analysis given in \cite[Chapter~25]{mor_book}, we know that the Laplace-Stieltjes transform $R(s)$ of the probability density of $R$ is same as \eqref{eqn:R_i_pdf_laplace_tx}, but with $\lambda$ replaced by $\lambda k/n$.
\end{proof}

%


\bibliographystyle{acmlarge}
\bibliography{BibTex/storage,BibTex/computing}

\end{document}